\title{The Complexity Classes of Hamming Distance Recoverable Robust Problems}
\titlerunning{The Complexity Classes of Hamming Distance Recoverable Robust Problems}
\author{Christoph Grüne}{Department of Computer Science, RWTH Aachen University, Germany}{gruene@algo.rwth-aachen.de}{https://orcid.org/0000-0002-7789-8870}{}
\authorrunning{C. Grüne}
\keywords{Computational Complexity, Polynomial Hierarchy, Robust Optimization, Recoverable Robustness, Optimization under Uncertainty
}
\begin{document}

% general
\newcommand{\NP}{\textsl{NP}}
\newcommand{\PSPACE}{\textsl{PSPACE}}
\newcommand{\PTIME}{\textsl{PTIME}}

% generic problems and their online variant
\newcommand{\CombinatorialProblem}{\ensuremath{P_{C}}}
\newcommand{\CombinatorialGraphProblem}{\ensuremath{P_{CG}}}
\newcommand{\hdrr}[1][P_A]{\ensuremath{#1^{H\hspace{-0.5mm}D\hspace{-0.5mm}R\hspace{-0.5mm}R}}}
\newcommand{\mhdrr}[1][P_A]{\ensuremath{#1^{m\hspace{-0.05mm}\text{-}\hspace{-0.25mm}H\hspace{-0.5mm}D\hspace{-0.5mm}R\hspace{-0.5mm}R}}}
\newcommand{\mplushdrr}[1][P_A]{\ensuremath{#1^{(m\hspace{-0.25mm}+\hspace{-0.25mm}1)\hspace{-0.05mm}\text{-}\hspace{-0.25mm}H\hspace{-0.5mm}D\hspace{-0.5mm}R\hspace{-0.5mm}R}}}
\newcommand{\mminushdrr}[1][P_A]{\ensuremath{#1^{(m\hspace{-0.25mm}-\hspace{-0.25mm}1)\hspace{-0.05mm}\text{-}\hspace{-0.25mm}H\hspace{-0.5mm}D\hspace{-0.5mm}R\hspace{-0.5mm}R}}}

% Gadgets
\newcommand{\LiteralGadget}{\ensuremath{G_{\ell}}}
\newcommand{\VariableGadget}{\ensuremath{G_{var}}}
\newcommand{\ClauseGadget}{\ensuremath{G_{c}}}
\newcommand{\ExtensionGadget}{\ensuremath{G_{ext}}}
\newcommand{\IdGadget}{\ensuremath{G_{id}}}
\newcommand{\FakeClauseGadget}{\ensuremath{G_{\hspace{-0.5mm}f\hspace{-0.4mm}c}}}
\newcommand{\DependencyRevealGadget}{\ensuremath{G_{dr}}}

% TO-DOS
\newcommand{\todo}[1]{\textcolor{red}{\textbf{TODO: #1}}\\}
\newcommand{\todolater}[1]{\textcolor{orange}{\textbf{TODO (later): #1}}\\}
\newcommand{\todolayout}[1]{\textcolor{violet}{\textbf{TODO (layout): #1}}\\}
\newcommand{\cit}{\textsuperscript{[citation needed]}}

% 3-sat variants
\newcommand{\sat}{\textsc{3Sat\-is\-fi\-a\-bil\-i\-ty}}
\newcommand{\sats}{\textsc{3Sat}}
\newcommand{\qsat}{\textsc{TQBF}}
\newcommand{\qsatgame}{\textsc{TQBF Game}}

%optimiaztions problems in P
\newcommand{\ustcon}{\textsc{UstCon}}
\newcommand{\ustconlong}{\textsc{Undirected \allowbreak s-t-Connectivity}}

% optimization problems NP-complete
\newcommand{\vc}{\textsc{Vertex \allowbreak Cover}}
\newcommand{\ds}{\textsc{Dominating \allowbreak Set}}
\newcommand{\fas}{\textsc{Feedback \allowbreak Arc \allowbreak Set}}
\newcommand{\fvs}{\textsc{Feedback \allowbreak Vertex \allowbreak Set}}
\newcommand{\hs}{\textsc{Hitting \allowbreak Set}}

\newcommand{\is}{\textsc{Independent \allowbreak Set}}
\newcommand{\cl}{\textsc{Clique}}

\newcommand{\sss}{\textsc{Subset \allowbreak Sum}}
\newcommand{\ks}{\textsc{Knapsack}}
\newcommand{\pa}{\textsc{Partition}}
\newcommand{\sms}{\textsc{Single \allowbreak Machine \allowbreak Scheduling}}
\newcommand{\tms}{\textsc{Two \allowbreak Machine \allowbreak Scheduling}}

\newcommand{\hp}{\textsc{Ha\-mil\-to\-ni\-an \allowbreak Path}}
\newcommand{\dhp}{\textsc{Directed \allowbreak Ha\-mil\-to\-ni\-an \allowbreak Path}}
\newcommand{\uhp}{\textsc{Undirected \allowbreak Ha\-mil\-to\-ni\-an \allowbreak Path}}
\newcommand{\udhp}{\textsc{(Un)directed \allowbreak Ha\-mil\-to\-ni\-an \allowbreak Path}}
\newcommand{\hc}{\textsc{Ha\-mil\-to\-ni\-an \allowbreak Cycle}}
\newcommand{\dhc}{\textsc{Directed \allowbreak Ha\-mil\-to\-ni\-an \allowbreak Cycle}}
\newcommand{\uhc}{\textsc{Undirected \allowbreak Ha\-mil\-to\-ni\-an \allowbreak Cycle}}
\newcommand{\udhc}{\textsc{(Un)directed \allowbreak Ha\-mil\-to\-ni\-an \allowbreak Cycle}}
\newcommand{\tsp}{\textsc{Traveling \allowbreak Salesman}}

\newcommand{\tdm}{\textsc{3Dimensional \allowbreak Matching}}
\newcommand{\xtc}{\textsc{Exact \allowbreak Cover \allowbreak By \allowbreak 3-Sets}}

\newcommand{\stt}{\textsc{Steiner \allowbreak Tree}}

\newcommand{\tddp}{\textsc{2Dis\-joint \allowbreak Directed \allowbreak Path}}
\newcommand{\kddp}{k\textsc{Dis\-joint \allowbreak Directed \allowbreak Path}}

\newcommand{\col}{\textsc{Coloring}}
\newcommand{\tcol}{\textsc{3Coloring}}
\newcommand{\kcol}{k\textsc{Coloring}}
\newcommand{\cn}{\textsc{Chromatic \allowbreak Number}}
\newcommand{\cc}{\textsc{Clique \allowbreak Cover}}

\maketitle

\begin{abstract}
	The well-known complexity class \NP{} contains combinatorial problems,
whose optimization counterparts are important for many practical settings.
In reality, however, uncertainty in the input data is a usual phenomenon, which is typically not covered in \NP{} problems.

One concept to model the uncertainty in the input data, is \textit{recoverable robustness}.
The instance of the recoverable robust version of a combinatorial problem $P$ is split into a base scenario $\sigma_0$ and an uncertainty scenario set $\textsf{S}$.
The task is to calculate a solution $\texttt{s}_0$ for the base scenario $\sigma_0$ and solutions $\texttt{s}$ for all uncertainty scenarios $\sigma \in \textsf{S}$ such that $\texttt{s}_0$ and $\texttt{s}$ are not too far away from each other according to a distance measure, so $\texttt{s}_0$ can be easily adapted to $\texttt{s}$.

We analyze the complexity of \textit{Hamming distance recoverable robust} versions of problems in \NP{} for different scenario encodings.
The complexity is primarily situated in the lower levels of the polynomial hierarchy.
The main contribution of the paper is a gadget reduction framework that reveals that the recoverable robust version of problems in a large class of combinatorial problems is $\Sigma^p_{3}$-complete.
We show that this class includes over 20 problems such as Vertex Cover, Independent Set, Hamiltonian Path or Subset Sum.
We expect that the number of problems can be easily extended with the help of the gadget reduction framework.
Additionally, we expand the results to $\Sigma^p_{2m+1}$-completeness for multi-stage recoverable robust problems with $m \in \mathbb{N}$ stages.

\end{abstract}

\newpage

\section{Introduction}
The concept of \textit{robustness} in the field of optimization problems comprises a collection of models that consider uncertainties in the input.
These uncertainties may for example arise from faulty or inaccurate sensors or from a lack of knowledge.
Robustness measures can model these types of uncertainty that occur in practical optimization instances into an \textit{uncertainty set}.
The goal is to find solutions that are stable over all possible \emph{scenarios} in the uncertainty set. That is, these solutions remain good but not necessarily optimal regardless what the uncertainties turn out to be in reality.

One specific robustness concept is \textit{recoverable robustness}, which is a recently introduced concept \cite{DBLP:series/lncs/LiebchenLMS09} by Liebchen et al.
The input of a recoverable robust version of a problem $P$ is a \textit{base scenario} $\sigma_0$, which is an instance of problem $P$, as well as a set of \textit{uncertainty scenarios} $\textsf{S}$, whose members are again instances of $P$.
The set of uncertainty scenarios $\textsf{S}$ is the uncertainty set of the problem.
We are asked to compute a base solution $\texttt{s}_0$ to the base scenario $\sigma_0$ and to compute recovery solutions $\texttt{s}$ to all members of the uncertainty scenarios $\sigma \in \textsf{S}$ such that $\texttt{s}_0$ and $\texttt{s}$ are not too far away from each other according to a distance measure.
The solution on the base scenario does not directly include the uncertainties but needs to include the potential to adapt the base solution $\texttt{s}_0$ to solutions $\texttt{s}$ within the given distance between the solutions.
Thus, the base solution $\texttt{s}_0$ may be restricted by these possibly harmful scenarios.

From a worst-case-analysis point of view, we assume that the uncertainty scenarios are chosen by an adversary.
The algorithm computes a base solution with the potential to adapt to all scenarios.
Then, the adversary chooses the most harmful scenario based on the base solution.
Finally, the algorithm computes a recovery solution to adapt to the chosen scenario.

A more general concept is \textit{multi-stage recoverable robustness}, in which not only one set of uncertainty scenarios is provided but $m$ sets of scenarios.
This concept was introduced by Cicerone et al. \cite{DBLP:journals/isci/CiceroneSSS12}.
The \textit{$m$-stage recoverable robust problem} asks to solve the recoverable robust problem on the individual sets of scenarios inductively.
That is, a base solution $\texttt{s}_0$ has to be found such that one can recover from $\texttt{s}_0$ for the first set of scenarios $\textsf{S}_1$ to a solution $\texttt{s}_1$ such that one can recover from $\texttt{s}_1$ for the second set of scenarios $\textsf{S}_2$ and so forth such that one can recover from $\texttt{s}_{m-1}$ for the $m$-th set of scenarios $\textsf{S}_m$ to a solution $\texttt{s}_m$.

\paragraph*{Related Work}
Recoverable robustness is used in many practical settings such as different optimization areas in air transport \cite{DBLP:journals/ors/DijkSP19,DBLP:journals/transci/FroylandMW14,DBLP:journals/cor/MaherDS14} or in railway optimization, for which a survey can be found in \cite{DBLP:journals/eor/LusbyLB18}.
Considered problems in railway optimization are to be found on all stages of railway operation, such as network design \cite{TONISSEN2018360,CADARSO20121288}, rolling stock planning \cite{DBLP:conf/atmos/CacchianiCGKM08,DBLP:journals/transci/CacchianiCGKMT12}, shunting \cite{DBLP:journals/aor/CiceroneDSFN09} and timetabling \cite{DBLP:series/lncs/CiceroneDSFNSS09,DBLP:journals/jco/CiceroneDSFN09,DBLP:journals/tc/DAngeloSNP11,DBLP:conf/iwoca/DAngeloSN09,DBLP:conf/atmos/GoerigkHMSS13,DBLP:series/lncs/Busing09}.
Our focus lies on the complexity of recoverable robust problems.
In parallel to this paper, Goerigk et al. \cite{DBLP:journals/corr/abs-2209-01011} analyzed the Hamming distance recoverable robust independent set, TSP and vertex cover.
Hamming distance means that at most $k$ elements may be added to or deleted from the base solution in total to obtain a recovery solution.
They showed the $\Sigma^p_3$-hardness of the variant with discrete budgeted uncertainty over the costs of the elements.
To the best of the author's knowledge, this is the only contribution investigating the complexity within the polynomial hierarchy beyond \NP-hardness.
All other contributions study primarily algorithms and analyze the problems only on their \NP-hardness or their approximability, where different distance measures between the solutions are of interest.
The concept of \textit{$k$-dist recoverable robustness}, allowing at most $k$ new elements in recovery solutions, was introduced in \cite{DBLP:journals/networks/Busing12} but was also used in \cite{DBLP:journals/jco/HradovichKZ17}.
Besides the $k$-dist measures, there are also measures which limit the number of deleted elements \cite{DBLP:conf/inoc/BusingKK11} or exchanged \cite{DBLP:journals/ol/ChasseinG16} elements.
Furthermore, combinations of these distance measures are analyzed as well in the literature \cite{DBLP:journals/ol/BusingKK11}.
Further usages of Hamming distance recoverable robustness can be found in \cite{DBLP:journals/networks/DouradoMPRPA15}.
Among the studied recoverable robust problems is Knapsack, which is \NP-hard for different distance measures between the solutions \cite{DBLP:journals/ejco/BusingGKK19,DBLP:conf/inoc/BusingKK11,DBLP:journals/ol/BusingKK11}.
Recoverable robust versions of problems that are in \PTIME{} are shown to be \NP-complete as well such as Shortest Path, which is \NP-hard for $k$-dist \cite{DBLP:journals/networks/Busing12}, or Matching \cite{DBLP:journals/networks/DouradoMPRPA15}.
Furthermore, the recoverable robust Single Machine Scheduling problem is 2-approximable \cite{DBLP:journals/dam/BoldG22} and the recoverable robust TSP is 4-approximable \cite{DBLP:journals/ol/ChasseinG16}.
Moreover, a recoverable robust version of Spanning Tree \cite{DBLP:journals/jco/HradovichKZ17} is shown to be in \PTIME.

For the complexity analysis, we introduce a gadget reduction framework.
Different gadget reduction concepts were studied for example by Agrawal et al.~\cite{DBLP:journals/cc/AgrawalAIPR01}, who defined gadget reductions under $\sf AC^0$ for NP-completeness mapping one bit of the input of one problem to a bounded number of bits in the other problem.
A further form of gadget reduction was introduced by Trevisan et al.~\cite{DBLP:conf/focs/TrevisanSSW96}.
They formalize gadgets with constraint families to compute optimal gadgets via linear programming for gap-preserving reductions.

\paragraph*{Contribution}
We study Hamming distance recoverable robust problems with different forms of elemental uncertainty.
That is, it is uncertain whether an element (e.g. a vertex or object) is included in a scenario or not.
This form of uncertainty is different to cost uncertainty, where all elements are present in all scenarios but the costs of the elements are uncertain.
We show that recoverable robust versions of typical \NP-complete combinatorial problems with $xor$-dependencies or $\Gamma$-set scenarios are $\Sigma^p_{3}$-complete and the corresponding multi-stage recoverable robust versions are $\Sigma^p_{2m+1}$-complete, where $m \in \mathbb{N}$ is the number of stages.

We do this by defining a gadget reduction framework, which uses a specific definition of combinatorial problems.
These problems are defined over \emph{combinatorial elements}, which are defined over a \emph{universe} $U$, and \emph{nested relations} $R(U)$ over that universe.
We show that this framework is able to ``upgrade'' many already exsiting NP-hardness reductions by applying it to over 20 well-known problems.
Thus, we expect that the results are easily extendable beyond those problems.

In order to explain the idea of these \emph{universe gadget reductions}, consider a problem $A$ for which we want to reduce to another problem $B$.
A gadget reduction creates a gadget for each combinatorial element from $U_A$ and $R_A(U_A)$ to simulate the behavior of this element in $B$.
This gadget consists of universe elements from $U_B$ or relation elements from $R_B(U_B)$ and is disjoint from all other gadgets.
That is, no element from $B$ is in two gadgets at the same time.
Additionally, we demand that if we remove a combinatorial element in $A$, we are able to remove the corresponding gadget in $B$ without invalidating the correctness of the reduction.
This form of reduction preserves the scenarios structurally independent of the underlying encoding.
Thus, this gadget reduction framework allows for reductions between Hamming distance recoverable robust problems.
Indeed these properties are already achieved by typical polynomial reductions (or slight modifications) of it.

\paragraph*{Paper Summary}
In \Cref{sec:preliminaries}, we define necessary complexity theoretical concepts.
In \Cref{sec:combinatorialProblemFramework}, we build a framework for combinatorial decision problems to define Hamming distance recoverable robust problems.
Then, we consider typical problems, which are in \NP{} or \NP-complete, and analyze their complexity for polynomially computable scenario encodings in \Cref{sec:sigma1}.
\Cref{sec:sigma3} consists of the complexity analysis of succinctly encoded scenarios as well as multi-stage recoverable robustness.
At last in \Cref{sec:class}, we establish a whole class of Hamming distance recoverable robust problems by using our combinatorial decision problem framework and by introducing universe gadget reductions.
With \Cref{sec:conclusion}, we conclude the paper.
\section{Preliminaries}\label{sec:preliminaries}
We define a \emph{language} $L$ as a subset of $\{0,1\}^*$.
The class $\Sigma^k_p$ contains all languages $L$ such that there is a Turing machine $V$ (the ``verifier'') and polynomial $p$ such that for all $x \in \{0,1\}^*$, it holds $x \in L$ iff
$$
	\exists y_1 \in \{0,1\}^{p(|x|)} \forall y_2 \in \{0,1\}^{p(|x|)} \ldots  Q y_k \in \{0,1\}^{p(|x|)} \ V(x, y_1, y_2, \ldots, y_k) = 1,
$$
where $Q = \exists$, if $k$ odd, and $Q=\forall$, else.
This family of classes is part of the \emph{polynomial-time hierarchy} defined by Stockmeyer \cite{DBLP:journals/tcs/Stockmeyer76}.
A \emph{many-one reduction} (or \emph{Karp reduction}) from some language $L_1$ to some language $L_2$ is defined as a function $f: \{0,1\}^* \rightarrow \{0,1\}^*$ such that $x \in L_1$ iff $f(x) \in L_2$ for all $x \in \{0,1\}^*$.
A language $L_1$ is $\Sigma^k_p$-hard if all languages $L_2 \in \Sigma^k_p$ can be reduced to $L_1$ via a polynomial time many-one reduction.
A problem is $\Sigma^k_p$-complete if it is contained in $\Sigma^k_p$ and $\Sigma^k_p$-hard.
The canonical complete problems for $\Sigma^k_p$ are $\exists_1\forall_2\ldots\exists_k\textsc{CNF-Sat}$,  for odd $k$, and $\exists_1\forall_2\ldots\forall_k\textsc{DNF-Sat}$, for even $k$.

\section{Combinatorial Problem Framework}\label{sec:combinatorialProblemFramework}
In theoretical computer science, problems are defined as languages, which consist of all YES-instances of the problem.
The instances are encoded as words from $\{0,1\}^*$.
For combinatorial problems, we may assume that an instance contains a universe $U = \{1, \ldots, n\}$, which consists of the encoding atoms of the instance.
Furthermore, an instance includes (nested) relations between these atoms.
To encode the relations, the atoms are used together with a delimiter symbol.

One example of such a problem is the problem \textsc{Undirected \allowbreak s-t-Connec\-tiv\-i\-ty} (\ustcon).
Its input is an undirected graph $G = (V,E)$ together with two vertices $s, t \in V$.
The corresponding instance is then encoded by the vertices $V=U$ as universe and three relations $s,t \subseteq V$ and $E \subseteq V\times V$.
The instance is a YES-instance iff there is path from $s$ to $t$ in $G$.
Another example is the problem \textsc{Vertex Cover}.
Again, the vertices $V=U$ are the universe and $E \subseteq V\times V$ is a relation.
The instance is a YES-instance iff there is a small vertex cover in $G$.

In mathematical optimization, a problem is often defined over its feasible solutions $F$ together with a cost function $c$.
The goal is then to find a solution that achieves the minimum (resp. maximum) costs of all feasible solutions.
Oftentimes, an additional ground set of combinatorial elements $X$ is given.
For simplicity, the feasible solutions are then combinations of that ground set, that is $F(X) \subseteq 2^X$.
We apply this to \ustcon{} by interpreting the edges as the ground set $X = E$ and all paths $F(X) \subseteq 2^E$ from $s$ to $t$ as the feasible solutions.
For \textsc{Vertex Cover}, we define the vertices as ground set $X = V$ and the feasible solutions $F(X) \subseteq 2^V$ are all small vertex covers in the graph.
For simplicity, we ignore cost or weight functions and ask for the mere existence of a solution (here: a path, a small vertex cover).

While this is not a general definition, many typical combinatorial problems can be defined this way such as \is{} (an independent set is a subset of vertices), \hp{} (a Hamiltonian path is a subset of edges), \sss{} (a solution for subset sum is a subset of numbers).

We distinguish the natural encoding universe $U$ from the solution ground set $X$ over which the solutions are defined.
With that, we reach a larger class of problems.
In \vc, the encoding universe $U =V$ is the same as the solution ground set $X=V$, because a vertex cover is a set of vertices and a graph is a set of vertices which are in relation via edges.
In contrast, the instances of \ustcon{} are still graphs while the solutions are subsets of edges.
Thus for \ustcon{}, the solution ground set and the universe do not coincide.

We begin with the definition of nested relations in order to define the instances of combinatorial problems.
With these nested relations, we are able to define all possible associations of universe elements as well as between universe elements and relational elements.
Thus in a graph $G = (V,E)$, we are not only able to for example encode edges $E \subseteq V \times V$ but also an incidence relation $I \subseteq V \times E$ or the neighborhood relation $N \subseteq V^{\leq|V|}$.

\begin{definition}[Nested Relations]\label{intro:def:relationSet}
    Let $U$ be a set.
    Then $\mathcal{R}(U)$ is the set of nested relations over $U$ defined by the smallest set fulfilling:
    \begin{align}
        U &\in \mathcal{R}(U) & \\
        A &\in \mathcal{R}(U), & \text{if} \ A \subseteq B \ \text{for some B} \ \in \mathcal{R}(U)\\
        \bigtimes_i A_i &\in \mathcal{R}(U), & \text{if for all} \ i, \ A_i \in \mathcal{R}(U)
    \end{align}
    We denote the set of relation elements that include $r \in A \in \mathcal{R}(U)$ by $R(r)$.
\end{definition}

With access to all nested relations over the universe, we are able to define not only a variety of problems but we are also able to meaningfully define gadget reductions between problems.
The solution ground set $X=R$ is then a subset of relational elements of one (nested) relation $R \in \mathcal{R}(U)$ over the gadget reduction universe $U$.
Thus the solutions are of the form $F(R) \subseteq 2^{R}$.

\begin{definition}[Combinatorial Decision Problem]\label{intro:def:combinatorialProblem}
	A combinatorial decision problem $P_A$ is a set of tuples $(U_A, R_A, F_A(R_A))$ with the set of universe elements $U_A$, relations $R_A \in \mathcal{R}(U_A)^r, r \in \mathbb{N}$, and the set of feasible solutions $F_A(R_A) \subseteq 2^{R^i_A}$ for some $1 \leq i \leq r$.
    We assume that $R^1_A =U_A$.
	We call $R_A$ the instance of the problem and $R_A$ is a YES-instance if and only if $F_A(R_A) \neq \emptyset$.
	We use an index set $I_A$ to easily address the members of the tuple $R_A$.
\end{definition}

For simplicity, we may omit the problem in the index of $U_A$, $R_A$ and $F_A(R_A)$ as well as the dependence of the feasible solutions $F(R)$ on the relations $R$ and write $F$.
For a better understanding, we again use \ustcon{} as an example.

\begin{example}[Undirected $s$-$t$-Connectivity Problem]
  The \textit{input} of \ustcon{} is a graph $G = (V, E)$ and two vertices $s, t \in V$.
  A \textit{feasible solution} is a path from $s$ to $t$ in $G$.
  This translates to the following tuple $(U, R, F)$.
  The universe $U$ consists of the vertices $V$.
  The relations in $R$ are the edges $E$ and the vertices $s$ and $t$, that is, $R = (V, E, s, t)$.
  The feasible solutions are all $s$-$t$-paths $p \in F \subseteq 2^E$ in $G$ defined as subsets of edges.
\end{example}

Observe that for combinatorial problems, the encoding of the input and the solutions depends only on the universe of elements.
Thus, the universe elements in $U$ build the atoms of the problem.
The (nested) relations $R$ model the relations between these atoms.
The feasible solutions $F$ model all possible combinations of solution elements that are feasible.

\subsection{Scenarios for Robust Problems}
Before we are able to define recoverable robust problems, we need to define scenarios.
Scenarios are a central concept in robust optimization, which model the uncertainty.
A Hamming distance recoverable robust problem $\hdrr[P_A]$ is based on a combinatorial problem $P_A$.
We then define a scenario as follows.

\begin{definition}[Scenarios]
    A scenario of the Hamming distance recoverable robust problem \hdrr{} is a problem instance $(U_A, R_A, F_A(R_A))$ of the base problem $P_A$.
\end{definition}

\paragraph*{Encoding of Scenarios}
For scenarios, we use explicit encodings, implicit encodings or succinct encodings.
We consider elemental uncertainty, for which it is uncertain whether a combinatorial element is part of a scenario or not.
Thus, all of these encodings are based on combinatorial elements of an instance, which include the universe and all relation elements.
This is different to uncertainty over the costs of elements, where the underlying combinatorial elements remain the same for all scenarios.
If a combinatorial element is not part of a scenario, then all relation elements that include this combinatorial element are discarded as well in the scenario.
For example, if a vertex $v$ in a graph problem is discarded, then all edges incident to $v$ are discarded, too.
We denote this removal of combinatorial elements with $U \setminus \{r\}$ and $R \setminus R(r)$, where the removal of $r$ removes all relation elements $R(r)$ that contain $r$.
We call the elements that are part of the current scenario the \textit{active} elements, otherwise we call the elements \textit{inactive}.

First, we will use explicit encodings by providing the complete instance encoding over the base problem $P_A$.
Additionally, we use implicit encodings by providing a set of all elements that are different from base scenario $\sigma_0$.
Furthermore, we address succinct encodings of scenarios as well.
These encodings usually encode an exponential number of scenarios in polynomial space.
The popular concept of discrete budgeted uncertainty, also known as $\Gamma$-scenarios, \cite{bertsimas2004robust} falls into this last category as well as later defined $xor$-dependencies, which use logical operators between the elements to encode which element is active, i.e. part of a scenario.

\subsection{Hamming Distance Recoverable Robust Problems}

Now, we define Hamming distance recoverable robust problems.
For this, we need a definition of the Hamming distance over a set.

\begin{definition}[Hamming Distance of Sets]
    Let $A, B$ be two sets. Then, we define the Hamming distance $H(A, B)$ of set $A$ and $B$ to be
    $$
        H(A, B) := |A \vartriangle B | = |\{x \mid \text{either } x \in A \text{ or } x \in B \}|
    $$
\end{definition}

Intuitively, a Hamming distance recoverable robust problem \hdrr{} is based on a nominal combinatorial decision problem $P_A$, e.g. \ustcon{}.
We distinguish the \textit{base scenario} from \textit{uncertainty scenarios}.
The base scenario $\sigma_0$ is the instance on which the first solution $\texttt{s}_0$ has to be computed.
The uncertainty scenarios $\sigma \in \textsf{S}$ are the scenarios for which the solution $\texttt{s}$, that has to be adapted from $\texttt{s}_0$, have to be computed.
All scenarios of a problem may share universe elements or relation elements.
In conclusion, we not only have to find a solution for one instance, but for one base scenario $\sigma_0$ and for all uncertainty scenarios in $\textsf{S}$.
That is, we can recover from every possible scenario with a new solution to the problem.
The solutions to the uncertainty scenarios, nonetheless, may have a Hamming distance of at most $\kappa$ to the solution of the base scenario.
We always define the Hamming distance over the solution ground set $X$ between the solutions from $F(X) \subseteq 2^X$.
Formally, we obtain the following definition.

\begin{definition}[Hamming Distance Recoverable Robust Problem]
    A Hamming distance recoverable robust problem \hdrr{} is a combinatorial problem based on a combinatorial problem $P_A$. \hdrr{} is defined as a set of tuples $(U, R, F(R))$ with
    \begin{description}
        \itemsep0em
        \item $U = U_0 \cup \bigcup_{\sigma \in \textsf{S}} U_\sigma$ is the universe. The universe is the union over all universe elements that occur in the scenarios.
        \item $R = (R_0, (R_\sigma)_{\sigma\in\textsf{S}}) = ((U_0, R^2_0, \ldots, R^r_0), (U_\sigma, R^2_\sigma, \ldots, R^r_\sigma)_{\sigma \in \textsf{S}})$ are the relations. The relations are separate for each scenario.
        \item $F(R) = \{(\texttt{s}_0, (\texttt{s}_\sigma)_{\sigma \in \textsf{S}}) \in  F_0(R_0) \times (F_\sigma(R_\sigma))_{\sigma\in\textsf{S}} \mid H(\texttt{s}_0, \texttt{s}_\sigma) \leq \kappa \ \text{for all} \ \sigma \in \textsf{S}\})$ are the feasible solutions.
    The Hamming distance $H(\texttt{s}, \texttt{s}')$ is defined over the elements in the solutions $\texttt{s}, \texttt{s}'$.
    \end{description}
\end{definition}

The feasible solutions are not subsets of some relation $R$ but consist of tuples including the solution for each scenario in $F$, which also adhere to the Hamming distance.
In general, we assume that the bound on the Hamming distance $\kappa$ is part of the input.

Observe that the specifications are no restriction because every decision problem can be formulated as one base scenario and no uncertainty scenarios, that is $\textsf{S} = \emptyset$.
On the other hand, the base problem $P_A$ is a restriction of \hdrr{} by setting $\textsf{S} = \emptyset$.
Furthermore, the base scenario is defined by $\sigma_0 = (U_0,R_0,F_0)$ and all uncertainty scenarios $\sigma \in \textsf{S}$ are defined by $\sigma = (U_\sigma,R_\sigma,F_\sigma)$.
Again, we provide an example for a better understanding of the definition and again, we use the problem \ustcon{}.

\begin{example}[Hamming Distance Recoverable Robust \ustcon{}]
	Let $G = (V, E)$ be a graph, $s, t \in V$ and $\kappa \in \mathbb N$.
	\hdrr[\ustcon] is a Hamming distance recoverable robust problem with feasible solutions $F\subseteq 2^E$.
	Thus, the Hamming distance is defined over the edges.
	The start and end vertices $s$ and $t$ remain the same for all scenarios.
	The input $R$ contains the following: Each scenario $\sigma \in \textsf{S}$ encodes the set of active vertices $V_\sigma$ and edges $E_\sigma$.
	The feasible solutions $F$ consists of all $s$-$t$-paths $(p_0, p_{\sigma \in \textsf{S}}) \in 2^{E_{\sigma_0}} \times 2^{E_{\sigma \in \textsf{S}}}$ such that $H(p_0, p_\sigma) \leq \kappa, \text{ for all } \sigma \in \textsf{S}$.
	In other words, the question is
	$$
		\exists p_0 \in 2^{E_{\sigma_0}}: \ \forall {\sigma \in \textsf{S}}: \ \exists p_{\sigma} \in 2^{E_{\sigma}}: \ p_0 \in F_0, \ p_\sigma \in F_\sigma \ \text{and} \ H(p, p_\sigma) \leq \kappa.
	$$
\end{example}

\subsection{Combinatorial Problems with Partitions as Solutions}
As already stated in the introduction of this section, \Cref{intro:def:combinatorialProblem} is not a general definition for combinatorial decision problems.
For example coloring (which asks for an independent set cover) or clique cover as well as many other problems are not covered by this definition, because these have partitions as solutions and not subsets of some relation.
In order to meaningfully integrate these kind of problems into this framework, we need to adapt the definition of combinatorial decision problems as well as the definition for the Hamming distance between solutions because the solutions are not sets but partitions.

\begin{definition}[Combinatorial Decision Problem with Partition Solutions]\label{intro:def:combinatorialProblemPartition}
	A combinatorial decision problem with partition solutions $P_A$ is a set of tuples $(U_A, R_A, F_A(R_A))$ with the set of universe elements $U_A$, relations $R_A \in \mathcal{R}(U_A)^r, r \in \mathbb{N}$, and the set of feasible solutions $F_A(R_A) \subseteq (2^{R^i_A})^k$ which are $k$-partitions of $R^i_A$ for some $1 \leq i \leq r$.
	We assume that $R^1_A =U_A$.
	We call $R_A$ the instance of the problem and $R_A$ is a YES-instance if and only if $F_A(R_A) \neq \emptyset$.
	We use an index set $I_A$ to easily address the members of the tuple $R_A$.
\end{definition}

The only change in the definition in comparison to \Cref{intro:def:combinatorialProblem} is that the set of feasible solutions is defined as $F_A(R_A) \subseteq (2^{R_A^i})^k$ such that $F_A(R_A)$ consists of $k$-partitions.
We then define the Hamming distance between two $k$-partitions to be the sum of the Hamming distances of the sets of the two partitions.

\begin{definition}[Hamming Distance of Partitions]
	Let $A = (A_1,\ldots,A_k) \subseteq S^k$ and $B = (B_1, \ldots, B_k) \subseteq S^k$ be two $k$-partitions of the set $S$.
	Then, we define the Hamming distance $H_P(A, B)$ of partitions $A$ and $B$ to be
    	$$
        		H_P(A, B) := \sum^k_{i=1} H(A_i,B_i),
    	$$
	where $H(A_i,B_i)$ is the Hamming distance over the sets $A_i, B_i$.
\end{definition}
	
	Accordingly, we also define Hamming distance recoverable robust versions of combinatorial problems with partitions as solutions.

\begin{definition}[Hamming Distance Recoverable Robust Problem with partition solutions]\hfill\\
	A Hamming distance recoverable robust problem \hdrr{} is a combinatorial problem based on a combinatorial problem $P_A$.
	\hdrr{} is defined as a set of tuples $(U, R, F(R))$ with
    	\begin{description}
		\itemsep0em
		\item $U = U_0 \cup \bigcup_{\sigma \in \textsf{S}} U_\sigma$ is the universe. The universe is the union over all universe elements that occur in the scenarios.
		\item $R = (R_0, (R_\sigma)_{\sigma\in\textsf{S}}) = ((U_0, R^2_0, \ldots, R^r_0), (U_\sigma, R^2_\sigma, \ldots, R^r_\sigma)_{\sigma \in \textsf{S}})$ are the relations. The relations are separate for each scenario.
		\item $F(R) = \{(\texttt{s}_0, (\texttt{s}_\sigma)_{\sigma \in \textsf{S}}) \in  F_0(R_0) \times (F_\sigma(R_\sigma))_{\sigma\in\textsf{S}} \mid H_P(\texttt{s}_0, \texttt{s}_\sigma) \leq \kappa \ \text{for all} \ \sigma \in \textsf{S}\})$ are the feasible solutions.
		The Hamming distance $H_P(\texttt{s}, \texttt{s}')$ is defined over the solution partitions $\texttt{s}, \texttt{s}'$.
	\end{description}
\end{definition}

All of the following results on Hamming distance recoverable robust problems also hold for these kinds of problems.

\section{Recoverable Robust Problems with Polynomially Computable Scenario Encodings}\label{sec:sigma1}
We now consider problems with polynomially computable scenario encodings.
A scenario encoding is polynomially computable if the set of scenarios is transformable into a set of explicitly encoded instances in polynomial time.
(Consequently, the number of scenarios is bounded by a polynomial.)

\begin{lemma}\label{theorem:inNP}
Let $P_A \in \NP$. Then \hdrr{} $\in \NP$ if the set of scenarios $S$ of \hdrr{} is polynomially computable.
\end{lemma}
\begin{proof}
We construct a polynomial time verifier that receives the instance as well as an ($\exists$-quantified) string $y_1$ as input.
The string $y_1$ represents the solution $\texttt{s}_0$ to the base scenario and the solution to the uncertainty scenarios $(\texttt{s}_\sigma)_{\sigma \in \textsf{S}}$.
These are encoded as list of active combinatorial elements.
Because the number of scenarios is polynomially bounded in the length of the input and the solution to a scenario is a subset of the active elements, the string $y_1$ is also polynomially bounded in the input length.
Furthermore, the certificate is verifiable in polynomial time by the following algorithm.
First we compute the explicit encoding of all scenarios in polynomial time.
We then verify whether $\texttt{s}_0$ is a solution to $\sigma_0$ and whether $\texttt{s}_\sigma$ is a solution to $\sigma$ for all $\sigma \in \textsf{S}$.
This is doable by using the existing verifier for the base problem that exists because the problem is in \NP.
At last, we check $H(\texttt{s}_0, \texttt{s}_\sigma) \leq \kappa$ for all $\sigma \in \textsf{S}$.
\end{proof}

Besides general polynomially computable scenarios, we may consider the popular concept of $\Gamma$-scenarios.
These consist of all scenarios that deviate in at most $\Gamma$ many elements from the base instance corresponding to a set of activatable elements.
If $\Gamma$ is constant, we may use \Cref{theorem:inNP} to obtain the following result.

\begin{corollary}\label{corollary:Gamma}
Let $P_A \in \NP$.
Then \hdrr{} $\in \NP$ if the set of scenarios $\textsf{S}$ of \hdrr{} consists of all possible $\Gamma$-scenarios for a constant $\Gamma$.
\end{corollary}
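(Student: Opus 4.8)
The plan is to reduce the statement directly to \Cref{theorem:inNP} by verifying that its sole hypothesis — polynomial computability of the scenario set $\textsf{S}$ — is met whenever $\textsf{S}$ consists of all $\Gamma$-scenarios for a \emph{constant} $\Gamma$. Concretely, I would show that such a $\textsf{S}$ contains only polynomially many scenarios and that each of them can be written down explicitly in polynomial time; once this is established, \Cref{theorem:inNP} does the rest.

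First I would fix notation. Let $\sigma_0$ be the base scenario and let $D$ be the set of activatable elements, with $|D| = n$ bounded by the input length. By definition a $\Gamma$-scenario arises from $\sigma_0$ by changing the activity status of at most $\Gamma$ elements of $D$. Hence every $\Gamma$-scenario is determined by a subset $T \subseteq D$ with $|T| \le \Gamma$, and conversely each such $T$ yields exactly one scenario, which is obtained explicitly from $\sigma_0$ by adding or removing the combinatorial elements indicated by $T$ together with the induced relation elements $R(r)$, exactly as described in the preliminaries. In particular every individual $\Gamma$-scenario is an honest instance of $P_A$.

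The central step is a counting bound: the number of admissible subsets is
$$
    \sum_{i=0}^{\Gamma}\binom{n}{i} \le (\Gamma+1)\,n^{\Gamma} = O\bigl(n^{\Gamma}\bigr).
$$
Since $\Gamma$ is constant, this is polynomial in $n$ and therefore in the input length. Enumerating all these subsets and materializing the corresponding explicit instance for each is then also achievable in polynomial time. Consequently $\textsf{S}$ is polynomially computable, and applying \Cref{theorem:inNP} yields \hdrr{} $\in \NP$.

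I expect the counting bound to be the only real obstacle, and it is precisely where the constancy of $\Gamma$ is indispensable: if $\Gamma$ were permitted to grow with the input (say $\Gamma = \Theta(n)$), then $\sum_{i=0}^{\Gamma}\binom{n}{i}$ could be exponential, the explicit list of scenarios would no longer be of polynomial size, and the reduction to \Cref{theorem:inNP} would collapse. Everything else is routine: the observation that a single $\Gamma$-scenario is a genuine $P_A$-instance and can be produced from the activatable set $D$ in polynomial time.
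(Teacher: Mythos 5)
Your proposal is correct and takes essentially the same route as the paper: the paper obtains \Cref{corollary:Gamma} directly from \Cref{theorem:inNP} by observing that for constant $\Gamma$ the set of all $\Gamma$-scenarios is polynomially computable, which is exactly what your counting bound $\sum_{i=0}^{\Gamma}\binom{n}{i} \le (\Gamma+1)\,n^{\Gamma}$ makes explicit. The only difference is that you spell out the enumeration argument that the paper leaves implicit.
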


The following theorem follows from \Cref{theorem:inNP} and \Cref{theorem:inNP} by reusing the original reduction to $P_A$ and setting the scenario set $\textsf{S} = \emptyset$.

\begin{theorem}\label{lemma9}
    Let $P_A$ be an \NP-complete problem. Then, \hdrr{} is \NP-complete if the set of scenarios $\textsf{S}$ of \hdrr{} is polynomially computable.
\end{theorem}
\begin{proof}
    The reduction from $P_A$ is trivial because the scenarios can be set to $\textsf{S} = \emptyset$ showing the hardness of \hdrr{}.
    On the other hand, \Cref{theorem:inNP} proves the containment.
\end{proof}
	
\subsection{Reduction for \ustconlong{}}

\begin{theorem}
There is a deterministic logarithmic space computable reduction from \sat{} to \hdrr[\ustcon] with one base and one uncertainty scenario.
\end{theorem}
\begin{proof}
First of all, there is a reduction from \sat{} to \dhc{} presented by Arora and Barak \cite{DBLP:books/daglib/0023084} and a reduction from \dhc{} to \uhc{} presented by Karp \cite{DBLP:conf/coco/Karp72}.
These reductions are computable in logarithmic space.
We use these reductions to develop a reduction from \sat{} to \hdrr[\ustcon{}].

We can either define the scenarios over vertices or over edges.
This, however, is in this reduction realm equivalent, because we can easily introduce a vertex for every edge, such that for the deletion of such an vertex the former edge is deleted.
On the other hand, we can delete all incident edges of a vertex to exclude the vertex from a possible solution.
For the sake of simplicity, we use edge scenarios in the reduction.
Accordingly, the Hamming distance of the solution is based on the edges.

We now provide a reduction from \uhc{} to \hdrr[\ustcon{}].
Let $G = (V, E)$ be a graph of the \uhc{} instance.
We map the graph $G$ to a graph $G'$, a base scenario $\sigma_0$ and a uncertainty scenario $\sigma_1$, which together define the \hdrr[\ustcon{}] instance.
A simple example instance, which we use for explaining the construction, can be found in \Cref{fig:Sigma1:ExampleGraph}.

\begin{figure}[!ht]
    \centering
    \scalebox{1.382}{
				\begin{tikzpicture}[scale=1,
						node/.style = {shape=circle, draw, inner sep=0pt, minimum size=0.25cm},
						dashednode/.style = {shape=circle, draw=blue, dashed, inner sep=0pt, minimum size=0.25cm},
						textnode/.style = {shape=circle, draw, inner sep=0pt, minimum size=0.4cm},
						dashedtextnode/.style = {shape=circle, draw=blue, dashed, inner sep=0pt, minimum size=0.4cm},
						box/.style = {rectangle, fill=gray!20, rounded corners, fill opacity=1, inner sep=1pt}]
            \node[node] (v1) at (0,0) {}; \node at ($(v1) + (-0.33,0)$) {$1$};
            \node[node] (v2) at (1,0) {}; \node at ($(v2) + (0.33,0)$) {$2$};
            \node[node] (v3) at (1,1) {}; \node at ($(v3) + (0.33,0)$) {$3$};
            \node[node] (v4) at (0,1) {}; \node at ($(v4) + (-0.33,0)$) {$4$};
            
            \path [-, black, thick] (v1) edge[bend left=0] node[above] {} (v2);
            \path [-, black, thick] (v2) edge[bend left=0] node[above] {} (v4);
            \path [-, black, thick] (v4) edge[bend left=0] node[above] {} (v3);
            \path [-, black, thick] (v3) edge[bend left=0] node[above] {} (v1);
        \end{tikzpicture}
    }
    \caption{Example Instance $G$ for \uhc}
    \label{fig:Sigma1:ExampleGraph}
\end{figure}

First, all vertices $v \in V$ are duplicated $|V| + 3$ times to connect them to one path that includes $|V| + 2$ edges.
Let $v_i^a$ and $v_i^b$ the end vertices of a vertex path of vertex $v_i$.
In \Cref{fig:Sigma1:NodeDuplication}, the duplication procedure is depicted.
We call these \textit{vertex paths}.

\begin{figure}[!ht]
    \centering
    \begin{subfigure}[b]{0.48\textwidth}
        \centering
        \scalebox{1.382}{
				\begin{tikzpicture}[scale=1,
						node/.style = {shape=circle, draw, inner sep=0pt, minimum size=0.25cm},
						dashednode/.style = {shape=circle, draw=blue, dashed, inner sep=0pt, minimum size=0.25cm},
						textnode/.style = {shape=circle, draw, inner sep=0pt, minimum size=0.4cm},
						dashedtextnode/.style = {shape=circle, draw=blue, dashed, inner sep=0pt, minimum size=0.4cm},
						box/.style = {rectangle, fill=gray!20, rounded corners, fill opacity=1, inner sep=1pt}]
                    \node[node] (v1) at (0,0) {}; \node at ($(v1) + (-0.33,0)$) {$1$};
                    \node[node] (v2) at (1,0) {}; \node at ($(v2) + (0.33,0)$) {$2$};
                    \node[node] (v3) at (1,1) {}; \node at ($(v3) + (0.33,0)$) {$3$};
                    \node[node] (v4) at (0,1) {}; \node at ($(v4) + (-0.33,0)$) {$4$};
                \end{tikzpicture}
        }
        \subcaption{Example instance vertices without edges.}
    \end{subfigure}
    \hfill
    \begin{subfigure}[b]{0.48\textwidth}
        \centering
        \scalebox{1.382}{
    \begin{tikzpicture}[scale=0.33,
    		node/.style = {shape=circle, draw, inner sep=0pt, minimum size=0.2cm},
    		dashednode/.style = {shape=circle, draw=blue, dashed, inner sep=0pt, minimum size=0.25cm},
    		textnode/.style = {shape=circle, draw, inner sep=0pt, minimum size=0.4cm},
    		dashedtextnode/.style = {shape=circle, draw=blue, dashed, inner sep=0pt, minimum size=0.4cm},
    		box/.style = {rectangle, fill=gray!20, rounded corners, fill opacity=1, inner sep=1pt}]
        \node at (1,-3.5) {$1$};
        \node at (11,-3.5) {$2$};
        \node at (11,3.5) {$3$};
        \node at (1,3.5) {$4$};
        
        \node[node] (v11) at (0,-1.0) {};
        \node[node] (v12) at (0.5,-2.0) {};
        \node[node] (v13) at (1.5,-2.5) {};
        \node[node] (v14) at (2.5,-3.0) {};
        \node[node] (v15) at (3.5,-3.5) {};
        \node[node] (v16) at (5,-3.5) {};
        
        \path [-, black, thick] (v11) edge[bend left=0] node[above] {} (v12);
        \path [-, black, thick] (v12) edge[bend left=0] node[above] {} (v13);
        \path [-, black, thick] (v13) edge[bend left=0] node[above] {} (v14);
        \path [-, black, thick] (v14) edge[bend left=0] node[above] {} (v15);
        \path [-, black, thick] (v15) edge[bend left=0] node[above] {} (v16);
        
        \node[node] (v21) at (7,-3.5) {};
        \node[node] (v22) at (8.5,-3.5) {};
        \node[node] (v23) at (9.5,-3.0) {};
        \node[node] (v24) at (10.5,-2.5) {};
        \node[node] (v25) at (11.5,-2.0) {};
        \node[node] (v26) at (12,-1.0) {};
        
        \path [-, black, thick] (v21) edge[bend left=0] node[above] {} (v22);
        \path [-, black, thick] (v22) edge[bend left=0] node[above] {} (v23);
        \path [-, black, thick] (v23) edge[bend left=0] node[above] {} (v24);
        \path [-, black, thick] (v24) edge[bend left=0] node[above] {} (v25);
        \path [-, black, thick] (v25) edge[bend left=0] node[above] {} (v26);
            
        \node[node] (v31) at (7,3.5) {};
        \node[node] (v32) at (8.5,3.5) {};
        \node[node] (v33) at (9.5,3.0) {};
        \node[node] (v34) at (10.5,2.5) {};
        \node[node] (v35) at (11.5,2.0) {};
        \node[node] (v36) at (12.0,1.0) {};
        
        \path [-, black, thick] (v31) edge[bend left=0] node[above] {} (v32);
        \path [-, black, thick] (v32) edge[bend left=0] node[above] {} (v33);
        \path [-, black, thick] (v33) edge[bend left=0] node[above] {} (v34);
        \path [-, black, thick] (v34) edge[bend left=0] node[above] {} (v35);
        \path [-, black, thick] (v35) edge[bend left=0] node[above] {} (v36);
            
        \node[node] (v41) at (0,1.0) {};
        \node[node] (v42) at (0.5,2.0) {};
        \node[node] (v43) at (1.5,2.5) {};
        \node[node] (v44) at (2.5,3.0) {};
        \node[node] (v45) at (3.5,3.5) {};
        \node[node] (v46) at (5,3.5) {};
        
        \path [-, black, thick] (v41) edge[bend left=0] node[above] {} (v42);
        \path [-, black, thick] (v42) edge[bend left=0] node[above] {} (v43);
        \path [-, black, thick] (v43) edge[bend left=0] node[above] {} (v44);
        \path [-, black, thick] (v44) edge[bend left=0] node[above] {} (v45);
        \path [-, black, thick] (v45) edge[bend left=0] node[above] {} (v46);
    \end{tikzpicture}
}
        \subcaption{Multiplied vertices connected to vertex paths.}
    \end{subfigure}
    \caption{Duplication of Nodes}
    \label{fig:Sigma1:NodeDuplication}
\end{figure}

We now define the base scenario $\sigma_0$ and the uncertainty scenario $\sigma_1$.
For the base scenario $\sigma_0$, we design a simple to solve instance, which forces the solution to include all edges of the vertex paths.
For this, we connect the vertex paths to a simple cycle by introducing an edge connecting two vertex paths.
That is, we introduce edges
$$
    \left\{v^i_b, v^{(i+1 \ \text{mod} \ |V(G)|)}_a\right\} \ \text{for} \ 1 \leq i \leq |V(G)|.
$$
The simple to find solution is this cycle.
We further have to introduce two vertices $s$ and $t$.
For this, we choose one vertex path, delete the edge in the middle of the path and designate the incident vertices of the delete edge as $s$ and $t$.
The base scenario $\sigma_0$ can be found in \Cref{fig:Sigma1:BaseScenario}.

For the uncertainty scenario $\sigma_1$, we deactivate the edges between the vertex paths but not the vertex paths themselves.
We then set $\kappa = |V(G)|$.
This forces the solution to the uncertainty scenario to include the vertex paths, as only $|V(G)|$ edges can be altered while a vertex path has at least $|V(G)| + 1$ (including the one with $s$ and $t$).
Furthermore, we map and activate the actual edges of $G$.
For this, we quadruplicate the edges of $G$. The edge $\{v_i, v_j\}$ is quadruplicated to $\{v_i^a, v_j^a\}$, $\{v_i^a, v_j^b\}$, $\{v_i^b, v_j^a\}$ and $\{v_i^b, v_j^b\}$.
Thus, each vertex path can be ordered in both ways in a possible Hamiltonian cycle.
This is depicted in \Cref{fig:Sigma1:RecoveryScenario}.

\begin{figure}[!ht]
  \centering
  \begin{subfigure}[b]{0.48\textwidth}
    \centering
    \scalebox{1.382}{
    \begin{tikzpicture}[scale=0.33,
    		node/.style = {shape=circle, draw, inner sep=0pt, minimum size=0.2cm},
    		dashednode/.style = {shape=circle, draw=blue, dashed, inner sep=0pt, minimum size=0.25cm},
    		textnode/.style = {shape=circle, draw, inner sep=0pt, minimum size=0.4cm},
    		dashedtextnode/.style = {shape=circle, draw=blue, dashed, inner sep=0pt, minimum size=0.4cm},
    		box/.style = {rectangle, fill=gray!20, rounded corners, fill opacity=1, inner sep=1pt}]
        \node at (1,-3.5) {$1$};
        \node at (11,-3.5) {$2$};
        \node at (11,3.5) {$3$};
        \node at (1,3.5) {$4$};
        
    \node[node] (v11) at (0,-1.0) {};
    \node[node] (v12) at (0.5,-2.0) {};
    \node[node,shape=star,star points=6] (v13) at (1.5,-2.5) {};
    \node[node,shape=diamond] (v14) at (2.5,-3.0) {};
    \node[node] (v15) at (3.5,-3.5) {};
    \node[node] (v16) at (5,-3.5) {};
    
    \path [-, black, thick] (v11) edge[bend left=0] node[above] {} (v12);
    \path [-, black, thick] (v12) edge[bend left=0] node[above] {} (v13);
    \path [-, black, thick] (v14) edge[bend left=0] node[above] {} (v15);
    \path [-, black, thick] (v15) edge[bend left=0] node[above] {} (v16);
    
    \node[node] (v21) at (7,-3.5) {};
    \node[node] (v22) at (8.5,-3.5) {};
    \node[node] (v23) at (9.5,-3.0) {};
    \node[node] (v24) at (10.5,-2.5) {};
    \node[node] (v25) at (11.5,-2.0) {};
    \node[node] (v26) at (12,-1.0) {};
    
    \path [-, black, thick] (v21) edge[bend left=0] node[above] {} (v22);
    \path [-, black, thick] (v22) edge[bend left=0] node[above] {} (v23);
    \path [-, black, thick] (v23) edge[bend left=0] node[above] {} (v24);
    \path [-, black, thick] (v24) edge[bend left=0] node[above] {} (v25);
    \path [-, black, thick] (v25) edge[bend left=0] node[above] {} (v26);
        
    \node[node] (v31) at (7,3.5) {};
    \node[node] (v32) at (8.5,3.5) {};
    \node[node] (v33) at (9.5,3.0) {};
    \node[node] (v34) at (10.5,2.5) {};
    \node[node] (v35) at (11.5,2.0) {};
    \node[node] (v36) at (12.0,1.0) {};
    
    \path [-, black, thick] (v31) edge[bend left=0] node[above] {} (v32);
    \path [-, black, thick] (v32) edge[bend left=0] node[above] {} (v33);
    \path [-, black, thick] (v33) edge[bend left=0] node[above] {} (v34);
    \path [-, black, thick] (v34) edge[bend left=0] node[above] {} (v35);
    \path [-, black, thick] (v35) edge[bend left=0] node[above] {} (v36);
        
    \node[node] (v41) at (0,1.0) {};
    \node[node] (v42) at (0.5,2.0) {};
    \node[node] (v43) at (1.5,2.5) {};
    \node[node] (v44) at (2.5,3.0) {};
    \node[node] (v45) at (3.5,3.5) {};
    \node[node] (v46) at (5,3.5) {};
    
    \path [-, black, thick] (v41) edge[bend left=0] node[above] {} (v42);
    \path [-, black, thick] (v42) edge[bend left=0] node[above] {} (v43);
    \path [-, black, thick] (v43) edge[bend left=0] node[above] {} (v44);
    \path [-, black, thick] (v44) edge[bend left=0] node[above] {} (v45);
    \path [-, black, thick] (v45) edge[bend left=0] node[above] {} (v46);
    
    \path [-, black, thick, dashed] (v16) edge[bend left=0] node[above] {} (v21);
    \path [-, black, thick, dashed] (v26) edge[bend left=0] node[above] {} (v36);
    \path [-, black, thick, dashed] (v31) edge[bend left=0] node[above] {} (v46);
    \path [-, black, thick, dashed] (v41) edge[bend left=0] node[above] {} (v11);
\end{tikzpicture}
}
    \subcaption{The base scenario. Vertex $s$ is star shaped and vertex $t$ diamond shaped. The dashed edges connect the vertex paths.}
    \label{fig:Sigma1:BaseScenario}
  \end{subfigure}
  \hfill
  \begin{subfigure}[b]{0.48\textwidth}
    \centering
    \scalebox{1.382}{
    \begin{tikzpicture}[scale=0.33,
    		node/.style = {shape=circle, draw, inner sep=0pt, minimum size=0.2cm},
    		dashednode/.style = {shape=circle, draw=blue, dashed, inner sep=0pt, minimum size=0.25cm},
    		textnode/.style = {shape=circle, draw, inner sep=0pt, minimum size=0.4cm},
    		dashedtextnode/.style = {shape=circle, draw=blue, dashed, inner sep=0pt, minimum size=0.4cm},
    		box/.style = {rectangle, fill=gray!20, rounded corners, fill opacity=1, inner sep=1pt}]
        \node at (1,-3.5) {$1$};
        \node at (11,-3.5) {$2$};
        \node at (11,3.5) {$3$};
        \node at (1,3.5) {$4$};
        
    \node[node] (v11) at (0,-1.0) {};
    \node[node] (v12) at (0.5,-2.0) {};
    \node[node,shape=star,star points=6] (v13) at (1.5,-2.5) {};
    \node[node,shape=diamond] (v14) at (2.5,-3.0) {};
    \node[node] (v15) at (3.5,-3.5) {};
    \node[node] (v16) at (5,-3.5) {};
    
    \path [-, black, thick] (v11) edge[bend left=0] node[above] {} (v12);
    \path [-, black, thick] (v12) edge[bend left=0] node[above] {} (v13);
    \path [-, black, thick] (v14) edge[bend left=0] node[above] {} (v15);
    \path [-, black, thick] (v15) edge[bend left=0] node[above] {} (v16);
    
    \node[node] (v21) at (7,-3.5) {};
    \node[node] (v22) at (8.5,-3.5) {};
    \node[node] (v23) at (9.5,-3.0) {};
    \node[node] (v24) at (10.5,-2.5) {};
    \node[node] (v25) at (11.5,-2.0) {};
    \node[node] (v26) at (12,-1.0) {};
    
    \path [-, black, thick] (v21) edge[bend left=0] node[above] {} (v22);
    \path [-, black, thick] (v22) edge[bend left=0] node[above] {} (v23);
    \path [-, black, thick] (v23) edge[bend left=0] node[above] {} (v24);
    \path [-, black, thick] (v24) edge[bend left=0] node[above] {} (v25);
    \path [-, black, thick] (v25) edge[bend left=0] node[above] {} (v26);
        
    \node[node] (v31) at (7,3.5) {};
    \node[node] (v32) at (8.5,3.5) {};
    \node[node] (v33) at (9.5,3.0) {};
    \node[node] (v34) at (10.5,2.5) {};
    \node[node] (v35) at (11.5,2.0) {};
    \node[node] (v36) at (12.0,1.0) {};
    
    \path [-, black, thick] (v31) edge[bend left=0] node[above] {} (v32);
    \path [-, black, thick] (v32) edge[bend left=0] node[above] {} (v33);
    \path [-, black, thick] (v33) edge[bend left=0] node[above] {} (v34);
    \path [-, black, thick] (v34) edge[bend left=0] node[above] {} (v35);
    \path [-, black, thick] (v35) edge[bend left=0] node[above] {} (v36);
        
    \node[node] (v41) at (0,1.0) {};
    \node[node] (v42) at (0.5,2.0) {};
    \node[node] (v43) at (1.5,2.5) {};
    \node[node] (v44) at (2.5,3.0) {};
    \node[node] (v45) at (3.5,3.5) {};
    \node[node] (v46) at (5,3.5) {};
    
    \path [-, black, thick] (v41) edge[bend left=0] node[above] {} (v42);
    \path [-, black, thick] (v42) edge[bend left=0] node[above] {} (v43);
    \path [-, black, thick] (v43) edge[bend left=0] node[above] {} (v44);
    \path [-, black, thick] (v44) edge[bend left=0] node[above] {} (v45);
    \path [-, black, thick] (v45) edge[bend left=0] node[above] {} (v46);
    
    \path [-, black, thick, dashed] (v11) edge[bend left=0] node[above] {} (v21);
    \path [-, black, thick, dashed] (v11) edge[bend left=0] node[above] {} (v26);
    \path [-, black, thick, dashed] (v11) edge[bend left=0] node[above] {} (v31);
    \path [-, black, thick, dashed] (v11) edge[bend left=0] node[above] {} (v36);
    \path [-, black, thick, dashed] (v16) edge[bend left=0] node[above] {} (v21);
    \path [-, black, thick, dashed] (v16) edge[bend left=0] node[above] {} (v26);
    \path [-, black, thick, dashed] (v16) edge[bend left=0] node[above] {} (v31);
    \path [-, black, thick, dashed] (v16) edge[bend left=0] node[above] {} (v36);
    \path [-, black, thick, dashed] (v21) edge[bend left=0] node[above] {} (v41);
    \path [-, black, thick, dashed] (v21) edge[bend left=0] node[above] {} (v46);
    \path [-, black, thick, dashed] (v26) edge[bend left=0] node[above] {} (v41);
    \path [-, black, thick, dashed] (v26) edge[bend left=0] node[above] {} (v46);
    \path [-, black, thick, dashed] (v31) edge[bend left=0] node[above] {} (v41);
    \path [-, black, thick, dashed] (v31) edge[bend left=0] node[above] {} (v46);
    \path [-, black, thick, dashed] (v36) edge[bend left=0] node[above] {} (v41);
    \path [-, black, thick, dashed] (v36) edge[bend left=0] node[above] {} (v46);
\end{tikzpicture}
}
    \subcaption{The uncertainty scenario. The dashed edges are the quadruplicated edges of $G$. The dashed edges of the base scenario are deleted.}
    \label{fig:Sigma1:RecoveryScenario}
  \end{subfigure}
  \caption{The base scenarios and the uncertainty scenario.}
\end{figure}

On one hand, the construction of the base scenario $\sigma_0$ forces the base solution $\texttt{s}_0$ to be the cycle itself. The solution $\texttt{s}_0$ is presented in \Cref{fig:Sigma1:ScenarioSolutions:a}.
On the other hand, the vertex paths force the solution of the uncertainty scenario to go over all vertex paths because of setting $\kappa = |V(G)|$ prevents the solution $\texttt{s}_1$ from evading these paths.
A selection of possible solutions for the uncertainty scenario are shown in \Cref{fig:Sigma1:ScenarioSolutions:b}.

\begin{figure}[!ht]
    \centering
    \begin{subfigure}[b]{0.48\textwidth}
        \centering
        \scalebox{1.382}{
    \begin{tikzpicture}[scale=0.33,
    		node/.style = {shape=circle, draw, inner sep=0pt, minimum size=0.2cm},
    		dashednode/.style = {shape=circle, draw=blue, dashed, inner sep=0pt, minimum size=0.25cm},
    		textnode/.style = {shape=circle, draw, inner sep=0pt, minimum size=0.4cm},
    		dashedtextnode/.style = {shape=circle, draw=blue, dashed, inner sep=0pt, minimum size=0.4cm},
    		box/.style = {rectangle, fill=gray!20, rounded corners, fill opacity=1, inner sep=1pt}]
        \node at (1,-3.5) {$1$};
        \node at (11,-3.5) {$2$};
        \node at (11,3.5) {$3$};
        \node at (1,3.5) {$4$};
        
    \node[node] (v11) at (0,-1.0) {};
    \node[node] (v12) at (0.5,-2.0) {};
    \node[node,shape=star,star points=6] (v13) at (1.5,-2.5) {};
    \node[node,shape=diamond] (v14) at (2.5,-3.0) {};
    \node[node] (v15) at (3.5,-3.5) {};
    \node[node] (v16) at (5,-3.5) {};
    
    \path [-, black, thick] (v11) edge[bend left=0] node[above] {} (v12);
    \path [-, black, thick] (v12) edge[bend left=0] node[above] {} (v13);
    \path [-, black, thick] (v14) edge[bend left=0] node[above] {} (v15);
    \path [-, black, thick] (v15) edge[bend left=0] node[above] {} (v16);
    
    \node[node] (v21) at (7,-3.5) {};
    \node[node] (v22) at (8.5,-3.5) {};
    \node[node] (v23) at (9.5,-3.0) {};
    \node[node] (v24) at (10.5,-2.5) {};
    \node[node] (v25) at (11.5,-2.0) {};
    \node[node] (v26) at (12,-1.0) {};
    
    \path [-, black, thick] (v21) edge[bend left=0] node[above] {} (v22);
    \path [-, black, thick] (v22) edge[bend left=0] node[above] {} (v23);
    \path [-, black, thick] (v23) edge[bend left=0] node[above] {} (v24);
    \path [-, black, thick] (v24) edge[bend left=0] node[above] {} (v25);
    \path [-, black, thick] (v25) edge[bend left=0] node[above] {} (v26);
        
    \node[node] (v31) at (7,3.5) {};
    \node[node] (v32) at (8.5,3.5) {};
    \node[node] (v33) at (9.5,3.0) {};
    \node[node] (v34) at (10.5,2.5) {};
    \node[node] (v35) at (11.5,2.0) {};
    \node[node] (v36) at (12.0,1.0) {};
    
    \path [-, black, thick] (v31) edge[bend left=0] node[above] {} (v32);
    \path [-, black, thick] (v32) edge[bend left=0] node[above] {} (v33);
    \path [-, black, thick] (v33) edge[bend left=0] node[above] {} (v34);
    \path [-, black, thick] (v34) edge[bend left=0] node[above] {} (v35);
    \path [-, black, thick] (v35) edge[bend left=0] node[above] {} (v36);
        
    \node[node] (v41) at (0,1.0) {};
    \node[node] (v42) at (0.5,2.0) {};
    \node[node] (v43) at (1.5,2.5) {};
    \node[node] (v44) at (2.5,3.0) {};
    \node[node] (v45) at (3.5,3.5) {};
    \node[node] (v46) at (5,3.5) {};
    
    \path [-, black, thick] (v41) edge[bend left=0] node[above] {} (v42);
    \path [-, black, thick] (v42) edge[bend left=0] node[above] {} (v43);
    \path [-, black, thick] (v43) edge[bend left=0] node[above] {} (v44);
    \path [-, black, thick] (v44) edge[bend left=0] node[above] {} (v45);
    \path [-, black, thick] (v45) edge[bend left=0] node[above] {} (v46);

    \path [-, black, thick, dashed] (v16) edge[bend left=0] node[above] {} (v21);
    \path [-, black, thick, dashed] (v26) edge[bend left=0] node[above] {} (v36);
    \path [-, black, thick, dashed] (v31) edge[bend left=0] node[above] {} (v46);
    \path [-, black, thick, dashed] (v41) edge[bend left=0] node[above] {} (v11);
\end{tikzpicture}
}
        \subcaption{The base scenario $\sigma$ with solution $\texttt{s}_0$ in black for example graph $G$.}
        \label{fig:Sigma1:ScenarioSolutions:a}
    \end{subfigure}
    \hfill
    \begin{subfigure}[b]{0.48\textwidth}
        \centering
        \scalebox{1.382}{
    \begin{tikzpicture}[scale=0.33,
    		node/.style = {shape=circle, draw, inner sep=0pt, minimum size=0.2cm},
    		dashednode/.style = {shape=circle, draw=blue, dashed, inner sep=0pt, minimum size=0.25cm},
    		textnode/.style = {shape=circle, draw, inner sep=0pt, minimum size=0.4cm},
    		dashedtextnode/.style = {shape=circle, draw=blue, dashed, inner sep=0pt, minimum size=0.4cm},
    		box/.style = {rectangle, fill=gray!20, rounded corners, fill opacity=1, inner sep=1pt}]
        \node at (1,-3.5) {$1$};
        \node at (11,-3.5) {$2$};
        \node at (11,3.5) {$3$};
        \node at (1,3.5) {$4$};
        
    \node[node] (v11) at (0,-1.0) {};
    \node[node] (v12) at (0.5,-2.0) {};
    \node[node,shape=star,star points=6] (v13) at (1.5,-2.5) {};
    \node[node,shape=diamond] (v14) at (2.5,-3.0) {};
    \node[node] (v15) at (3.5,-3.5) {};
    \node[node] (v16) at (5,-3.5) {};
    
    \path [-, black, thick] (v11) edge[bend left=0] node[above] {} (v12);
    \path [-, black, thick] (v12) edge[bend left=0] node[above] {} (v13);
    \path [-, black, thick] (v14) edge[bend left=0] node[above] {} (v15);
    \path [-, black, thick] (v15) edge[bend left=0] node[above] {} (v16);
    
    \node[node] (v21) at (7,-3.5) {};
    \node[node] (v22) at (8.5,-3.5) {};
    \node[node] (v23) at (9.5,-3.0) {};
    \node[node] (v24) at (10.5,-2.5) {};
    \node[node] (v25) at (11.5,-2.0) {};
    \node[node] (v26) at (12,-1.0) {};
    
    \path [-, black, thick] (v21) edge[bend left=0] node[above] {} (v22);
    \path [-, black, thick] (v22) edge[bend left=0] node[above] {} (v23);
    \path [-, black, thick] (v23) edge[bend left=0] node[above] {} (v24);
    \path [-, black, thick] (v24) edge[bend left=0] node[above] {} (v25);
    \path [-, black, thick] (v25) edge[bend left=0] node[above] {} (v26);
        
    \node[node] (v31) at (7,3.5) {};
    \node[node] (v32) at (8.5,3.5) {};
    \node[node] (v33) at (9.5,3.0) {};
    \node[node] (v34) at (10.5,2.5) {};
    \node[node] (v35) at (11.5,2.0) {};
    \node[node] (v36) at (12.0,1.0) {};
    
    \path [-, black, thick] (v31) edge[bend left=0] node[above] {} (v32);
    \path [-, black, thick] (v32) edge[bend left=0] node[above] {} (v33);
    \path [-, black, thick] (v33) edge[bend left=0] node[above] {} (v34);
    \path [-, black, thick] (v34) edge[bend left=0] node[above] {} (v35);
    \path [-, black, thick] (v35) edge[bend left=0] node[above] {} (v36);
        
    \node[node] (v41) at (0,1.0) {};
    \node[node] (v42) at (0.5,2.0) {};
    \node[node] (v43) at (1.5,2.5) {};
    \node[node] (v44) at (2.5,3.0) {};
    \node[node] (v45) at (3.5,3.5) {};
    \node[node] (v46) at (5,3.5) {};
    
    \path [-, black, thick] (v41) edge[bend left=0] node[above] {} (v42);
    \path [-, black, thick] (v42) edge[bend left=0] node[above] {} (v43);
    \path [-, black, thick] (v43) edge[bend left=0] node[above] {} (v44);
    \path [-, black, thick] (v44) edge[bend left=0] node[above] {} (v45);
    \path [-, black, thick] (v45) edge[bend left=0] node[above] {} (v46);
    
    \path [-, red, thick, dashed] (v11) edge[bend left=0] node[above] {} (v21);
    \path [-, green, thick, dashed] (v11) edge[bend left=0] node[above] {} (v26);
    \path [-, blue, thick, dashed] (v11) edge[bend left=0] node[above] {} (v31);
    \path [-, brown, thick, dashed] (v11) edge[bend left=0] node[above] {} (v36);
    \path [-, brown, thick, dashed] (v16) edge[bend left=0] node[above] {} (v21);
    \path [-, blue, thick, dashed] (v16) edge[bend left=0] node[above] {} (v26);
    \path [-, green, thick, dashed] (v16) edge[bend left=0] node[above] {} (v31);
    \path [-, red, thick, dashed] (v16) edge[bend left=0] node[above] {} (v36);
    \path [-, blue, thick, dashed] (v21) edge[bend left=0] node[above] {} (v41);
    \path [-, green, thick, dashed] (v21) edge[bend left=0] node[above] {} (v46);
    \path [-, brown, thick, dashed] (v26) edge[bend left=0] node[above] {} (v41);
    \path [-, red, thick, dashed] (v26) edge[bend left=0] node[above] {} (v46);
    \path [-, red, thick, dashed] (v31) edge[bend left=0] node[above] {} (v41);
    \path [-, brown, thick, dashed] (v31) edge[bend left=0] node[above] {} (v46);
    \path [-, green, thick, dashed] (v36) edge[bend left=0] node[above] {} (v41);
    \path [-, blue, thick, dashed] (v36) edge[bend left=0] node[above] {} (v46);
\end{tikzpicture}
}
        \subcaption{The uncertainty scenarios with four solutions (red, green, blue and brown) of the possible eight solutions for example graph $G$.}
        \label{fig:Sigma1:ScenarioSolutions:b}
    \end{subfigure}
    \caption{Both scenarios with their respective solutions.}
    \label{fig:Sigma1:ScenarioSolutions}
\end{figure}

The reduction is clearly computable in logarithmic space, because we only have to count the number of vertices in the duplication procedure.
The connection to the cycle is also directly possible if the number of vertices known.
At last, the introduction of the edges for the base scenario is only a copy procedure based on the original graph, which is directly computable if the number of vertices is known.

Furthermore, the reduction is correct.
First of all, the only solution for the base scenario is the path from $s$ to $t$ over the former cycle in $\sigma_0$.
If a Hamiltonian cycle exists in the graph, then it is possible find a correspondent solution $\texttt{s}_1$ for the uncertainty scenario.
We can use the edges from the Hamiltonian cycle in $G$ and use the edges $\{v_i^a, v_{(j \ \text{mod} \ |V(G)|)}^b\}$ of both of the corresponding edges in the uncertainty scenario.
Thus, the vertex paths are connected to a Hamiltonian cycle as well.

One the other hand, if there is no Hamiltonian cycle, then there is no path of the form $(s, v_1^{a}, v_2^{x}, v_2^{y}, \ldots, v_{|V(G)|}^{x}, v_{|V(G)|}^{y}, v_1^{b}, t)$, where $x, y \in \{a, b\}$ and $x \neq y$.
This is due to the fact that the base scenario $\sigma_0$ in combination with the too small $\kappa = |V(G)|$ enables the possibility to switch only away from the edges that connect the vertex paths.
It is not possible to switch away completely from a vertex path as there are $|V(G)|+1$ edges in each vertex path (including that with $s$ and $t$).
Thus, at least one edge that has to be in the $s$-$t$-path would not be correctly included into the $s$-$t$-path or $s$ and $t$ are not connected by a path.
\end{proof}
\section{Recoverable Robust Problems and the Polynomial Hierarchy}\label{sec:sigma3}
In this section, we investigate the connection between multi-stage Hamming distance recoverable robust problems and the polynomial hierarchy.
For this, we introduce two succinct encodings: $xor$-dependencies and $\Gamma$-set scenarios.
We first prove that the Hamming distance recoverable robust version of problems, which are in \NP, are in $\Sigma^p_3$ for both encodings.
Then, we prove the hardness of the Hamming distance recoverable robust \sat{} for both encodings.

\begin{definition}[Hamming Distance Recoverable Robust \sat{}]
	The problem \hdrr[\sat{}] with Hamming distance over the literals $L$ is defined as follows.
	\begin{description}
		\item[Input:] Literals $L$, clauses $C$, base scenario $\sigma_0 \subseteq L$, uncertainty scenarios $\textsf{S} \subseteq 2^L$, $\kappa \in \mathbb N$
		\item[Question:] Are there solutions $\texttt{s}_{0} \subseteq \sigma_0$ and $\texttt{s}_\sigma \subseteq \sigma$ for all $\sigma \in \textsf{S}$ such that $H(\texttt{s}_0, \texttt{s}_\sigma) \leq \kappa$ for all $\sigma \in \textsf{S}$ and setting $\texttt{s}_0$ and $\texttt{s}_\sigma$ to true, all corresponding formulae of clauses $C|_{\sigma_0}$ and $C|_{\sigma}$ are satisfied?
	\end{description}
\end{definition}

At last, we extend these results to the multistage recoverable robustness case by showing the $\Sigma^p_{2m+1}$-completeness of the Hamming distance recoverable robust \sat{} with $m$ uncertainty and recovery stages.
We begin with $xor$-dependency scenarios.

\begin{definition}[$xor$-Dependency Scenarios]
    Let $\sigma_0$ be the base scenario.
    The encoding of $xor$-dependencies is a tuple $(E', \{(E_{1,1}, E_{1,2}),$ $\ldots,$ $(E_{n,1}, E_{n,2})\})$, where $E'$ and all $E_{i,j}$ are pairwise disjoint sets of combinatorial elements for all $i \in \{1,\ldots,n\}, j \in \{1,2\}$.
    Then the scenario set $\textsf{S}$ includes all $\sigma$ of the form $\sigma = \sigma_0 \vartriangle (E' \cup E_1 \cup \ldots \cup E_n)$ with either $(E_i = E_{i,1})$ or $(E_i = E_{i,2})$ for all $i \in \{1, \ldots, n\}$.
\end{definition}

Observe that with a linear sized encoding, exponentially many scenarios may be encoded.
We study this combinatorial explosion with the result that it introduces more complexity for Hamming distance recoverable robust problems in comparison to the base problem.
Concretely, we use \sat{} as base problem and show the $\Sigma^p_3$-hardness of \hdrr[\sat{}] with a linear number of $xor$-dependencies.
From that point on, we can derive hardness results for further problems.
Before we start the analysis of the hardness, we shall show that if $P_A \in \NP$, then \hdrr{} with a linear number of $xor$-dependencies is in $\Sigma^p_3$.

\begin{theorem}\label{thm12}
	If $P_A \in \NP$, then \hdrr{} with $xor$-dependencies is in $\Sigma^p_3$.
\end{theorem}
\begin{proof}
We present a polynomial time verifier that receives an ($\exists$-quantified) string $y_1$, a ($\forall$-quantified) string $y_2$, and an ($\exists$-quantified) string $y_3$ as input together with the instance.
The first string $y_1$ encodes the solution $\texttt{s}_0$ to the base scenario.
The second string $y_2$ encodes the scenario $\sigma$ for all $\sigma \in \textsf{S}$.
The third string encodes the solution $\texttt{s}_\sigma$ for the selected scenario $\sigma$.

The solution to the scenarios $\texttt{s}_0$ and $(\texttt{s}_\sigma)_{\sigma \in \textsf{S}}$ are encoded as a subset of of active elements in the corresponding scenario.
The scenarios $\sigma_0$ and $\sigma \in \textsf{S}$ can be computed in polynomial time from the input encoding encoded as sets, because the number of $xor$-dependencies is limited by the input length.
Furthermore, the solutions $\texttt{s}_0$ and $(\texttt{s}_\sigma)_{\sigma \in \textsf{S}}$ are subsets of $\sigma_0$ and $\sigma \in \textsf{S}$ correspondingly.
Consequently, the length of the input to the verifying algorithm is at most polynomial in the input length.

We can now construct the following algorithm that runs in polynomial time to verify the correctness of the strings.
First we compute the explicit encodings of the base scenario and the scenario $\sigma \in \textsf{S}$ encoded in $y_2$ in polynomial time.
We then verify whether the solution $\texttt{s}_0$ encoded by $y_1$ is a solution to $\sigma_0$ and whether $\texttt{s}_\sigma$ encoded by $y_3$ is a solution to scenario $\sigma$.
This is doable by using the existing verifier for the base problem that exists because the problem is in \NP.
At last, we check $H(\texttt{s}_0, \texttt{s}_\sigma) \leq \kappa$.
\end{proof}
	
\begin{theorem}\label{sat-hdrr-sigma-3-hard}\label{thm13}
    \hdrr[\sat{}] with $xor$-dependency scenarios is $\Sigma^p_3$-hard.
\end{theorem}
\begin{proof}
We reduce $\exists\forall\exists\sat{}$ to \hdrr[\sat{}].
For this, let $(X, Y, Z, C)$ be the $\exists\forall\exists\sat{}$ instance, where $\exists X \forall Y \exists Z~C(X, Y, Z)$ is the formula with clauses $C(X, Y, Z)$. We denote the \hdrr[\sat{}] instance as $I$.

\begin{description}
\item[Variables] We modify the variable set as follows.
The variable set $X$ remains the same.
We substitute $Y$ by $\{y^t_i, \ y^f_i \mid y_i \in Y\} =: Y'$.
At last, we define $Z' := Z \cup \{y^t_{i,0}, y^t_{i,1}, y^f_{i,0}, y^f_{i,1} \mid y_i \in Y\}$.

\item[Clauses]
The clauses are then modified as follows.
For all $y_i \in Y$, we add $y^f_i \leftrightarrow 0$ and $y^t_i \leftrightarrow 1$ to the formula.
Furthermore for all $y_i \in Y$, we add $y^t_i \leftrightarrow y^t_{i,1}, \ y^t_i \leftrightarrow \overline y^t_{i,0}, \ y^f_i \leftrightarrow y^f_{i,0}, \ y^f_i \leftrightarrow \overline y^f_{i,1}$ to the formula.
At last, we do the following substitutions: For every clause $c = (a, b, y_i) \in C$ with $a, b \in X \cup Y \cup Z$, we substitute $c$ by the clauses $(a, b, y^t_{i,1})$ and $(a,b,y^f_{i,0})$ and for clauses $c = (a, b, \overline y_i) \in C$ with $a, b \in X \cup Y \cup Z$ we substitute $c$ by the clauses $(a, b, y^t_{i,0})$ and $(a,b,y^f_{i,1})$.
We denote the set of modified clauses from $C$ by $C'$.
This is possible in polynomial time because we have a \sat{} instance and we are introducing at most eight new clauses per existing clause.

\item[Scenarios]
In the base scenario of $I$ only the variables from $X$ are active.
The uncertainty scenarios are encoded with $xor$-dependencies.
For this, we introduce $xor$-dependencies on the variables and clauses from $y^t_i$ and $y^f_i$ for all $i \in \{1, \ldots, |Y|\}$.
Concretely, we define the set $E' = Z' \cup C'$ and for each $i \in \{1,\ldots,n\}$, we define $E_{i,1} = \{y^t_i, (y^t_i \leftrightarrow y^t_{i,1}), (y^t_i \leftrightarrow \overline y^t_{i,0}), (y^t_i \leftrightarrow 1)\}$ as well as $E_{i,2} = \{y^f_i, (y^f_i \leftrightarrow y^f_{i,0}), (y^f_i \leftrightarrow \overline y^f_{i,1}), (y^f_i \leftrightarrow 0)\}$.
At last we set the maximum Hamming distance between the literals to $\kappa = |Y| + |Z'|$.

\item[Polynomial Time] This transformation is computable in polynomial time because for each literal and each clause in $(X, Y, Z, C)$ a fixed amount of literals and clauses in $I$ are created. Furthermore, the formula can be transformed into CNF by substituting $a \leftrightarrow b$ with clauses $(\overline a \lor b)$ and $(a \lor \overline b)$.

\item[Correctness]
For the correctness, we have to prove that the constructed instance over the variable sets $X$, $Y'$, and $Z'$ together with the $xor$-dependency scenarios are logically equivalent to the $\exists\forall\exists\sat{}$ formula.
First, we focus on the $\exists X$ part.
Any assignment to the variables from $X$ is a valid solution to the base scenario.
Because $\kappa = |Y| + |Z'|$ and $|Y| + |Z'|$ new variables appear in all of the uncertainty scenarios, the decision on the variables from $X$ is made while choosing a solution to the base scenario and cannot be changed in any uncertainty scenario.
Thus the decision on the variables from $X$ are the same in both the base scenario and the chosen uncertainty scenario.

Next, we concentrate on the $\forall Y$ part.
First for all $i \in \{1, \ldots, |Y|\}$, the clauses $1 \leftrightarrow y^t_i$ and $0 \leftrightarrow y^f_i$, force the variable $y^t_i$ to be always true and the variable $y^f_i$ to be always false if they are active.
The $xor$-dependencies activate exactly one of $y^t_i$ and $y^f_i$ for all $i \in \{1, \ldots, |Y|\}$.
Furthermore, if $y^t_i$ is active, then $y^t_{i,0}$ evaluates to $0$ and $y^t_{i,1}$ evaluates to $1$, and if $y^f_i$ is active, then $y^f_{i,0}$ evaluates to $0$ and $y^f_{i,1}$ evaluates to $1$.
Thus the clauses containing $y^t_{i,0}$ and $y^t_{i,1}$ (resp. $y^f_{i,0}$ and $y^f_{i,1}$) have the same satisfaction behavior than the clauses that contain $y_i$ (resp. $\overline y_i$) in the $\exists\forall\exists\sat{}$ formula.
If on the other hand, $y^t_i \in E_{i,1}$ is inactive, then
also the clauses $y^t_i \leftrightarrow y^t_{i,1}$ and $y^t_i \leftrightarrow \overline y^t_{i,0}$ are deactivated
such that both $y^t_{i,0}$ and $y^t_{i,1}$ can be set to $1$.
This allows all clauses containing $y^t_{i,0}$ or $y^t_{i,1}$ to be trivially fulfilled, whenever $y^t_i$ is inactive. 
The same argument holds for $y^f_i \in E_{i,2}$, i.e. the clauses $y^f_i \leftrightarrow y^f_{i,0}$ and $y^f_i \leftrightarrow \overline y^f_{i,1}$ are deleted and both $y^t_{i,0}$ and $y^t_{i,1}$ can be set to $1$.
Because the combinations allowed by the $xor$-dependencies are all $2^{|Y|}$ possible truth assignments to variables $Y$, the $xor$-dependency scenarios are equivalent to a $\forall Y$ for the variables $Y$.
Thus, we also have a one-to-one correspondence between the variables in $Y$ and $Y'$ in both instances.

At last, we have to consider the $\exists Z'$ part.
All variables from $X$ and $Y$ in the instance of $\hdrr[\sat{}]$ are already set equivalently to the assignment to the variables from $X$ and $Y$ in the $\exists\forall\exists\sat{}$ formula.
The variables from the set $\{y^t_{i,0}, y^t_{i,1}, y^f_{i,0}, y^f_{i,1} \mid y_i \in Y\}$ are assigned according to $Y$.
All variables of the $\hdrr[\sat{}]$ instance that are not yet assigned are free variables from $Z$.
The clauses $C'$, however, are equivalent to the clauses from the  $\exists\forall\exists\sat{}$ formula.
Thus the rest of the variables (in both instances these are the variables from $Z$) is one-to-one correspondent.

In conclusion, the instance from $\exists\forall\exists\sat{}$ is equisatisfiable to the constructed instance of $\hdrr[\sat{}]$ because the assignments on the set $X$, $Y$, and $Z$ correspondent to the assignments in $X$, $Y'$, and $Z'$.
\end{description}
\end{proof}

While the other parts of the paper are developed independent from Goerigk et al. \cite{DBLP:journals/corr/abs-2209-01011}, the results for $\Gamma$-set scenarios are built upon it.
The results based on $xor$-dependencies are adaptable to the $\Gamma$-set scenarios as described in this section.
For the $\Gamma$-set scenarios, we use the definition over sets instead of elements as in $\Gamma$-scenarios, which is defined as follows.

\begin{definition}[$\Gamma$-set Scenarios]\label{def:gamma-set-scenarios}
    Let $\sigma_0$ be the base scenario.
    The encoding of $\Gamma$-set scenarios is a tuple $(E', \{E_{1}, E_{2}, \ldots E_{n}\})$, where $E'$ and all $E_{i}$ are pairwise disjoint sets of combinatorial elements for all $i \in \{1,\ldots,n\}$.
    Then, the corresponding scenario set $\textsf{S}$ includes all $\sigma$ of the form $\sigma = \sigma_0 \vartriangle (E' \cup \bigcup_{E \in \mathcal{E}} E)$ with $\mathcal{E} \subseteq \{E_{1}, E_{2}, \ldots, E_{n}\}, |\mathcal{E}| \leq \Gamma$.
\end{definition}

Again, with a linear sized encoding, exponentially many scenarios may be encoded.
We show $\Sigma^p_3$-hardness of \hdrr[\sat{}] with $\Gamma$-set scenarios.
A proof on the so-called \textsc{Robust Adjustable Sat} was already conducted by Goerigk et al. \cite{DBLP:journals/corr/abs-2209-01011}.
This version of \sat{} uses uncertainties over the costs instead of the elements as in $\Sigma^p_3$-hardness of \hdrr[\sat{}] with $\Gamma$-set scenarios.
Thus, the proof is not analogous as it is different in technicalities, nevertheless, we reuse their basic idea of introducing the cheat detection gadget (modeled by the $s$-variables) for our proof.
Furthermore, we show also that if $P_A \in \NP$, then \hdrr{} with $\Gamma$-set scenarios is in $\Sigma^p_3$.

\begin{theorem}\label{thm121}
If $P_A \in \NP$, then \hdrr{} with $\Gamma$-set scenarios is in $\Sigma^p_3$.
\end{theorem}
\begin{proof}
Each scenario from the $\Gamma$-set scenarios is encodable in polynomial space because the number of sets in $\mathcal{E}$ from \Cref{def:gamma-set-scenarios} is limited by the input length.
Thus, this proof is analogous to the proof for $xor$-dependencies.
\end{proof}
	
\begin{theorem}\label{sat-hdrr-sigma-3-hard2}\label{thm131}
    \hdrr[\sat{}] with $\Gamma$-set scenarios is $\Sigma^p_3$-hard.
\end{theorem}
\begin{proof}
We heavily reuse the transformation for $xor$-dependencies.
Nevertheless, we have to introduce a mechanism to accommodate the less structured $\Gamma$-set scenarios in comparison to $xor$-dependencies.
At last, the scenarios have to be adapted to $\Gamma$-set scenarios.

We reduce $\exists\forall\exists\sat{}$ to \hdrr[\sat{}].
For this, let $(X, Y, Z, C)$ be the $\exists\forall\exists\sat{}$-instance, where $\exists X \forall Y \exists Z~C(X, Y, Z)$ is the formula with clauses $C(X, Y, Z)$. We denote the \hdrr[\sat{}] instance as $I$.

\begin{description}
\item[Variables] We modify the variable set as follows.
The variable set $X$ remains the same.
We substitute the set $Y$ by $\{y^t_i, \ y^f_i \mid y_i \in Y\} =: Y'$.
Moreover, we define set $Z' := Z \cup \{y^t_{i,0}, y^t_{i,1}, y^f_{i,0}, y^f_{i,1} \mid y_i \in Y\} \cup \{s, s_i \mid y_i \in Y\}$.
The added variables $s_i$ for each $y_i \in Y$ and the additional variable $s$ fulfill the same function as in the proof of Goerigk, Lendl and Wulf \cite{DBLP:journals/corr/abs-2209-01011}.

\item[Clauses]
The clauses are then modified as follows.
For all $y_i \in Y$, we add $y^f_i \leftrightarrow 0$ and $y^t_i \leftrightarrow 1$ to the formula.
Furthermore for all $y_i \in Y$, we add $y^t_i \leftrightarrow y^t_{i,1}, \ y^t_i \leftrightarrow \overline y^t_{i,0}, \ y^f_i \leftrightarrow y^f_{i,0}, \ y^f_i \leftrightarrow \overline y^f_{i,1}$ to the formula.
Then, we do the following substitutions: For every clauses $c = (a, b, y_i) \in C$ with $a, b \in X \cup Y \cup Z$ we substitute $c$ by the clauses $(a, b, y^t_{i,1})$ and $(a,b,y^f_{i,0})$ and for clauses $c = (a, b, \overline y_i) \in C$ with $a, b \in X \cup Y \cup Z$ we substitute $c$ by the clauses $(a, b, y^t_{i,0})$ and $(a,b,y^f_{i,1})$.
This is possible in polynomial time because we have a \sat{} instance and we are introducing at most eight clauses per clause.
Moreover, we add $\overline s$ to all clauses $c \in C$, such that we obtain a formula equivalent to $s \rightarrow C(X, Y, Z)$.
We denote the set of modified clauses from $C$ by $C'$.
At last, we add $\overline y^t_i \lor s_i$ and $y^f_i \lor s_i$ as well as $s \lor \overline s_1 \lor \overline s_2 \lor \ldots \lor \overline s_{|Y|}$ to the clauses.

\item[Scenarios]
The first scenario of $I$ consists only of the variables from $X$.
Based on this, we encode the uncertainty scenarios with $\Gamma$-set scenarios.
For this, we include the variable $y^t_i$ (respectively $y^f_i$) together with its clauses in one of the $E_i$.
Concretely, we define $E' = Z' \cup C' \cup \{(s \lor \overline s_1 \lor \ldots \lor \overline s_{|Y|})\}$.
Furthermore, we define $E_{2i-1} = \{y^t_i, (y^t_i \leftrightarrow y^t_{i,1}), (y^t_i \leftrightarrow \overline y^t_{i,0}), (y^t_i \leftrightarrow 1), (\overline y^t_i \lor s_i)\}$ and $E_{2i} = \{y^f_i, (y^f_i \leftrightarrow y^f_{i,0}), (y^f_i \leftrightarrow \overline y^f_{i,1}), (y^f_i \leftrightarrow 0), (y^f_i \lor s_i)\}$ for $i \in \{1,\ldots,|Y|\}$.
At last, set $\kappa = |Y| + |Z'|$ and $\Gamma = |Y|$.

\item[Polynomial Time] This transformation is computable in polynomial time because for each literal and each clause in $(X, Y, Z, C)$ a fixed amount of literals and clauses in $I$ are created. Furthermore, the formula can be transformed into 3CNF by substituting $a \leftrightarrow b$ with clauses $(\overline a \lor b)$ and $(a \lor \overline b)$ and using Karp's reduction from \textsc{Sat} to \textsc{3Sat} \cite{DBLP:conf/coco/Karp72}.

\item[Correctness]
For the correctness, we have to prove that the $\Gamma$-set scenarios within the construction are logically equivalent to $xor$-dependencies.
Indeed the introduction of the cheat detection gadget, i.e. the $s$-variables, ensures this.
For this, observe that whenever the set of uncertain elements $\mathcal E$ is smaller than $\Gamma = |Y|$,
there is a pair of variables $y^t_i, y^f_i$ that is not active.
Consequently, the clauses $\overline y^t_i \lor s_i$ and $y^f_i \lor s_i$ are inactive and $s_i$ can be assigned to $0$.
It follows that $\overline s_i$ satisfies the clause $s \lor \overline s_1 \lor \overline s_2 \lor \ldots \lor \overline s_{|Y|}$ such that $s$ can be assigned $0$.
Then all clauses are fulfilled by the addition of $\overline s$ to all clauses from $C$.
This also holds, whenever there is an active pair of $y^f_i$ and $y^t_i$ because by the pigeonhole principle there is a $j \in \{1, \ldots, |Y|\}$ such that neither $y^f_j$ nor $y^t_j$ is active such that $s_j$ can be assigned to $0$.
Therefore, all non-trivial cases require exactly one of $y^f_i$ and $y^t_i$ to be active, which is equivalent to $xor$-dependencies.
\end{description}
\end{proof}

\subsection{Multi-Stage Recoverable Robustness}
In \emph{multi-stage recoverable robustness}, the uncertainty is not only modeled by one set of scenarios but multiple sets that are connected inductively.

\begin{definition}[Multi-Stage Recoverable Robust Problem]
    A multi-stage recoverable robust problem with $m$ recoveries $\mhdrr[P_A]$ is inductively defined as
    \begin{align*}
        \mhdrr{} &:= P_A& \ \text{for} \ m = 0,\\
        \mhdrr{} &:= \hdrr[(\mminushdrr{})]& \ \text{for} \ m >1.
    \end{align*}
\end{definition}

The complexity results naturally extend to the multiple recoverable robustness concept.
We make use of the inductive nature of the definition by proving the following theorems by induction.
For this, we reuse \Cref{thm12,thm13,thm121,thm131} as induction base.

\begin{theorem}
    \mhdrr[\sat] with $xor$-dependency scenarios is in $\Sigma^p_{2m+1}$.\\
    \mhdrr[\sat] with $\Gamma$-set scenarios is in $\Sigma^p_{2m+1}$.
\end{theorem}
\begin{proof}
We reuse the argument from \Cref{thm12} and generalize it to multiple stages.
For this, we present a polynomial time verifier that receives the instance together with the following strings as input:
an ($\exists$-quantified) string $y$,
and for each stage $i \in \{1, \ldots, m\}$
a ($\forall$-quantified) string $y^i_1$
and an ($\exists$-quantified) string $y^i_2$.
The first string $y$ encodes the solution $\texttt{s}_0$ to the base scenario.
The string $y^i_1$ encodes the scenario $\sigma_i \in \textsf{S}_i$.
The string $y^i_2$ encodes the solution $\texttt{s}_{\sigma_i}$ for the selected scenario $\sigma_i$.
The solution to the scenarios $\texttt{s}_0$ and all $\texttt{s}_{\sigma_i}$ for $i \in \{1, \ldots, m\}$ are encoded as a subset of of active elements in the corresponding scenario.

The scenarios $\sigma_0$ and $\sigma_i \in \textsf{S}_i$ for each $i \in \{1, \ldots, m\}$ can be computed in polynomial time from the input encoding encoded as sets, because the number stages and the number of $xor$-dependencies is limited by the input length.
Furthermore, the solutions $\texttt{s}_0$ and $\texttt{s}_{\sigma_i}$ for each $i \in \{1, \ldots, m\}$ are subsets of $\sigma_0$ and $\sigma_i \in \textsf{S}_i$ correspondingly.
Consequently, the length of the input to the verifier is at most polynomial in the input length.

We can now construct the following algorithm that runs in polynomial time to verify the correctness of the strings.
First we compute the explicit encodings of the base scenario and scenarios $\sigma_i \in \textsf{S}_i$ encoded in $y^i_1$ for each $i \in \{1, \ldots, m\}$ in polynomial time.
We then verify whether the solution $\texttt{s}_0$ encoded by $y$ is a solution to $\sigma_0$ and whether $\texttt{s}_{\sigma_i}$ encoded by $y^i_2$ is a solution to scenario $\sigma_i$.
This is doable by using the existing verifier for the base problem that exists because the problem is in \NP.
At last, we check whether $H(\texttt{s}_0, \texttt{s}_{\sigma_1}) \leq \kappa$ and whether $H(\texttt{s}_{\sigma_{i-1}}, \texttt{s}_{\sigma_i}) \leq \kappa$ for each $i \in \{2, \ldots, m\}$.
\end{proof}

\begin{theorem}\label{inductionmhdrr}
    \mhdrr[\sat] with $xor$-dependency scenarios is $\Sigma^p_{2m+1}$-hard.
    \mhdrr[\sat] with $\Gamma$-set scenarios is $\Sigma^p_{2m+1}$-hard.
\end{theorem}
\begin{proof}
    The proof is by induction over $m$.
    For the induction base, we consider for $m=0$ the \NP-complete problem \sat{} and for $m=1$ the $\Sigma^p_{3}$-complete problem \hdrr[\sat] (\Cref{sat-hdrr-sigma-3-hard}).

    For the induction step from $m$ to $m+1$, we extend the argument from \Cref{sat-hdrr-sigma-3-hard}.
    By induction hypothesis, we know that \mhdrr{} is $\Sigma^p_{2m+1}$-hard.
    More precisely, the induction hypothesis yields that $(\exists\forall)^{m}\exists$-\sat{} is reducible to \mhdrr{}.
    Thus, we need to model the $m+1$st alternation with the additional $m+1$st uncertainty stage.
    For this, let
    $$
        X_1, Y_1, X_2, Y_2, \ldots, X_{m+1}
    $$
    be the variable sets of the $(\exists\forall)^{m+1}\exists$-$\sat$ instance, where
    $$
        \exists X_1 \forall Y_1 \exists X_2 \forall Y_2 \ldots \exists X_{m+1} \ C(X_1, Y_1, X_2, Y_2, \ldots, X_{m+1})
    $$
    is the formula.
    By interpreting the variable sets $X_2, Y_2, \ldots, X_{m+1}$ as the variable set $Z$, which is not altered in any way, $Y_1$ as variable set $Y$ and $X_1$ as $X$, the additional alternation of the $(\exists\forall)^{m+1}\exists\sat$ formula can be modeled by one more uncertainty stage.
\end{proof}
	
\section{Classes of Recoverable Robust Problems}\label{sec:class}
We have shown that \hdrr[\sat{}] is the canonical $\Sigma^p_3$-complete Hamming distance recoverable robust problem.
The goal is to ``upgrade'' the existing reductions on the \NP-level to reduce the corresponding Hamming distance recoverable robust problems to each other.
If we are additionally able to guarantee transitivity, we are also able to easily achieve complexity results for a large class of problems.
Essentially, the reduction between Hamming distance recoverable robust problems needs to preserve the structure of the scenarios.
For this, consider problems $P_A$ and $P_B$.
We need to achieve that a combinatorial element $e_A$ in $P_A$ is active if and only if the combinatorial elements $E_B$, to which $e_A$ is mapped in $P_B$, are active.
Then, we can use this one-to-many correspondence to (de)activate the corresponding elements in the instance of $P_B$.

Many of the properties from above are already constituted by the informal concept of gadget reductions.
Gadget reductions describe that each part of the problem $P_A$ is mapped to a specified part of the problem $P_B$ that inherits the behavior in problem $P_A$.
We adjust this concept to combinatorial elements, that is universe elements and relation elements, for our purpose.
The goal is that a gadget is a subset of combinatorial elements in $P_B$ for every combinatorial element in $P_A$.
Furthermore, we preserve the (in)activeness of elements in a scenario.
We call reductions that fulfill this property \textit{modular} in the sense that all gadgets are easily (de)activatable.
Furthermore, the solution size, which is the number of universe elements in a solution, has to adapt accordingly while being easy to compute in order to define the Hamming distance in the reduction correctly.
We approach this later by demanding that the solution size of every gadget has to be a constant, i.e. it does not change when (de)activating other gadgets.

\subsection{Universe Gadget Reduction}
Let $P_A$ be a combinatorial decision problem with instance tuples $(U_A, R_A, F_A)$ and $P_B$ a combinatorial decision problem with instance tuples $(U_B, R_B, F_B)$. A Universe Gadget Reduction $f^{}_{\preceq}$ that many-one-reduces $P_A$ to $P_B$ is composed of a (possibly empty) constant gadget $Y_{const}$, which is the same for every instance, and of the independent mappings:
	$
		f_{R^i_{A}, R^j_{B}}: R^i_{A} \rightarrow 2^{R^j_{B}} \ \text{for all} \ (i,j) \in I_A \times I_B.
	$
We, then, call the substructure
    $$
        Y_x = f^{}_{\preceq}(x) = \bigcup_{(i, j) \ \in \ I_A \times I_B} f_{R_A^i, R_B^j}(x)
    $$
    the gadget for the specific universe element or relation element $x \in \bigcup_i R_A^i$.
    Additionally, we denote the set of all gadgets by $\Upsilon(R_A) = \{Y_r \mid r \in R^i_A \text{ with } i \in I_A\} \cup \{Y_{const}\}$ for the instance $R_A$.
    The mappings must fulfill the following properties.
    \begin{enumerate}
        \item Pre-image uniqueness:
            Let $y \in R^j_{B}$ for some $j \in I_B$, then either $y \in Y_{const}$ or there is exactly one $(i, j) \in I_A \times I_B$ and exactly one $x \in R_A^i$ such that $y \in f_{R_A^i, R_B^j}(x)$.
        \item Modularity:
            If a combinatorial element $r \in R^i_A$ from $(U_A, R_A, F_A)$ is removed to form a new instance $(U'_A,R'_A,F'_A)$, the removal of the gadget of $r$ in $(U_B,R_B,F_B)$ induces a correct reduction instance $(U'_B,R'_B,F'_B)$.
        	A removal of $r \in R^i_A$ corresponds to the substitution by a (possibly empty) removal gadget $Y^{rem}_r$ in $P_B$:
        	$$
        		f^{}_{\preceq}(R_A \setminus R(r)) = (R_B \setminus f^{}_{\preceq}(R(r))) \cup Y^{rem}_r.
        	$$
        	If the removal gadget is empty for all combinatorial elements, we call the modularity \emph{strong}, otherwise \emph{weak}.
        	We substitute the gadgets $Y_x$, for $x \in R(r)$, with the removal gadget $Y^{rem}_r$ in $\Upsilon(R_A)$ correspondingly.
            We consider the elements of a removal gadget to be disjoint from the elements of the original gadgets in order to guarantee pre-image uniqueness.
    \end{enumerate}

This definition of a gadget reduction for combinatorial decision problems ensures that the gadgets are uniquely relatable to the generating combinatorial elements and every element is easily deactivatable.
Note that only combinatorial elements from $P_A$ can be removed such that the new instance $P'_A$ is a validly encoded instance.
That is, combinatorial elements cannot be removed in general as this may void the validity of the instance, e.g. in \textsc{UstCon} the universe elements $s$ and $t$ cannot be deleted.

For the sake of simplicity, we only use gadget reductions originating from \sat{}.
Therefore, we consider the following properties of solutions in a gadget reduction from \sat{}.
These have to be proven individually for each reduction from \sat{}.
For this, let $(L,C)$ be a \sat{} instance that consists of literals $L$ and clauses $C$.
We introduce \emph{variable gadgets} and \emph{clause gadgets}.
\sat{} has literals as universe elements.
Furthermore, it includes the following relations not exclusively:

\begin{tabular}{ll}
	literals and negated literals & $\{(\ell, \overline \ell) \mid \ell \in L\}$\\
	clauses & $\{(\ell^i, \ell^j, \ell^k) \mid (\ell^i, \ell^j, \ell^k) = c \in C \subseteq L^3 \}$\\
	literal and clause & $\{(\ell, c) \mid \ell \in c \in C\}$\\
	negated literal and clause & $\{(\overline \ell, c) \mid \ell \in c \in C\}$
\end{tabular}

A \emph{variable gadget} exists for each literal pair $\ell, \overline \ell$ and consists of the literal gadgets of $\ell$ and $\overline \ell$ as well as the gadget for the relation element $(\ell, \overline \ell)$ of the literals and negated literals relation.
A \emph{clause gadget} simulates a clause.
For this, all gadgets for relations that include a clause (clause, literal an clause, negated literal and clause, literals in clause, negated literals in clause) build up the clause gadget.

We first assume that the solution on the literals, i.e. the variable assignment is one-to-one correspond to the local solution on the variable gadget.
More precisely, let $\ell_i$, $\overline \ell_i$ be the literals corresponding variable $x_i$, then there is exactly one local solution on the variable gadget of $x_i$ that corresponds to the assignment of true to variable $x_i$ and exactly one that corresponds to the assignment of false to the variable $x_i$.
Furthermore, the local solution of the constant gadget is always the same.
For weakly modular reductions, we additionally assume the following solution extension property.
Consider $Y^{rem}_x$ for variable $x$ and all $Y^{rem}_z$ for variables $z \in Z$ such that $x$ and $z$ share a clause.
Then, for each assignment to the variables in $Z$, there needs to be a local solution on $Y^{rem}_x$ and $Y^{rem}_z$ for all $z \in Z$ such that if $Y^{rem}_z$ is deactivated and $Y_z$ is activated for all $z \in Z$, while $Y^{rem}_x$ stays active, the following holds:
For all extending solutions to the assignment to $Z$ in the {\sc 3Sat}-instance, there is an extending solution to the corresponding local solutions on $Y_z$ and the fixed local solution on $Y^{rem}_x$ in the reduction instance.
\footnote{These solution properties were not stated in the conference version. We added them here and as a prerequisite to \Cref{hdrr_class_reduction,hdrr_gamma}.}
Additionally, the solution size has to adapt to the modularity of the gadgets in the universe gadget reduction.
That is, if a combinatorial element in $P_A$ is removed such that the corresponding gadgets in $P_B$ are removed, the solution size of the instance of $P_B$ is well-defined.

\paragraph*{Solution Size}\label{sss:solutionSize}
In order to correctly define the Hamming distance $\kappa$ for a reduction from a problem $\hdrr[P_A]$ to $\hdrr[P_B]$ based on a universe gadget reduction from $P_A$ to $P_B$, we need to find a solution size function.
We demand that each gadget $Y \in \Upsilon$ has a constant \emph{local solution size}, which is defined by the universe gadget reduction.
A Yes-instance has a solution size, which is defined by the sum of all local solution sizes defined as follows.

\begin{definition}[\sat{}-Reduction Solution Size Function]
	Let $P_B$ be a problem such that a universe gadget reduction $f$ from \sat{} to $P_B$ exists.
	Let $(L, C)$ be a \sat{}-instance.
	The gadgets have a local solution size of $size(Y)$ for each $Y \in \Upsilon(L,C)$.
	The function
	$$
		size_f: \sats{} \rightarrow \mathbb{N}: (L, C) \mapsto \sum_{Y \in \Upsilon(L,C)} size(Y)
	$$
	describes the target solution size over universe elements of $f(L, C) = R_B$ for $R_B$ to be a YES-instance of $P_B$.
\end{definition}

In the following, we only consider universe gadget reductions that have such a solution size function.
We assume that the local solution size of each gadget is a constant independent of the generating combinatorial element and which combinatorial elements are active.
That is, all literal/variable gadgets and each gadget of a $k$-clause gadget have the same solution size.
Thus, the solution size function is computable in polynomial time.
While this is necessary, it is not a serious restriction as we see later.
All of the reductions that we present later inherently have this property.

\subsection{Properties of Universe Gadget Reductions}
The definitions of universe gadget reductions and its solutions size function imply the following three properties, which are specifically desired as illustrated before.

\begin{lemma}\label{UGD_properties}
    A universe gadget reduction is total and one-to-many. The inverse to a universe gadget reduction is many-to-one.
\end{lemma}
\begin{proof}
   Let $P_A$ and $P_B$ combinatorial problems with $P_A \preceq^{UGR} P_B$.
   For every relation element $x \in \bigcup_i R_A^i$, the mappings $f_{R_A^i, R_B^j}(x)$ map to corresponding relation elements of $P_B$.
   By definition of a universal gadget reductions every relation element of $P_B$ is generated by such a mapping or is part of the constant gadget $Y_{const}$ such that universal gadget reductions are total.
   By the definition of the mappings and the constant gadget, universe gadget reductions are one-to-many because a relation element $y \in \bigcup_j R_B^j$ of $P_B$ can be only mapped by one mapping from a relation element $x \in \bigcup_i R_A^i$ or is part of $Y_{const}$.
   Analogously, the inverse mapping of the universal gadget reduction is many-to-one.
\end{proof}

Thus by definition, it is ensured that each element $y \in Y_{const} \cup \bigcup_j R_B^j$ of $P_B$ is left unique and thus belongs to exactly one gadget.
Another desirable property is transitivity.
While strongly modular universe gadget reductions are transitive, we have to pay more attention to weakly modular reductions.
This is due to the introduced removal gadgets.
In the case that a strongly modular reduction is chained after a weakly modular reduction, the removal gadget can be transformed again into a removal gadget, making the resulting reduction weakly modular.
In the case that two weakly modular reductions are chained together, the removal gadgets of both reductions may interact with each other.
Then, however, it is not clear how to transform the removal gadgets into a working removal gadget in general.
Thus in general, we do not reach transitivity for weakly modular reductions.

\begin{lemma}\label{UGD_transitive}\footnote{In the conference version, it was stated that weakly modular reductions are transitive. However, this is not the case in general. We adapted the lemma accordingly. \Cref{thm:reductions} is not influenced by this because in all reduction chains there is at most one weakly modular reduction.}
    Polynomial universe gadget reductions are transitive in the following sense:
    \begin{description}
        \item[(1)] strongly modular universe gadget reductions are transitive
        \item[(2)] a strongly modular reduction followed by a weakly modular reduction results in a weakly modular reduction
        \item[(3)] a weakly modular reduction followed by a strongly modular reduction results in a weakly modular reduction
    \end{description}
\end{lemma}
\begin{proof}
    Let $P_A$ be a combinatorial decision problem with relations $R_A$, $P_B$ a combinatorial decision problem with relations $R_B$ and $P_C$ a combinatorial decision problem with relations $R_C$.
    Firstly, we prove that the pre-image uniqueness is upheld.
    Formally, the concatenation of the mappings $f_{R^i_{A}, R^j_{B}}: R^i_{A} \rightarrow R^j_{B}$ and $f_{R^j_{B}, R^k_{C}}: R^j_{B} \rightarrow R^k_{C}$ has to preserve the following property: Let $z \in R^k_{C}$ for some $k \in I_C$, then either $z \in Y^{A \rightarrow C}_{const}$ or there is exactly one $(i, k) \in I_A \times I_C$ and exactly one $x \in R_A^i$ such that $z \in f_{R_A^i, R_C^k}(x)$.\\
    Let $z \in R^k_{C}$ for some $k \in I_C$.
    \begin{description}
    \item[Case 1] $z \in Y^{B \rightarrow C}_{const}$.
    Then $z$ is generated as part of the constant gadget of the reduction from $P_B$ to $P_C$.
    Thus, $z \in Y^{A \rightarrow C}_{const}$.
    \item[Case 2] $z \notin Y^{B \rightarrow C}_{const}$.
            There is exactly one $(j, k) \in I_B \times I_C$ and exactly one $y \in R_A^j$ such that $z \in f_{R_A^j, R_C^k}(y)$.
            Then, $y \in R_A^j$ for some $j \in I_B$.
            \begin{description}
                \item[Case 2.1] $y \in Y^{A \rightarrow B}_{const}$. Then $z$ is generated by exactly one element of $y \in Y^{A \rightarrow B}_{const}$.
                Thus, $z \in Y^{A \rightarrow C}_{const}$.
                \item[Case 2.2] There is exactly one $(i, j) \in I_A \times I_B$ and exactly one $x \in R_A^i$ such that $y \in f_{R_A^i, R_B^j}(x)$.
                Thus by definition, of the universe gadget reduction, $z$ is generated by exactly on $(i, j, k) \in I_A \times I_B \times I_C$ and exactly on $x$ with $z = f_{R_B^j, R_C^k}(f_{R_A^i, R_B^j}(x))$.
            \end{description}
    \end{description}
    It follows that all gadgets of relation $r \in P_A$ elements are pre-image uniquely mapped in the instance $f_{P_B, P_C} \circ f_{P_A, P_B} (R_A)$.
    Furthermore, the modularity of the gadgets is preserved.
    For this, we have to consider the following three cases:
    \begin{description}
        \item[Case 1] For strongly modular reductions $f_{P_A, P_B}$ and $f_{P_B, P_C}$, the concatenation of $f_{P_A, P_B}$ and $f_{P_B, P_C}$ is still strongly modular.
        Specifically if a relation element $r$ in $P_A$ is deleted, its gadgets are deleted from $P_B$ according to the reduction $f_{P_A, P_B}$ and the instance of $P_B$ is the correct instance.
        Because the elements are deleted in $P_B$, the reduction $f_{P_B, P_C}$ continues to delete the corresponding gadgets in $P_C$, whereby the the instance in $P_C$ stays correct for all deletions.
        Accordingly, strongly modular universe gadget reductions are transitive.
        \item[Case 2] Let $f_{P_A, P_B}$ a strongly modular universe gadget reduction and $f_{P_B, P_C}$ a weakly modular universe gadget reduction.
        If a relation element $r$ in $P_A$ is deleted, then its gadgets are deleted from $P_B$ as well according to the reduction $f_{P_A, P_B}$.
        The deletion of the gadgets of $r$ in the instance of $P_B$ results in the introduction of (potentially empty) removal gadgets in $P_C$ according to the weakly modular reduction $f_{P_B, P_C}$.
        This still yields a correct universe gadget reduction, which is weakly modular.
        \item[Case 3] Let $f_{P_A, P_B}$ a weakly modular universe gadget reduction and $f_{P_B, P_C}$ a strongly modular universe gadget reduction.
        The deletion of a relation element $r$ in the instance of $P_A$ results in the introduction of (potentially empty) removal gadgets in $P_B$.
        By definition of weakly universe gadget reductions, this results in a correct instance to which we can apply the strongly modular reduction $f_{P_B,P_C}$.
        The resulting reduction of the concatenation $f_{P_A, P_B} \circ f_{P_B, P_C}$ is weakly modular, where the removal gadgets are defined by
        $$
        f_{P_B, P_C}(Y^r_{rem}) = \bigcup_{\substack{(i, j) \ \in \ I_B \times I_C \\ y \in Y^r_{rem}}} f_{R_B^i, R_C^j}(y)
    $$
    \end{description}

It follows that the concatenation of two strongly modular universe gadget reductions fulfill the pre-image uniqueness and strong modularity.
Furthermore, the concatenation of a weakly modular gadget reduction with a strongly modular gadget reduction (independent of the order of concatenation) results in a reduction that is pre-image unique and fulfills weak modularity.
\end{proof}

Furthermore, the solution size function adheres to the modularity of the universe gadget reduction.

\begin{lemma}
	The solution size function adheres to modularity.
	In other words, let $(L, C)$ and $(L', C')$ be instances of \sat{} with $L' \subseteq L$ and $C' \subseteq C$.
	Furthermore, let $f$ be a universe gadget reduction from \sat{} to $P_B$ such that $f(L',C')$ results from $f(L,C)$ by removing the corresponding gadgets.
	Then,
		$$
				size_f(L',C') = \sum_{Y \in \Upsilon(L',C')} size(Y).
		$$
\end{lemma}
\begin{proof}
    If a gadget $Y_r$ is removed, the solution is decreased by $size(Y_r)$ and increased by the local solution size of the removal gadget $size(Y^{rem}_r)$.
    Because a solution size function is the sum of the local solutions size of each gadget, the following holds:
	\begin{flalign*}
		size_f(L',C') &= \sum_{Y \in \Upsilon(L',C')} size(Y)&&\\
		&= size(Y_{const}) + \sum_{x \in (L', C')} size(Y_x) + \sum_{x \in ((L, C) \setminus (L',C'))} size(Y^{rem}_x). &&
	\end{flalign*}
    Accordingly, by the definition of the solution size function to be the sum of the constant local solution sizes of the gadgets, it adheres to modularity.
\end{proof}

Now, we present a general reduction from $\exists\forall\exists$\sat{} to the Hamming distance recoverable robust $\hdrr[P_B]$ based on the structure that a universe gadget reduction provides.
That is if there is a polynomial time universe gadget reduction $f$ from \sat{} to $P_B$ such that the solution properties hold and a corresponding polynomial time solution size function $size_f$ exists, then there is a polynomial time reduction for the Hamming distance recoverable robust version of $P_B$ with Hamming distance over the universe elements, transforming the scenarios accordingly.

\begin{theorem}\label{UGD_polytime}\label{hdrr_class_reduction}
    If $\sat$ is universe gadget reducible to $P_B$ in polynomial time such that
    there is a corresponding solution size function, and the solution properties hold,
    then there is a polynomial time reduction from $\exists\forall\exists$\sat{} to $\hdrr[P_B]$, where the Hamming distance is defined over the solution ground set and the scenario encodings are $xor$-dependency scenarios.
\end{theorem}
\begin{proof}
	In the following, we prove that $\exists\forall\exists$\sat{} is reducible to $\hdrr[P_B]$.
    For this, we reuse the reduction from $\exists\forall\exists$\sat{} to \hdrr[\sat] together with the universe gadget reduction from \sat{} to $P_B$.
    The basic idea is to substitute the variables and clauses by the corresponding gadgets.
    With this reduction, we also directly prove the induction base as in the proof of \Cref{inductionmhdrr}.
    Furthermore, we can apply the same argument for the induction step to this reduction as well.
    That is the set $Z$ is able to absorb the lower levels of uncertainty for an additional stage of recoverable robustness.

	Now, let $\exists X \forall Y \exists Z~C(X, Y, Z)$ be the $\exists\forall\exists$\sat{} instance of variable sets $X$, $Y$, and $Z$ as well as clauses $C$.
    In order to construct the instance of $P_B$, we store the gadgets defined by each mapping $f_{R_{\sats}^i, R_B^j}(r)$ in a table, for every element $r \in R_{\sats}^i$ of all relations $R_{\sats}^i$ residing in the \sat{} input.
    We can compute this table, because we have a polynomial time universe gadget reduction between \sat{} and $P_B$.
	With this table, we can now compute the scenarios in polynomial time with the following principle.
	The idea is to activate the variable and clause gadgets, whenever the variable or clause is active.
    As universe gadget reductions are modular, the (de)activation of a variable or clause is easily translatable into the instance of $P_B$:
    We remove the corresponding gadgets (and introduce necessary removal gadgets).
    While the (de)activation of variables from $X$ and $Z$ is straightforward, we have to take care about the (de)activation of the gadgets for variables from $Y$ according to the given structure of uncertainty.
	More precisely for each variable $y_i \in Y$, we deactivate the variable gadget for $y^t_i$ or $y^f_i$ and thus also the corresponding clauses as in \Cref{tab:deactivationClauses}.
    We model this operation with the $xor$-dependencies by adding
    \begin{itemize}
        \item the gadget of variable $y^t_i$ and its clauses together with the variable removal gadget of $y^t_i$ into the corresponding set $E_{i,1}$
        \item the gadget of variable $y^f_i$ and its clauses together with the variable removal gadget of $y^f_i$ into the corresponding set $E_{i,2}$.
    \end{itemize}
    
	\begin{table}[!ht]
		\begin{center}
			\begin{tabular}{l|l|l}
				& $y_i = y^t_i$ & $y_i = y^f_i$ \\
				\hline
				$xor$-dependencies & $y^t_i \leftrightarrow 1$ & $y^f_i \leftrightarrow 0$ \\
                & $y^t_i \leftrightarrow y^t_{i,1}$ & $y^f_i \leftrightarrow y^f_{i,0}$ \\
				& $y^t_i \leftrightarrow \overline y^t_{i,0}$ & $y^f_i \leftrightarrow \overline y^f_{i,1}$ \\
			\end{tabular}
		\end{center}
		\caption{The clauses to (de)activate for $xor$-dependencies.}
		\label{tab:deactivationClauses}
	\end{table}
    
	These (de)activations are possible because the reduction is modular.
    In the base scenario, the variable gadgets of $y^t_i$ and $y^f_i$ are inactive, while the removal gadgets of them are active.
    Consequently if the variable $y^t_i$ (respectively $y^f_i$) is activated by the set $E_{i,1}$ (respectively $E_{i,2}$) in an uncertainty scenario, the removal gadget of $y^t_i$ (respectively $y^f_i$) is removed and the variable gadget is added.
    Additionally, the corresponding clauses that contain $y^t_i$ (respectively $y^f_i$) are activated and we obtain the instance where the gadgets of $y^t_i$ (respectively $y^f_i$) are active simulating the variable accordingly.
    Note that these are exactly the clauses from \Cref{tab:deactivationClauses}, which also have the same structure (six clauses of two variables each by substituting $a \leftrightarrow b$ with clauses $(\overline a \lor b)$ and $(a \lor \overline b)$).
    Therefore, the variable gadgets respectively the variable removal gadgets of $y^t_i$ and $y^f_i$ have the same overall local solution size for each $i \in \{1, \ldots, |Y|\}$.
    By the correctness of the reduction from $\exists\forall\exists$\sat{} to \hdrr[\sat] and the universe gadget reduction from \sat{} to $P_B$, the reduction for each scenario remains correct and the one-to-one correspondence of the activeness between the variable and the variable gadget is uphold.

    In order to describe the scenarios formally, we summarize the reduction from $\exists\forall\exists$\sat{} to \hdrr[\sat] with $xor$-dependencies.
    First, we substituted all variables $y_i \in Y$ by two variables $y^t_i$ and $y^f_i$.
    We then replaced the occurrences of the literal $y_i$ in all clauses by $y^t_{i, 1}, y^t_{i, 0}, y^f_{i, 1}, y^f_{i, 0}$ resulting in a duplication of clauses.
    The additional variables $y^t_{i, 1}, y^t_{i, 0}, y^f_{i, 1}, y^f_{i, 0}$ were added to the set $Z'$, which was then defined by $Z' = Z \cup \{y^t_{i, 1}, y^t_{i, 0}, y^f_{i, 1}, y^f_{i, 0} \mid y_i \in Y\}$.
    The $xor$-dependencies were defined over the variables $y^t_i$ and $y^f_i$ and their clauses (compare \Cref{tab:deactivationClauses}).

    We are now ready to describe the $xor$-dependencies formally.
    For this, we have to define the base scenario $\sigma_0$ and a set $E'$ together with pairs of sets $(E_{i,1}, E_{i,2})$.
    For the base scenario, we compute the reduction instance of all variables and clauses that are available in any of the scenarios.
    We now deactivate the necessary gadgets to obtain an instance that corresponds to the base scenario.
    The base scenario contains only the variable gadgets from $X$.
    All other gadgets are removed and replaced by the corresponding removal gadgets, where the variable removal gadgets of $Y' = Y^t \cup Y^f$ are deactivated first.
    Accordingly, we define $\sigma_0 = \{Y_x \cup Y^{rem}_a \mid x \in X \text{ and } a \in Y' \cup Z' \cup C'|_X\}$.
    To construct the set $E'$, we observe that the variables of $X$ and $Z'$ and all clauses that contain variables from $X$ or $Z'$ are available in all uncertainty scenarios.
    Thus, the set $E'$ contains the variable gadgets of variables from $Z'$ as well as the clause gadgets for clauses containing $X$ and $Z'$.
    Remember that variables $y^t_{i, 1}, y^t_{i, 0}, y^f_{i, 1}, y^f_{i, 0}$, for $1 \leq i \leq |Y|$, were added to set $Z'$ and are thus active in exactly all uncertainty scenarios.
    Accordingly, we define $E' = \{Y^{rem}_a \cup Y_b \mid a \in Z' \cup C'|_X \text{ and } b \in Z', C'|_{X,Z'}\}$.
    
    At last, we define the pairs of sets $(E_{i,1}, E_{i,2})$, where either $E_{i,1}$ or $E_{i,2}$.
    For this, we consider a variable $y_i \in Y$.
    Each of these variables was split into two variables $y^t_i$ and $y^f_i$ and the corresponding clauses from \Cref{tab:deactivationClauses} were introduced.
    On the one hand, we define
    $$
        E_{i,1} = \{Y^{rem}_{y^t_i}, Y_{y^t_i}, Y_{(y^t_i \leftrightarrow y^t_{i,1})}, Y_{(y^t_i \leftrightarrow \overline y^t_{i,0})}, Y_{(y^t_i \leftrightarrow 1)}\}
    $$
    to include the variable gadget of $y^t_i$ and its (possibly empty) removal gadget.
    On the other hand, we define
    $$
        E_{i,2} = \{Y^{rem}_{y^f_i}, Y_{y^f_i}, Y_{(y^f_i \leftrightarrow y^f_{i,0})}, Y_{(y^f_i \leftrightarrow \overline y^f_{i,1})}, Y_{(y^f_i \leftrightarrow 0)}\}
    $$
    to include the variable gadget of $y^f_i$ as well as its (possibly empty) removal gadget.
    Note that in the base scenario the removal gadget of $y^t_i$ (respectively $y^f_i$) was active and the variable gadget inactive.
    Thus if $y^t_i$ (respectively $y^f_i$) is activated, then the variable gadget of $y^t_i$ (respectively $y^f_i$) is activated and the removal gadget is deactivated.

	As we have considered $f$ to be a modular reduction based on the $\exists\forall\exists$\sat{} instance $(L,C)$, we can set
    $$
        \kappa = size_f(X \cup \overline X, \emptyset) + size_f(X \cup \overline X \cup Y' \cup Z' \cup \overline Z', C') - 2 \sum_{x \in X} size(Y_x).
    $$
    This is correct, because the gadgets of variables from $X$ as well as the removal gadgets of $|Y|$ many variables from $Y'$ are present in all scenarios while all other variables are only present either in the base scenario or the uncertainty scenarios.
    Furthermore for each variable $y_i$, there is one variable gadget active (e.g. of $y^t_i$) and one variable removal gadget (e.g. of $y^f_i$) active.
    This guarantees that the solution size for all uncertainty scenarios is the same because there are always $|Y|$ variable gadgets active and $|Y|$ variable removal gadgets active and we assume the gadgets to have a constant local solution size.
    Now consider the solution on the variables on $X$.
    The base scenario activates only the variable gadgets of the set $X$.
    The solution size of $size_f(X \cup \overline X, \emptyset) + size_f(X \cup \overline X \cup Y' \cup Z' \cup \overline Z', C')$ is exactly the sum of the solution sizes of the base scenario and any of the uncertainty scenarios without considering $|Y|$ of the gadgets corresponding to variables from $Y'$.
    Since we subtract $2 \sum_{x \in X} size(Y_x)$ to define $\kappa$ and all elements but the elements from $X$ are $Y$ many variable gadgets from $Y'$, the partial solution to the variables of $X$ has to stay the same by switching from the base scenario to any of the uncertainty scenarios.
    In conclusion, the partial solution on $X$ is the same for all of the uncertainty scenarios.

    By the correctness of the underlying gadget reduction, we obtain the corresponding reduction instance for the base scenario and for each of the uncertainty scenarios.
    We can now use the same argumentation as in the reduction from $\exists\forall\exists$\sat{} to $\hdrr[\sat]$.
    First in the base scenario, a partial solution to the gadgets of the $\exists$-quantified variables $X$ is fixated because the Hamming distance is chosen accordingly.
    Therefore, the partial solution on the gadgets of $X$ is the same for the base scenario and all uncertainty scenarios such that the order of quantification is followed.
    This solution corresponds to an assignment of the variables $X$.
    In the uncertainty scenarios, the partial solution on the variable gadgets of $X$ has to be extended to a complete solution of the reduction instance.
    Furthermore, each of the possible truth assignments on the variables $Y$ is simulated by exactly the corresponding uncertainty scenario, in which the corresponding gadgets are activated.
    Again the local solutions on the variable gadgets of $Y'$ correspond to an assignment of the variables of $Y$.
    At last, we need to extend the solution on the variable gadgets of $X$ and $Y'$ to a complete solution.
    For this, a local solution to the variable gadgets of $Z'$ and the clause gadgets need to be found.
    Note that this can be freely chosen because the Hamming distance $\kappa$ is large enough.
    If and only if this is possible, we have a \textsc{yes}-instance for the corresponding uncertainty scenario, where the assignment on the variable gadgets of $X, Y'$, and $Z'$ in $\hdrr[P_B]$ correspond to an assignment to the variables $X, Y$, and $Z$ in $\exists\forall\exists$\sat{}.
    Note that in the case of weak modularity, the removal gadgets in $E_{i,1}$ for $y^t_i$ ($E_{i,2}$ for $y_i^f$, analogously) might stay active in the base and uncertainty scenarios.
    Since the variables $y^t_{i,0}$ and $y^t_{i,1}$ are assigned to true if $y^t_i$ is inactive, an existing local solution on the variable removal and clause removal gadgets can be extended in the uncertainty scenarios due to the solution extension property.
    For this, the local solution corresponding to the assignment to true is applied to the gadgets for $y^t_{i,0}$ and $y^t_{i,1}$ if $E_{i,1}$ is inactive in the uncertainty scenarios.
    Otherwise if the removal gadgets of $E_{i,1}$ are deactivated and the actual variable and clause gadgets activated, no solution on the now active variable and clause gadgets is fixed and thus can be freely assigned.
    It follows that the $\exists\forall\exists$\sat{} is satisfiable if and only if the base scenario as well as all uncertainty scenarios of the $\hdrr[P_B]$ instance are \textsc{yes}-instances such that the Hamming distance is at most $\kappa$.
\end{proof}

We can derive a similar result for $\Gamma$-set scenarios by reusing the construction from above.

\begin{theorem}\label{hdrr_gamma}
    If $\sat$ is universe gadget reducible to $P_B$ in polynomial time such that
    there is a corresponding solution size function, and the solution properties hold,
    then there is a polynomial time reduction from $\exists\forall\exists$\sat{} to $\hdrr[P_B]$, where the Hamming distance is defined over the solution ground set and the scenario encodings are $\Gamma$-set scenarios.
\end{theorem}
\begin{proof}
    We modify the construction from \Cref{hdrr_class_reduction} by introducing gadgets for the $s$-variables as in \Cref{sat-hdrr-sigma-3-hard2}.
    Let $E^t_{i} = E_{i,1}$ and $E^f_{i} = E_{i,2}$.
    Instead of having one $s_i$ for the pair of variables $y^t_i$ and $y^f_i$, we split the $s_i$ into two variables $s^t_i$ and $s^f_i$ to add the corresponding clauses into the sets $E^t_{i}$ corresponding to activate $y^t_i$ (i.e. setting $y_i$ to $1$) and $E^f_{i}$ corresponding to activate $y^f_i$ (i.e. setting $y_i$ to $0$).
    Specifically, we add the gadgets for the variables $s, s^t_1, s^f_i, \ldots, s^t_{|Y|}, s^f_{|Y|}$ and the corresponding clauses for $(s \lor (\overline s^t_1 \wedge \overline s^f_1) \lor \ldots \lor (\overline s^t_{|Y|} \wedge \overline s^f_{|Y|}))$, and $(\overline y^t_i \lor s^t_i)$ and $(y^f_i \lor s^f_i)$ for all $i \in \{1, \ldots, |Y|\}$.
    Because the variables $s, s_1, \ldots, s_{|Y|}$ are part of $Z'$ in \Cref{sat-hdrr-sigma-3-hard2}, we add the variable gadgets for $s, s^t_1, s^f_i, \ldots, s^t_{|Y|}, s^f_{|Y|}$ to the set $E'$.
    We additionally add the gadget of clause $(\overline y^t_i \lor s^t_i)$ to $E^t_{i}$, where all other gadgets (variable, clause, and variable removal) of $y^t_i$ still reside and we add analogously the gadget of clause $(y^f_i \lor s^f_i)$ to $E^f_{i}$.
    Accordingly if $E^t_{i}$ (respectively $E^f_i$) is not activated, $s^t_i$ (respectively $s^f_i$) can be set to $0$ because the clause $(\overline y^t_i \lor s^t_i)$ (respectively $(\overline y^f_i \lor s^f_i)$) is removed.
    Furthermore, the subformula $(s \lor (\overline s^t_1 \wedge \overline s^f_1) \lor \ldots \lor (\overline s^t_{|Y|} \wedge \overline s^f_{|Y|}))$ works equivalently to the clause $(s \lor \overline s_1 \lor \ldots \lor \overline s_{|Y|})$ in \Cref{sat-hdrr-sigma-3-hard2} because both $s^t_i$ and $s^f_i$ need to be set to $0$ such that $s$ can be set to $0$.
    We have to analyze the following three cases.
    
    If exactly one of $E^t_{i}$ and $E^f_{i}$ is active, then we have the same situation as in the proof of \Cref{hdrr_class_reduction} for $xor$-dependencies.
    Accordingly, we still have to prove the correctness for the other two cases.
    
    If there is an $i \in \{1, \ldots, |Y|\}$ such that neither of $E^t_{i}$ and $E^f_{i}$ are active, then the removal gadgets of both $y^t_i$ and $y^f_i$ are still active as in the base scenario.
    Thus, a correct instance of $P_B$ is induced.
    Because the \sat{} formula is trivially solvable by setting $s = 0$, the resulting instance of $P_B$ is a \textsc{yes}-instance.
    The Hamming distance of $\kappa$ is sufficient because at most $\Gamma = |Y|$ many gadgets of variables from $Y'$ are activated to which the solution has to be switched.

    If there is an $i \in \{1, \ldots, |Y|\}$ such that both $E^t_{i}$ and $E^f_{i}$ are active, the variable and clause gadgets of both $y^t_i$ and $y^f_i$ are active while the removal gadgets of both $y^t_i$ and $y^f_i$ are inactive.
    Again a correct instance of $P_B$ is induced, where both $y^t_i$ and $y^f_i$ are simulated to be active.
    The instance is, however, also a \textsc{yes}-instance due to the pigeonhole principle and $\Gamma \leq |Y|$.
    That is, there is a $j \in \{1, \ldots, |Y|\}$ such that neither $E^t_{j}$ nor $E^f_{j}$ are active.
    Accordingly, the corresponding \sat{} instance is satisfiable by setting $s = 0$ and a correct \textsc{yes}-instance of $P_B$ is induced as in the case above.

    Observe that the subformula $(s \lor (\overline s^t_1 \wedge \overline s^f_1) \lor \ldots \lor (\overline s^t_{|Y|} \wedge \overline s^f_{|Y|}))$ can be transformed into a CNF by Tseitin's transformation \cite{tseitin1983complexity}.
    Furthermore all clauses of length greater than four can be transformed into clauses of length three by Karp's reduction from \textsc{CNF-Sat} to \textsc{3Sat} \cite{DBLP:conf/coco/Karp72}.
    The newly introduced variables can be added to the set $E'$.
\end{proof}

With these structural properties in mind, we can construct a whole set of Hamming distance recoverable robust problems.
Note that the transitivity of the universe gadget reduction can be used to deduce further reductions.

\subsection{Gadget Reductions for Various Combinatorial Decision Problems}
In this section, we examine various but not all problems that are universe gadget reducible from \sat.
The reductions are all well-known results or modifications of well-known results.
We adapt these results to the universe gadget reduction framework to indicate that \Cref{hdrr_class_reduction,hdrr_gamma} are general statements.
We prove the following theorem by showing that a universe gadget reduction from \sat{} exists for all the problems.
For this, we use the transitivity of the reductions as illustrated in \Cref{fig:reductions}.

\begin{restatable}{theorem}{reductions}\label{thm:reductions}
	The $m$-Hamming distance recoverable robust version of the following problems are \NP-complete with polynomially computable scenarios and $\Sigma^p_{2m+1}$-complete with \textit{xor}-dependency scenarios or $\Gamma$-set-scenarios: 
	\vc,
	\ds,
	\fas,
	\fvs,
	\hs,
	\is,
	\cl,
	\sss,
	\ks,
	\pa,
	\tms,
	\udhc,
	\udhp,
	\tsp,
	\tdm,
	\xtc,
	\kddp \ $(k \geq 2)$,
	\tcol,
	\kcol,
	\cc.
\end{restatable}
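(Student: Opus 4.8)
The plan is to reduce the entire statement to the single structural result of \Cref{hdrr_class_reduction}: for each problem $P_B$ in the list it suffices to exhibit a polynomial-time universe gadget reduction $\sat \preceq^{UGR}_p P_B$ together with a compatible solution size function. Once such a reduction is in hand, \Cref{hdrr_class_reduction} immediately yields $\mhdrr[\sat] \preceq_p \mhdrr[P_B]$, and since $\mhdrr[\sat]$ with a linear number of $xor$-dependencies or with $\Gamma$-set scenarios is $\Sigma^P_{2m+1}$-hard (\Cref{inductionmhdrr}), the hardness transfers to $\mhdrr[P_B]$. Membership in $\Sigma^P_{2m+1}$ follows from the corresponding in-class membership theorems, since every listed problem lies in \NP. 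For the polynomially computable scenario case, \NP-membership is \Cref{theorem:inNP} and \NP-hardness follows by setting $\textsf{S}=\emptyset$ exactly as in \Cref{lemma9}; combining the two gives \NP-completeness. Thus the whole content of the theorem is the construction and verification of the universe gadget reductions themselves.

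Rather than build $\sat \preceq^{UGR}_p P_B$ from scratch for each listed problem, I would exploit the transitivity of universe gadget reductions (\Cref{UGD_transitive}) and arrange the reductions into the chain/DAG suggested by \Cref{fig:reductions}. The backbone consists of the classical Karp-style reductions recast as gadget reductions: \sat{} to \vc{}, from which \is{} and \cl{} follow by a complementation gadget and \ds{}, \hs{}, \fvs{} by local modifications; a traversal branch handles \udhc{}, \udhp{}, \tsp{}, \tddp{} and \kddp{} through the directed-Hamiltonian-cycle gadget construction; a colouring branch handles \tcol{}, \kcol{} and \cc{}; and a number/scheduling branch handles \sss{}, \ks{}, \pa{} and \tms{}. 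Each arrow in this diagram then only needs to be checked once against the two defining properties.

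The real work is verifying, for each reduction along the chain, that it is genuinely a universe gadget reduction. Concretely I would check (i) the unique-preimage property, so that every combinatorial element of $P_B$ is produced by exactly one element of $P_A$ or belongs to the constant gadget $Y_{const}$, and (ii) modularity, so that deleting the gadget of a removed element together with a (possibly empty) removal gadget $Y^{rem}_r$ again yields a correctly encoded instance; and I would confirm that each variable gadget, clause gadget and removal gadget contributes a fixed, element-independent amount to the solution, so that the solution-size function $size_f$ is well defined and $\kappa$ can be set as in the proof of \Cref{hdrr_class_reduction}. For the graph-theoretic reductions this is essentially local bookkeeping, since the literal, variable and clause gadgets are self-contained and each selects a constant number of solution elements.

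The step I expect to be the main obstacle is the number-theoretic branch, namely \sss{}, \ks{}, \pa{} and the scheduling problem \tms{}, where the standard reductions encode literals and clauses as large integers rather than as disjoint collections of combinatorial elements. There the universe must be chosen so that each literal and clause still corresponds to a separable, uniquely identifiable gadget (for instance the digit blocks of the constructed numbers), and so that removing a literal's gadget leaves a validly-weighted instance whose per-gadget solution-size contribution remains constant. Establishing the unique-preimage property, a constant per-gadget solution size, and a removal gadget that simulates deactivation without destroying the numeric structure is the delicate part; by contrast the Hamiltonian and \tsp{} gadgets, though larger, are structurally local, so their modularity is comparatively routine to verify.
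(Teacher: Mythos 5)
Your proposal is correct and takes essentially the same route as the paper: the paper likewise proves the theorem by recasting the classical reductions along the DAG of \Cref{fig:reductions} as universe gadget reductions from \sat{} (using transitivity, \Cref{UGD_transitive}), then combining \Cref{hdrr_class_reduction} with the hardness of \mhdrr[\sat] (\Cref{inductionmhdrr}), the $\Sigma^p_{2m+1}$ membership arguments, and \Cref{theorem:inNP,lemma9} for the polynomially computable case. Even your anticipated obstacle matches the paper's treatment: the \sss{}, \dhp{} and \tddp{} reductions are only weakly modular, and the paper handles them with explicit removal gadgets built on the digit-block and path/arc structure, exactly as you predicted.
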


\begin{figure}[!ht]
	\centering
	\scalebox{0.56}{
    	\begin{tikzpicture}[scale=1]
    		\node[text width=1cm,align=center] (0) at (0, 0) {\textsc{3Sat}};
    
    		\node[] (1) at (-6, -1) {\textsc{VC}};
    		\node[] (2) at (-3.75, -1) {\textsc{IS}};
    		\node[] (3) at (-1, -1) {\textsc{Subset Sum}};
    		\node[] (4) at (6, -1) {\textsc{3DM}};
    		\node[] (5) at (2, -1) {\textsc{DHP}};
    		\node[] (6) at (4, -1) {\textsc{2DDP}};
            \node[] (7) at (8, -1) {\textsc{3Col}};
    
    		\node[] (11) at (-8, -1.75) {\textsc{DS}};
    		\node[] (12) at (-7, -1.75) {\textsc{FAS}};
    		\node[] (13) at (-6, -1.75) {\textsc{FVS}};
    		\node[] (14) at (-5, -1.75) {\textsc{HS}};
    
    		\node[] (21) at (-3.75, -1.75) {\textsc{Clique}};
    
    		\node[] (31) at (-2, -1.75) {\textsc{Partition}};
    		\node[] (32) at (0, -1.75) {\textsc{Knapsack}};
    
    		\node[] (41) at (6, -1.75) {\textsc{X3C}};
    
    		\node[] (51) at (1.5, -1.75) {\textsc{DHC}};
    		\node[] (52) at (2.5, -1.75) {\textsc{UHP}};
    
    		\node[] (61) at (4, -1.75) {k\textsc{DDP}};
            
            \node[] (71) at (8, -1.75) {k\textsc{Col}};
            
    		\node[] (311) at (-2, -2.5) {\textsc{Scheduling}};
      
    		\node[] (511) at (1.5, -2.5) {\textsc{UHC}};
    		\node[] (512) at (1.5, -3.25) {\textsc{TSP}};
    	
            \node[] (711) at (8, -2.5) {\textsc{Clique Cover}};
            
    		\path[->] (0) edge (1);
    		\path[->] (0) edge (2);
    		\path[->] (0) edge (3);
    		\path[->] (0) edge (4);
    		\path[->] (0) edge (5);
    		\path[->] (0) edge (6);
    
            \path[->] (1) edge (11);
    		\path[->] (1) edge (12);
    		\path[->] (1) edge (13);
    		\path[->] (1) edge (14);
    
    		\path[->] (2) edge (21);
    
    		\path[->] (3) edge (31);
    		\path[->] (3) edge (32);
    		\path[->] (31) edge (311);
    
    		\path[->] (4) edge (41);
    
    		\path[->] (5) edge (51);
    		\path[->] (5) edge (52);
    		\path[->] (51) edge (511);
    		\path[->] (511) edge (512);
    
    		\path[->] (6) edge (61);
    
    		\path[->] (0) edge (7);
    		\path[->] (7) edge (71);
    		\path[->] (71) edge (711);
    	\end{tikzpicture}
	}
	\caption{The tree of gadget reductions for all considered problems.}
	\label{fig:reductions}
\end{figure}

\subsubsection{\textsc{Vertex Cover}}
As an introductory example, we take a close look at a universe gadget reduction of \sat{} to \vc, which was initially developed by Garey and Johnson \cite{DBLP:books/fm/GareyJ79}.
This example directly proves \Cref{vc_gadget_reducible}.
For the \vc{} reduction we use the very fine-grained universe gadget reduction for each combinatorial element.
In the following reductions, however, we directly use variable and clause gadgets as described in \Cref{sss:solutionSize}, to shorten our argumentation.

\begin{lemma}\label{vc_gadget_reducible}
    \sat{} is strongly modular universe gadget reducible to \vc{} such that the solution properties hold and a solution size function for this reduction exists.
\end{lemma}

\begin{proof}
	The problem \sat{} consists of the universe $L$ for the literals and the relations
	\begin{itemize}
		\item $R_{\ell,\overline{\ell}}$ that relates a literal $\ell$ with its negation $\overline{\ell}$,
		\item $R_{\ell, c}$ that relates a literal $\ell$ to a clause $c$, iff $\ell \in c$ and
		\item $R_{\ell, \ell', c}$ that relates literals $\ell$ and $\ell'$, iff $\ell, \ell' \in c$.
	\end{itemize}
	The problem \vc{}, on the other hand, consists of vertices $V$ and edges $E$ that form a graph $G =(V, E)$.
	Based on these universe and relations, the gadgets as in \Cref{fig:sat-vc:reduction:gadgets} can be found.
	Therefore, we define the mappings:
	$$
		f_{L, V}, f_{L, E}, f_{R_{\ell,\overline{\ell}}, V}, f_{R_{\ell,\overline{\ell}}, E}, f_{R_{\ell, c}, V}, f_{R_{\ell, c}, E}, f_{R_{\ell, \ell', c}, V}, f_{R_{\ell, \ell', c}, E}
	$$
	The dashed vertices in \Cref{fig:sat-vc:reduction:gadgets} indicate that these are part of a different gadget.
	\begin{figure}[!ht]
		\centering
		\begin{subfigure}[b]{0.45\textwidth}
			\centering
			\scalebox{1}{
				\begin{tikzpicture}[scale=1,
						node/.style = {shape=circle, draw, inner sep=0pt, minimum size=0.25cm},
						dashednode/.style = {shape=circle, draw=blue, dashed, inner sep=0pt, minimum size=0.25cm},
						textnode/.style = {shape=circle, draw, inner sep=0pt, minimum size=0.4cm},
						dashedtextnode/.style = {shape=circle, draw=blue, dashed, inner sep=0pt, minimum size=0.4cm},
						box/.style = {rectangle, fill=gray!20, rounded corners, fill opacity=1, inner sep=1pt}]
					\node[node] (l) at (0, 0) {}; \node[] at ($(l) + (0,0.33)$) {$v_\ell$};
				\end{tikzpicture}
			}
			\caption{
				Literal Gadget for literal $\ell \in L$.
				The corresponding mappings are $f_{L, V}: \ell \mapsto \{v_\ell\}$ and $f_{L, E}: \ell \mapsto \emptyset$.
			}
			\label{fig:vc:literal_gadget}
		\end{subfigure}
		\hfill
		\begin{subfigure}[b]{0.45\textwidth}
			\centering
			\scalebox{1}{
				\begin{tikzpicture}[scale=1,
						node/.style = {shape=circle, draw, inner sep=0pt, minimum size=0.25cm},
						dashednode/.style = {shape=circle, draw=blue, dashed, inner sep=0pt, minimum size=0.25cm},
						textnode/.style = {shape=circle, draw, inner sep=0pt, minimum size=0.4cm},
						dashedtextnode/.style = {shape=circle, draw=blue, dashed, inner sep=0pt, minimum size=0.4cm},
						box/.style = {rectangle, fill=gray!20, rounded corners, fill opacity=1, inner sep=1pt}]
					\node[dashednode] (t) at (0, 0) {}; \node[] at ($(t) + (0,0.33)$) {$v_\ell$};
					\node[dashednode] (f) at (1, 0) {}; \node[] at ($(f) + (0,0.33)$) {$v_{\overline{\ell}}$};
					\path[-] (t) edge node[right] {} (f);
				\end{tikzpicture}
			}
			\caption{
				Gadget for relation $R_{\ell,\overline{\ell}}$ for some literal $\ell \in L$.
				The corresponding mappings are $f_{R_{\ell,\overline{\ell}}, V}: (\ell, \overline{\ell}) \mapsto \emptyset$ and $f_{R_{\ell,\overline{\ell}}, E}: (\ell, \overline{\ell}) \mapsto \{\{v_\ell, v_{\overline{\ell}}\}\}$.
			}
			\label{fig:vc:literalnegation}
		\end{subfigure}
		\hfill\\
		\hfill\\
		\begin{subfigure}[b]{0.45\textwidth}
			\centering
			\scalebox{1}{
				\begin{tikzpicture}[scale=1,
						node/.style = {shape=circle, draw, inner sep=0pt, minimum size=0.25cm},
						dashednode/.style = {shape=circle, draw=blue, dashed, inner sep=0pt, minimum size=0.25cm},
						textnode/.style = {shape=circle, draw, inner sep=0pt, minimum size=0.4cm},
						dashedtextnode/.style = {shape=circle, draw=blue, dashed, inner sep=0pt, minimum size=0.4cm},
						box/.style = {rectangle, fill=gray!20, rounded corners, fill opacity=1, inner sep=1pt}]
					\node[dashednode] (l) at (0, 0) {}; \node[] at ($(l) + (0,0.33)$) {$v_\ell$};
					\node[node] (lc) at (1, -1) {}; \node[] at ($(lc) + (0.5,0)$) {$v_{\ell, c}$};
					\path[-, out=270, in=90] (l) edge (lc);
				\end{tikzpicture}
			}
			\caption{
				Gadget for relation $R_{\ell, c}$ for literal $\ell \in L$ with $\ell \in c \in C$.
				The corresponding mappings are $f_{R_{\ell, c}, V}: (\ell, c) \mapsto \{v_{\ell, c}\}$ and $f_{R_{\ell, c}, E}: (\ell, c) \mapsto \{\{v_\ell, v_{\ell, c}\}\}$.
			}
			\label{fig:literalclause}
		\end{subfigure}
		\hfill
		\begin{subfigure}[b]{0.45\textwidth}
			\centering
			\scalebox{1}{
				\begin{tikzpicture}[scale=1,
						node/.style = {shape=circle, draw, inner sep=0pt, minimum size=0.25cm},
						dashednode/.style = {shape=circle, draw=blue, dashed, inner sep=0pt, minimum size=0.25cm},
						textnode/.style = {shape=circle, draw, inner sep=0pt, minimum size=0.4cm},
						dashedtextnode/.style = {shape=circle, draw=blue, dashed, inner sep=0pt, minimum size=0.4cm},
						box/.style = {rectangle, fill=gray!20, rounded corners, fill opacity=1, inner sep=1pt}]
					\node[dashednode] (lc) at (0, 0) {}; \node[] at ($(lc) + (-0.1,0.33)$) {$v_{\ell, c}$};
					\node[dashednode] (l'c) at (1, 1) {}; \node[] at ($(l'c) + (0,0.33)$) {$v_{\ell', c}$};
					\path[-] (lc) edge node[right] {} (l'c);
				\end{tikzpicture}
			}
			\caption{
				Gadget for relation $R_{\ell, \ell', c}$ for literals $\ell, \ell' \in L$ with $\ell, \ell' \in c \in C$.
				The corresponding mappings are $f_{R_{\ell, \ell', c}, V}: (\ell, \ell', c) \mapsto \emptyset$ and $f_{R_{\ell, \ell', c}, E}: (\ell, \ell', c) \mapsto \{\{v_{\ell, c}, v_{\ell', c}\}\}$.
			}
			\label{fig:literalliteralclause}
		\end{subfigure}
		\centering
		\caption{The gadgets for the universe and all relations for the reduction from \sat{} to \vc{}.}
		\label{fig:sat-vc:reduction:gadgets}
	\end{figure}

    A complete example can be found in \Cref{fig:sat-vc:reduction}.
	\begin{figure}[!ht]
		\centering
		\scalebox{1}{
				\begin{tikzpicture}[scale=1,
						node/.style = {shape=circle, draw, inner sep=0pt, minimum size=0.25cm},
						dashednode/.style = {shape=circle, draw=blue, dashed, inner sep=0pt, minimum size=0.25cm},
						textnode/.style = {shape=circle, draw, inner sep=0pt, minimum size=0.4cm},
						dashedtextnode/.style = {shape=circle, draw=blue, dashed, inner sep=0pt, minimum size=0.4cm},
						box/.style = {rectangle, fill=gray!20, rounded corners, fill opacity=1, inner sep=1pt}]
				\node[node] (lt1) at (0, 0) {}; \node[] at ($(lt1) + (0,0.33)$)  {$v_{\ell_1}$};
				\node[node] (lf1) at (1, 0) {}; \node[] at ($(lf1) + (0,0.33)$)  {$v_{\overline{\ell}_1}$};
				\node[node] (lt2) at (2, 0) {}; \node[] at ($(lt2) + (0,0.33)$)  {$v_{\ell_2}$};
				\node[node] (lf2) at (3, 0) {}; \node[] at ($(lf2) + (0,0.33)$)  {$v_{\overline{\ell}_2}$};
				\node[node] (lt3) at (4, 0) {}; \node[] at ($(lt3) + (0,0.33)$)  {$v_{\ell_3}$};
				\node[node] (lf3) at (5, 0) {}; \node[] at ($(lf3) + (0,0.33)$)  {$v_{\overline{\ell}_3}$};
				\path[-] (lt1) edge node[right] {} (lf1);
				\path[-] (lt2) edge node[right] {} (lf2);
				\path[-] (lt3) edge node[right] {} (lf3);
				\node[node] (c1lt1) at (1, -1.5) {}; \node[] at ($(c1lt1) + (0.5,0.2)$) {$v_{\ell_1, c_1}$};
				\node[node] (c1lt2) at (0.25, -2.25) {}; \node[] at ($(c1lt2) + (0,-0.33)$)  {$v_{\ell_2, c_1}$};
				\node[node] (c1lt3) at (1.75, -2.25) {}; \node[] at ($(c1lt3) + (0,-0.33)$)  {$v_{\ell_3, c_1}$};
				\node[node] (c2lf1) at (4, -1.5) {}; \node[] at ($(c2lf1) + (-0.45,0.25)$)  {$v_{\overline{\ell}_1, c_2}$};
				\node[node] (c2lf2) at (3.25, -2.25) {}; \node[] at ($(c2lf2) + (0,-0.33)$)  {$v_{\overline{\ell}_2, c_2}$};
				\node[node] (c2lt3) at (4.75, -2.25) {}; \node[] at ($(c2lt3) + (0,-0.33)$)  {$v_{\ell_3, c_2}$};
				\path[-] (c1lt1) edge node[right] {} (c1lt2);
				\path[-] (c1lt1) edge node[right] {} (c1lt3);
				\path[-] (c1lt2) edge node[right] {} (c1lt3);
				\path[-] (c2lf1) edge node[right] {} (c2lf2);
				\path[-] (c2lf1) edge node[right] {} (c2lt3);
				\path[-] (c2lf2) edge node[right] {} (c2lt3);
				\path[-, out=270, in=90] (lt1) edge node[right] {} (c1lt1);
				\path[-, out=270, in=90] (lt2) edge node[right] {} (c1lt2);
				\path[-, out=270, in=90] (lt3) edge node[right] {} (c1lt3);
				\path[-, out=270, in=90] (lf1) edge node[right] {} (c2lf1);
				\path[-, out=270, in=90] (lf2) edge node[right] {} (c2lf2);
				\path[-, out=270, in=90] (lt3) edge node[right] {} (c2lt3);
			\end{tikzpicture}
		}
		\caption{The reduction graph for \sat{} formula $C=\{\{\ell_1, \ell_2, \ell_3\}, \{\overline{\ell_1}, \overline{\ell_2}, \ell_3\}\}$.}
		\label{fig:sat-vc:reduction}
	\end{figure}
    On the other hand, the reduction based on variable and clause gadgets can also be established.
    For this, the relations from above are combined in the gadgets.
	\begin{itemize}
		\item The universe $L$ is combined with relation $R_{\ell,\overline{\ell}}$ to a \textit{variable gadget for variable $x \in X$}.
		\item The relations $R_{\ell, c}$ and $R_{\ell, \ell', c}$ are combined to one clause gadget that connects the corresponding variable gadget correctly to a clause $c \in C$.
	\end{itemize}
	These gadgets are depicted in \Cref{fig:sat-vc:reduction:gadgets:coarse}, in which the dashed vertices indicate that these are part of a different gadget.
    \begin{figure}[!ht]
        \centering
        \begin{subfigure}[b]{0.45\textwidth}
            \centering
	        \scalebox{1}{
				\begin{tikzpicture}[scale=1,
						node/.style = {shape=circle, draw, inner sep=0pt, minimum size=0.25cm},
						dashednode/.style = {shape=circle, draw=blue, dashed, inner sep=0pt, minimum size=0.25cm},
						textnode/.style = {shape=circle, draw, inner sep=0pt, minimum size=0.4cm},
						dashedtextnode/.style = {shape=circle, draw=blue, dashed, inner sep=0pt, minimum size=0.4cm},
						box/.style = {rectangle, fill=gray!20, rounded corners, fill opacity=1, inner sep=1pt}]
                    \node[node] (t) at (0, 0) {}; \node[] at ($(t) + (0,0.33)$) {$v_\ell$};
                    \node[node] (f) at (1, 0) {}; \node[] at ($(f) + (0,0.33)$) {$v_{\overline{\ell}}$};
                    \path[-] (t) edge node[right] {} (f);
                \end{tikzpicture}
	        }
            \caption{
                Variable Gadget representing literals $\ell, \overline{\ell} \in L$.
                The corresponding mappings are $f_{L, V}: \ell \mapsto \{v_\ell, v_{\overline{\ell}}\}$ and $f_{L, E}: \ell \mapsto \{\{v_\ell, v_{\overline{\ell}}\}\}$.
            }
            \label{fig:vc:variable_gadget}
        \end{subfigure}
        \hfill
        \begin{subfigure}[b]{0.45\textwidth}
            \centering
		    \scalebox{1}{
				\begin{tikzpicture}[scale=1,
						node/.style = {shape=circle, draw, inner sep=0pt, minimum size=0.25cm},
						dashednode/.style = {shape=circle, draw=blue, dashed, inner sep=0pt, minimum size=0.25cm},
						textnode/.style = {shape=circle, draw, inner sep=0pt, minimum size=0.4cm},
						dashedtextnode/.style = {shape=circle, draw=blue, dashed, inner sep=0pt, minimum size=0.4cm},
						box/.style = {rectangle, fill=gray!20, rounded corners, fill opacity=1, inner sep=1pt}]
	                \node[node] (1) at (0, 0) {}; \node[] at ($(1) + (0.45,0.2)$) {$v_{\ell_1, c}$}; 
	                \node[node] (2) at (-0.75, -0.75) {}; \node[] at ($(2) + (0,-0.33)$) {$v_{\ell_2, c}$}; 
	                \node[node] (3) at (0.75, -0.75) {}; \node[] at ($(3) + (0,-0.33)$) {$v_{\ell_3, c}$};
	                \path[-] (1) edge node[right] {} (2);
	                \path[-] (1) edge node[right] {} (3);
	                \path[-] (2) edge node[right] {} (3);
	                \node[dashednode] (11) at (0, 0.75) {}; \node[] at ($(11) + (0,0.33)$) {$v_{\ell_1}$};
	                \node[dashednode] (22) at (-1, 0.75) {}; \node[] at ($(22) + (0,0.33)$)  {$v_{\ell_2}$};
	                \node[dashednode] (33) at (1, 0.75) {}; \node[] at ($(33) + (0,0.33)$)  {$v_{\ell_3}$};
	                \path[-, out=90, in=270] (1) edge node[right] {} (11);
	                \path[-, out=90, in=270] (2) edge node[right] {} (22);
	                \path[-, out=90, in=270] (3) edge node[right] {} (33);
	            \end{tikzpicture}
		    }
            \caption{
                Clause gadget representing clause $c \in C$.
                The corresponding mappings are defined by
                $
                    f_{C, V}: \ell \mapsto \{v_{\ell_1, c}, v_{\ell_2, c}, v_{\ell_3, c}\}
                $
                and by
                $
                    f_{C, E}: \ell \mapsto \{\{v_{\ell_1, c}, v_{\ell_2, c}\}, \{v_{\ell_2, c}, v_{\ell_3, c}\}, \{v_{\ell_3, c}, v_{\ell_1, c}\},\\ \{v_{\ell_1}, v_{\ell_1, c}\}, \{v_{\ell_2}, v_{\ell_2, c}\}, \{v_{\ell_3}, v_{\ell_3, c}\}\}.
                $
            }
            \label{fig:vc:clausegadget}
        \end{subfigure}
        \centering
        \caption{\centering Gadgets for universe and relations for the \sat-\vc{} reduction}
        \label{fig:sat-vc:reduction:gadgets:coarse}
    \end{figure}
    Observe that the gadgets only combine the more fine-grained relations and the overall reduction stays the same.
    That is, the reduction is overall the same for both views and can be found in \Cref{fig:sat-vc:reduction} as well.
    The existence of these variable gadgets and a clause gadgets also shows the strong modularity of this reduction:
    One can easily remove the variable gadget and all clauses gadgets containing that variable or removing just one clause gadget.
    The resulting graph is the correct reduction graph of the corresponding \sats{} instance.

    The solution size function for each gadget can be defined as follows.
    A solution includes one vertex for each literal in the solution of the \sats{} instance: $v_{\ell_i}$ is included if and only if $x_i$ is assigned to true, and $v_{\overline \ell_i}$ is included if and only if $x_i$ is assigned to false.
    Additionally to satisfy the clause gadgets, two of the vertices of the 3-clique need to be included.
    If a vertex $v_\ell$ is in the solution, then the edge $\{v_\ell, v_{\ell,c}\}$ is covered.
    Therefore, the vertices $v_{\ell',c}$ and $v_{\ell'',c}$ can be taken into the solution in order to cover the 3-clique and the incident edges.
    If the clause is not satisfied, then non of $v_\ell$, $v_{\ell'}$, and $v_{\ell''}$ are in the solution and all three vertices $v_{\ell,c}$,$v_{\ell',c}$, and $v_{\ell'',c}$ need to be included to cover all edges invalidating the solution.
    Thus, $size_f(L, C) = |L|/2 + 2|C|$.
    This also fulfills the necessary conditions by the modularity of the gadget reduction.
    Correspondingly, the feasible solutions of the scenarios are defined with the help of the solution size function.
    Concretely, a solution to the \textsc{Vertex Cover} instance is feasible if the vertex cover has size at most $size_f(L,C)$.
\end{proof}

\begin{lemma}
    \vc{} is strongly modular universe gadget reducible to \ds{} such that the solution properties hold and a solution size function for this reduction exists.
\end{lemma}
\begin{proof}
	There is a folklore reduction that is a universe gadget reduction.
	For every edge $\{u, v\}$ between vertices $u, v \in V$, the reduction adds two vertices $uv^1$ and $uv^2$ together with edges $\{uv^i, u\}, \{uv^i, v\}$, $i \in \{1,2\}$.
	The universe elements of both problems are the vertices $V$ and the relations are the edges $E$.
	Thus, there are vertex gadgets, see \Cref{fig:dsvc:vertex_gadget}, defined by
	$$
		f_{V, V}, f_{V, E}
	$$
	and the edge gadgets, see \Cref{fig:dsvc:edge_gadget},
	$$
		f_{E, V}, f_{E, E}.
	$$

	\begin{figure}[!ht]
		\begin{subfigure}[b]{0.45\textwidth}
			\centering
			\scalebox{1}{
				\begin{tikzpicture}[scale=1,
						node/.style = {shape=circle, draw, inner sep=0pt, minimum size=0.25cm},
						dashednode/.style = {shape=circle, draw=blue, dashed, inner sep=0pt, minimum size=0.25cm},
						textnode/.style = {shape=circle, draw, inner sep=0pt, minimum size=0.4cm},
						dashedtextnode/.style = {shape=circle, draw=blue, dashed, inner sep=0pt, minimum size=0.4cm},
						box/.style = {rectangle, fill=gray!20, rounded corners, fill opacity=1, inner sep=1pt}]
					\node[node] (v) at (0, 0) {}; \node[] at ($(v) + (0,0.33)$) {$v$};
				\end{tikzpicture}
			}
			\caption{
				Vertex Gadget for $v \in V$.
				The corresponding mappings are $f_{V, V}: v \mapsto \{v\}$ and $f_{V, E}: \{u, v\} \mapsto \emptyset$.
			}
			\label{fig:dsvc:vertex_gadget}
		\end{subfigure}
		\hfill
		\begin{subfigure}[b]{0.45\textwidth}
			\centering
			\scalebox{1}{
				\begin{tikzpicture}[scale=1,
						node/.style = {shape=circle, draw, inner sep=0pt, minimum size=0.25cm},
						dashednode/.style = {shape=circle, draw=blue, dashed, inner sep=0pt, minimum size=0.25cm},
						textnode/.style = {shape=circle, draw, inner sep=0pt, minimum size=0.4cm},
						dashedtextnode/.style = {shape=circle, draw=blue, dashed, inner sep=0pt, minimum size=0.4cm},
						box/.style = {rectangle, fill=gray!20, rounded corners, fill opacity=1, inner sep=1pt}]
					\node[dashednode] (u) at (0, 0) {}; \node[] at ($(u) + (0,0.33)$) {$u$};
					\node[dashednode] (v) at (2, 0) {}; \node[] at ($(v) + (0,0.33)$) {$v$};
					\node[node] (uv1) at (1, 1.33) {}; \node[] at ($(uv1) + (0,0.33)$) {$uv^1$};
					\node[node] (uv2) at (1, 0.33) {}; \node[] at ($(uv2) + (0,0.33)$) {$uv^2$};
					\path[-] (u) edge node[right] {} (v);
					\path[-] (u) edge node[right] {} (uv1);
					\path[-] (uv1) edge node[right] {} (v);
					\path[-] (u) edge node[right] {} (uv2);
					\path[-] (uv2) edge node[right] {} (v);
				\end{tikzpicture}
			}
			\caption{
				Edge Gadget for $v \in V$.
				The corresponding mappings are $f_{E, V}: \{u, v\} \mapsto \{\{uv^1\},\{uv^2\}\}$ and $f_{E, E}: \{u, v\} \mapsto \{\{u, v\}, \{u, uv^1\}, \{uv^1, v\}, \{u, uv^2\}, \{uv^2, v\}\}$.
			}
			\label{fig:dsvc:edge_gadget}
		\end{subfigure}
	\end{figure}
	
	It is easy to see that both properties of a universe gadget reduction are fulfilled.
	The gadgets are disjoint.
	Furthermore, removing a vertex (and its incident edges) results in removing the corresponding vertex gadget and edge gadgets.
    Accordingly, the resulting instance remains correct for this reduction.
    Removing only an edge results in removing the edge gadget, which is also correct.
	At last, we consider the solution.
    The solution size function remains $size_f(L, C) = |L|/2 + 2|C|$ as in the \textsc{Vertex Cover} reduction because the solution of the \textsc{Dominating Set} and \textsc{Vertex Cover} build up a one-to-one correspondence.
    That is, a vertex $v$ is part of a vertex cover if and only if $v$ is part of a dominating set.
    The feasible solutions are accordingly defined by the dominating sets of size at most $size_f(L,C)$.
\end{proof}

\begin{lemma}
    \vc{} is strongly modular universe gadget reducible to \fas{} such that the solution properties hold and a solution size function for this reduction exists.
\end{lemma}
\begin{proof}
	The reduction of Karp \cite{DBLP:conf/coco/Karp72} is a universe gadget reduction.
	\vc{} consists of vertices $V$ and edges $E$.
	\fas{} consists of vertices $V'$ and arcs $A'$.
	The reduction maps every vertex $v \in V$ to two vertices $v_0, v_1 \in V'$ and one arc $(v_0, v_1) \in A'$.
	Furthermore, we map each edge $\{u,v\} \in E$ to two arcs $(u_1, v_0),(v_0,u_1) \in A'$.
	Thus, each edge $\{u,v\} \in E$ induces a cycle of four arcs, which has to be disconnected by removing one of the arcs.
	Observe that the arc $(v_0, v_1) \in A'$ is contained in all cycles induced by incident edges $e \in E$ of $v \in V$.
	Thus, it is always favorable to remove arcs $(v_0, v_1) \in A'$ which corresponds to taking $v \in V$ into the vertex cover.
	Because of this one-to-one correspondence between an arc $(v_0, v_1)$ in \fas{} and a vertex $v$ in \vc{} in the solution, the solution size remains $size_f(L, C) = |L|/2 + 2|C|$.
    Furthermore, the one-to-one correspondence between the elements and their gadgets guarantees the modularity and that all pre-images of all gadgets are unique.
    The feasible solutions are accordingly defined by the feedback arc sets of size at most $size_f(L,C)$.
\end{proof}

\begin{lemma}
    \vc{} is strongly modular universe gadget reducible to \fvs{} such that the solution properties hold and a solution size function for this reduction exists.
\end{lemma}
\begin{proof}
	The reduction of Karp \cite{DBLP:conf/coco/Karp72} is a universe gadget reduction.
	Again, \vc{} consists of vertices $V$ and edges $E$.
	On the other hand, \fvs{} consists of vertices $V'$ and arcs $A'$.
	The reduction maps every vertex $v \in V$ to vertex $v \in V'$ and every edge $\{u, v\} \in E$ is mapped to two arcs $(u, v)$ and $(v, u)$ in $A'$.
	Because of the one-to-one correspondence in the solution between vertex $v \in V$ in \vc{} and vertex $v \in V'$ in \fvs{}, the solution size remains $size_f(L, C) = |L|/2 + 2|C|$.
	Furthermore, the correspondence between the elements and their gadgets guarantees modularity and that all pre-images of all gadgets are unique.
    The feasible solutions are accordingly defined by all feedback vertex sets of size at most $size_f(L,C)$.
\end{proof}

\begin{lemma}
    \vc{} is strongly modular universe gadget reducible to \hs{} such that the solution properties hold and a solution size function for this reduction exists.
\end{lemma}
\begin{proof}
The reduction of Karp \cite{DBLP:conf/coco/Karp72} from \vc{} to \hs{} is a universe gadget reduction.
\vc{} consists of vertices $V$ and edges $E$.
The universe of \hs{} is a set $U$ and the relations are subsets $s_i \subseteq U$ for $1 \leq i \leq r$.
Every vertex $v \in V$ is mapped to a corresponding element $v \in U$ and every edge $(u, v) \in E$ is mapped to a subset $s = \{u,v\} \subseteq U$.
By the one-to-one correspondence of vertex $v$ and element $v$ in the universe of \hs{} as well as the edge $\{u,v\}$ and the subset $s = \{u,v \}$, the gadgets are disjoint, uniquely retraceable to their origin, and the reduction is modular.
At last, we consider the solution.
The solution size function remains $size_f(L, C) = |L|/2 + 2|C|$ because the elements of \hs{} and the vertices in \vc{} build up a one-to-one correspondence.
The feasible solutions are accordingly defined by the hitting sets of size at most $size_f(L,C)$.
\end{proof}

\subsubsection{\textsc{Independent Set}}
\begin{lemma}
    \sat{} is strongly modular universe gadget reducible to \is{} such that the solution properties hold and a solution size function for this reduction exists.
\end{lemma}
\begin{proof}
   For \is{}, we reuse the reduction from \sat{} to \vc{} by Garey and Johnson \cite{DBLP:books/fm/GareyJ79}.
   For \sat{}, we use the literals as universe elements and the relations $R_{\ell, \overline{\ell}}$, which relates a literal and its negation, $R_{\overline{\ell}, c}$, which relates a clause with the negation of the its literals, $R_{\ell, \overline{\ell}, c}$, which relates the literal and its negation with the clauses the literal is in.
   \is{}, on the other side, consists of vertices $V$ and edges $E$.
   This results in the mappings for the variable gadget, see \Cref{fig:is:variable_gadget},
   $$
       f_{L, V}, f_{L, E}, f_{R_{\ell,\overline{\ell}}, V}, f_{R_{\ell,\overline{\ell}}, E},
   $$
   and the clause gadget, see \Cref{fig:is:clause_gadget},
   $$
       f_{R_{\overline{\ell}, c}, V}, f_{R_{\overline{\ell}, c}, E}, f_{R_{\ell, \ell', c}, V}, f_{R_{\ell, \ell', c}, E}.
   $$

	\begin{figure}[!ht]
		\centering
		\begin{subfigure}[b]{0.45\textwidth}
			\centering
			\scalebox{1}{
				\begin{tikzpicture}[scale=1,
						node/.style = {shape=circle, draw, inner sep=0pt, minimum size=0.25cm},
						dashednode/.style = {shape=circle, draw=blue, dashed, inner sep=0pt, minimum size=0.25cm},
						textnode/.style = {shape=circle, draw, inner sep=0pt, minimum size=0.4cm},
						dashedtextnode/.style = {shape=circle, draw=blue, dashed, inner sep=0pt, minimum size=0.4cm},
						box/.style = {rectangle, fill=gray!20, rounded corners, fill opacity=1, inner sep=1pt}]
                    \node[node] (t) at (0, 0) {}; \node[] at ($(t) + (0,0.33)$) {$v_\ell$};
                    \node[node] (f) at (1, 0) {}; \node[] at ($(f) + (0,0.33)$) {$v_{\overline{\ell}}$};
                    \path[-] (t) edge node[right] {} (f);
                \end{tikzpicture}
	        }
			\caption{
				Variable Gadget representing literals $\ell, \overline{\ell} \in L$.
				The corresponding mappings are $f_{L, V}: \ell \mapsto \{v_\ell, v_{\overline{\ell}}\}$ and $f_{L, E}: \ell \mapsto \{\{v_\ell, v_{\overline{\ell}}\}\}$.
			}
			\label{fig:is:variable_gadget}
		\end{subfigure}
		\hfill
		\begin{subfigure}[b]{0.45\textwidth}
			\centering
			\scalebox{1}{
				\begin{tikzpicture}[scale=1,
						node/.style = {shape=circle, draw, inner sep=0pt, minimum size=0.25cm},
						dashednode/.style = {shape=circle, draw=blue, dashed, inner sep=0pt, minimum size=0.25cm},
						textnode/.style = {shape=circle, draw, inner sep=0pt, minimum size=0.4cm},
						dashedtextnode/.style = {shape=circle, draw=blue, dashed, inner sep=0pt, minimum size=0.4cm},
						box/.style = {rectangle, fill=gray!20, rounded corners, fill opacity=1, inner sep=1pt}]
	                \node[node] (1) at (0, 0) {}; \node[] at ($(1) + (0.45,0.2)$) {$v_{\ell_1, c}$}; 
	                \node[node] (2) at (-0.75, -0.75) {}; \node[] at ($(2) + (0,-0.33)$) {$v_{\ell_2, c}$}; 
	                \node[node] (3) at (0.75, -0.75) {}; \node[] at ($(3) + (0,-0.33)$) {$v_{\ell_3, c}$};
	                \path[-] (1) edge node[right] {} (2);
	                \path[-] (1) edge node[right] {} (3);
	                \path[-] (2) edge node[right] {} (3);
	                \node[dashednode] (11) at (0, 0.75) {}; \node[] at ($(11) + (0,0.33)$) {$v_{\overline\ell_1}$};
	                \node[dashednode] (22) at (-1, 0.75) {}; \node[] at ($(22) + (0,0.33)$)  {$v_{\overline\ell_2}$};
	                \node[dashednode] (33) at (1, 0.75) {}; \node[] at ($(33) + (0,0.33)$)  {$v_{\overline\ell_3}$};
	                \path[-, out=90, in=270] (1) edge node[right] {} (11);
	                \path[-, out=90, in=270] (2) edge node[right] {} (22);
	                \path[-, out=90, in=270] (3) edge node[right] {} (33);
	            \end{tikzpicture}
		    }
			\caption{
                Clause gadget representing clause $c \in C$.
                The corresponding mappings are defined by $f_{C, V}: \ell \mapsto \{v_{\ell_1, c}, v_{\ell_2, c}, v_{\ell_3, c}\}$ and by $f_{C, E}: \ell \mapsto \{\{v_{\ell_1, c}, v_{\ell_2, c}\}, \{v_{\ell_2, c}, v_{\ell_3, c}\}, \{v_{\ell_3, c}, v_{\ell_1, c}\},\\ \{v_{\overline{\ell_1}}, v_{\ell_1, c}\}, \{v_{\overline{\ell_2}}, v_{\ell_2, c}\}, \{v_{\overline{\ell_3}}, v_{\ell_3, c}\}\}$.
			}
			\label{fig:is:clause_gadget}
		\end{subfigure}
		\centering
		\caption{Gadgets for universe and relations for the reduction from \sat{} to \is{}}
		\label{fig:sat-is:reductiongadgetscoarse}
	\end{figure}

	Analogously to the \vc{} reduction, this reduction is a universe gadget reduction.
	Furthermore, the solution size function includes one vertex for each variable gadget and one vertex for each clause gadget.
    A vertex $v_{\ell_i}$ is included in the solution if and only if $x_i$ is assigned to true,
    and $v_{\overline \ell_i}$ is included in the solution if and only if $x_i$ is assigned to false.
    Because $v_\ell$ is in the solution and thus not $v_{\overline \ell}$, the vertex $v_{\ell, c}$ can be included simulating the satisfaction of the clause.
    If a clause $c \in C$ is not satisfied, then all vertices $v_{\overline \ell}$, $v_{\overline \ell'}$, and $v_{\overline \ell''}$ for $\ell,\ell', \ell'' \in c$ are in the independent set such that non of $v_{\ell,c}$, $v_{\ell',c}$, $v_{\ell'',c}$ can be taken into the solution.
	Thus, $size_f(L,C) = |L|/2 + |C|$.
	With the same arguments as for the \vc{} reduction, the solution size function is modular.
    We define the feasible solutions by the independent sets of size at least $size_f(L,C) = |L|/2 + |C|$.
\end{proof}

\begin{lemma}
    \is{} is strongly modular universe gadget reducible to \cl{} such that the solution properties hold and a solution size function for this reduction exists.
\end{lemma}
\begin{proof}
	For \cl{}, we reuse the duality between \vc{}, \is{}, and \cl{} as described by Garey and Johnson \cite{DBLP:books/fm/GareyJ79}.
	The problem \is{} consists of a graph with vertices $V$ and edges $E$.
	On the other hand, we define \cl{} with vertices $V'$ as universe but a different relation $\overline E \subseteq V' \times V'$, the set of non-edges.
	This definition of \cl{} allows us to use the equivalence as universe gadget reduction.

	For the reduction, we map every vertex $v \in V$ to the vertex $v' \in V'$ and we map every edge $e \in E$ to a non-edge $\overline e \in \overline E$.
	Thus, we have a one-to-one correspondence between the vertices and the edges and non-edges.
	This one-to-one correspondence also holds for the solution.
	That is, every solution of one problem is also a solution to the other problem.
	By this one-to-one correspondence, the modularity, the pre-image uniqueness and the solution size of $size_f(L,C) = |L|/2 + |C|$ remains.
    The feasible solutions are accordingly defined by cliques of size at least $size_f(L,C)$.
\end{proof}

\subsubsection{\textsc{Subset Sum}}
\begin{lemma}
    \sat{} is weakly modular universe gadget reducible to \sss{} such that the solution properties hold and a solution size function for this reduction exists.
\end{lemma}
\begin{proof}
    We use a modification of the reduction by Sipser \cite{DBLP:books/daglib/0086373} from \sat{} to \sss{}.
    For \sat{}, we use the literals as universe elements and the relations $R_{\ell, \overline{\ell}}$, which relates a literal and its negation, the clause relation $R_c$, which is a unary relation on the clauses, and $R_{\ell, c}$, which relates a clause with the negation of its literals.
    
    \sss{}, on the other side, consists of binary numbers of $\{0, 1\}^{t}$.
    For the sake of simplicity, the reduction description uses non-binary numbers.
    Numbers that are bigger than one are easily translatable in corresponding binary numbers with an offset such that a possible carry has no influence.
    This results in the mappings for the variable gadget, see \Cref{fig:ss:literal_gadget},
    $$
        f_{L, \{0, 1\}^{t}},
    $$
    and the clause gadget, see both \Cref{fig:ss:clause_gadget,fig:ss:literal_clause_gadget},
    $$
        f_{R_{c}, \{0, 1\}^{t}}, f_{R_{\ell, c}, \{0, 1\}^{t}}.
    $$
    \begin{figure}[!ht]
        \centering
        \begin{tabular}{c|c|c|c|c||c|c|c|c||c|c|c|c}
            $\ell_1$ & $\overline{\ell_1}$ & $\ldots$ & $\ell_n$ & $\overline{\ell_n}$ & $x_1$ & $x_2$ & $\ldots$ & $x_n$ & $c_1$ & $c_2$ & $\ldots$ & $c_m$\\
            \hline
            1 & 0 & \ldots & 0 & 0 & 1 & 0 & \ldots & 0 & 0 & 0 & $\ldots$ & 0\\
            \hline
            0 & 1 & \ldots & 0 & 0 & 1 & 0 & \ldots & 0 & 0 & 0 & $\ldots$ & 0\\
        \end{tabular}
        \caption{
            Variable Gadget representing literals $\ell, \overline{\ell} \in L$ (here for $\ell_1$ and $\overline{\ell}$).
        }
        \label{fig:ss:literal_gadget}
    \end{figure}
    \begin{figure}[!ht]
        \centering
        \begin{tabular}{c|c|c|c|c||c|c|c|c||c|c|c|c}
            $\ell_1$ & $\overline{\ell_1}$ & $\ldots$ & $\ell_n$ & $\overline{\ell_n}$ & $x_1$ & $x_2$ & $\ldots$ & $x_n$ & $c_1$ & $c_2$ & $\ldots$ & $c_m$\\
            \hline
            0 & 0 & \ldots & 0 & 0 & 0 & 0 & \ldots & 0 & 11 & 0 & $\ldots$ & 0\\
            \hline
            0 & 0 & \ldots & 0 & 0 & 0 & 0 & \ldots & 0 & 12 & 0 & $\ldots$ & 0\\
            \hline
            0 & 0 & \ldots & 0 & 0 & 0 & 0 & \ldots & 0 & 13 & 0 & $\ldots$ & 0\\
        \end{tabular}
        \caption{
            Clause Gadget for $c \in C$ (here for $c_1$).
        }
        \label{fig:ss:clause_gadget}
    \end{figure}
    \begin{figure}[!ht]
        \centering
        \begin{tabular}{c|c|c|c|c||c|c|c|c||c|c|c|c}
            $\ell_1$ & $\overline{\ell_1}$ & $\ldots$ & $\ell_n$ & $\overline{\ell_n}$ & $x_1$ & $x_2$ & $\ldots$ & $x_n$ & $c_1$ & $c_2$ & $\ldots$ & $c_m$\\
            \hline
            1 & 0 & \ldots & 0 & 0 & 0 & 0 & \ldots & 0 & 1 & 0 & $\ldots$ & 0\\
        \end{tabular}
        \caption{
            Literal Clause Gadget for $\ell \in c \in C$ (here for $\ell_1$ with $\ell_1 \in c_1$ and $\ell_1 \notin c_2, c_m$).
        }
        \label{fig:ss:literal_clause_gadget}
    \end{figure}

	The target sum in \sss{} plays a crucial role to simulate the satisfaction of the clause correctly.
	In \Cref{fig:ss:k}, the target sum is depicted.
    In a solution, the row corresponding to $\overline \ell_i$ is taken into the solution if and only if variable $x_i$ is assigned true.
    This enables the solution to include the literal clause gadget for $\ell_i$.
    Otherwise, the row corresponding to $\ell_i$ is part of the solution and the literal clause gadget for $\overline \ell_i$.
    Thus all clauses that include $\ell_i$ have at least 1 in their row.
    After this assignment, the suitable row from the clause gadget is included such that the sum of the clause column reaches 14.
    If a clause is not satisfied, the corresponding column cannot reach 14.
    
    \begin{figure}[!ht]
        \centering
        \begin{tabular}{c||c|c|c|c|c||c|c|c|c||c|c|c|c}
            & $\ell_1$ & $\overline{\ell_1}$ & $\ldots$ &  $\ell_n$ & $\overline{\ell_n}$ & $x_1$ & $x_2$ & $\ldots$ & $x_n$ & $c_1$ & $c_2$ & $\ldots$ & $c_m$\\
            \hline
            $\sum$ & 1 & 1 & $\ldots$ & 1 & 1 & 1 & 1 & $\ldots$ & 1 & 14 & 14 & $\ldots$ & 14\\
        \end{tabular}
        \caption{
            The target value $k$ of the sum.
        }
        \label{fig:ss:k}
    \end{figure}

	The modularity of this problem is weak.
    The removal gadget of a clause is the addition of a number that simulates the satisfaction of that clause.
	This gadget is depicted in \Cref{fig:ss:removal_gadget:clause}.
    Note that we leave the numbers from \Cref{fig:ss:clause_gadget} available.
    We can use this clause gadget to also construct a literal removal gadget for $\ell$ or $\overline \ell$.
	This gadget simulates the fulfillment of the clauses that contain $\ell$ or $\overline \ell$.
    Consequently, we add the clause removal gadget for clauses that contain $\ell$ or $\overline \ell$.
    Additionally, we leave the literal gadget within the instance.
    Then together with the clause gadget from \Cref{fig:ss:clause_gadget}, the sum of the clause column containing $\ell$ or $\overline \ell$ can be set to $14$, depending on the other variables.
    The partial solution in the base scenario is extendable to a full solution in the uncertainty scenarios because the clause removal gadget only fixes the solution on the literals that are part of that clause.

\begin{figure}[!ht]
        \centering
        \begin{tabular}{c|c|c|c|c||c|c|c|c||c|c|c|c}
            $\ell_1$ & $\overline{\ell_1}$ & $\ldots$ & $\ell_n$ & $\overline{\ell_n}$ & $x_1$ & $x_2$ & $\ldots$ & $x_n$ & $c_1$ & $c_2$ & $\ldots$ & $c_m$\\
	\hline
            0 & 0 & \ldots & 0 & 0 & 0 & 0 & \ldots & 0 & 14 & 0 & $\ldots$ & 0\\
        \end{tabular}
        \caption{
            The Removal Gadget for $c_1 \in C$.
        }
        \label{fig:ss:removal_gadget:clause}
    \end{figure}

	At last, we describe the solution size function over the numbers.
	Due to weak modularity, we use a reduction $f$ from an instance $(L',C')$ with $L \subseteq L'$ and $C \subseteq C'$ such that
	$$
		size_f(L,C) = |L'| + |C'|.
	$$
	Overall, we include one number for each literal and one number for each clause.
	If a variable is removed, we include the removal gadgets for the clauses containing this variable and leave the gadgets for the literals and the literal clause relation in the instance.
	The removal of a clause (without also removing a variable) does not change the number of elements in the solution because $11, 12, 13$, or $14$ still need to be added such that the clause column adds up to $14$.
    Accordingly the solution size of a variable gadget is the same as for the variable removal gadget.
\end{proof}

\begin{lemma}
    \sss{} is strongly modular universe gadget reducible to \ks{} such that the solution properties hold and a solution size function for this reduction exists.
\end{lemma}
\begin{proof}
	The reduction from \sss{} to \ks{} is by generalization.
	The numbers in \sss{} are mapped to objects of the weight and price corresponding to the value of the number.
	By setting the knapsack capacity and the price threshold to the target sum of \sss{}, the reduction is complete.

	Overall, this is a one-to-one correspondence between all combinatorial elements and the solutions.
	Thus, the modularity, the pre-image uniqueness, solution properties, and the solution size function of \sss{} directly applies to \ks{} as well.
\end{proof}

\begin{lemma}
    \sss{} is strongly modular universe gadget reducible to \pa{} such that the solution properties hold and a solution size function for this reduction exists.
\end{lemma}
\begin{proof}
	The reduction from \sss{} to \pa{} by Karp \cite{DBLP:conf/coco/Karp72} is a universe gadget reduction.
	The numbers $A$ in \sss{} are transferred to the \pa{} instance and remain unchanged.
	Furthermore, let $k$ be the target sum of \sss{}, then $k+1$ and $1 - k + \sum_{a \in A} a$ are added to the \pa{} instance as well.
	These two numbers build up the constant gadget.
    W.l.o.g. we assume that $1-k+\sum_{a \in A} a$ is in the first set of the partition.

	Overall, the numbers from \sss{} and \pa{} are one-to-one correspondent.
    Thus modularity and pre-image uniqueness hold.
    Furthermore, note that  the solution of \sss{} is the same as the set of the partition that additionally includes the element $1-k+\sum_{a \in A} a$.
    Thus, we also have a one-to-one correspondence between the elements in the solutions.
\end{proof}

\begin{lemma}
    \pa{} is strongly modular universe gadget reducible to \tms{} such that the solution properties hold and a solution size function for this reduction exists.
\end{lemma}
\begin{proof}
	\pa{} is a special case of \tms{}.
	By interpreting the numbers in the \pa{} instance to be the job times in \tms{} and by interpreting the sets of the partition as two identical machines, we have a one-to-one correspondence between the combinatorial elements and the solutions as well.
    Thus, pre-image uniqueness, modularity, and solution properties hold and the solution size function remains the same.
\end{proof}

\subsubsection{\textsc{Hamiltonian Path}}
\begin{lemma}
    \sat{} is weakly modular universe gadget reducible to \dhp{} such that the solution properties hold and a solution size function for this reduction exists.
\end{lemma}
\begin{proof}
    A modification of the reduction by Arora and Barak \cite{DBLP:books/daglib/0023084} is a universe gadget reduction.
    For \sat{}, we use the literals as universe elements and the relations $R_{\ell, \overline{\ell}}$, which relates a literal and its negation, and $R_{\ell, c}$, which relates a clause with the negation of its literals.
    
    \hc{}, on the other side, consists of vertices $V$ and arcs $A$.
    This results in the mappings for the variable gadget, see \Cref{fig:dhc:variable_gadget},
    $$
        f_{L, V}, f_{L, A}, f_{R_{\ell,\overline{\ell}}, V}, f_{R_{\ell,\overline{\ell}}, A},
    $$
    and the clause gadget, see \Cref{fig:dhc:clause_gadget} and \Cref{fig:dhc:literal-clause_gadget},
    $$
        f_{R_{\ell, c}, V}, f_{R_{\ell, c}, A}.
    $$
    In order to connect the variable gadgets, we also need a constant gadget defined by mappings $f_{const, V}$ and $f_{const, E}$, see \Cref{fig:dhc:constant_gadget} .
    \begin{figure}[!ht]\centering
	    \scalebox{1}{
			\begin{tikzpicture}[scale=1,
						node/.style = {shape=circle, draw, inner sep=0pt, minimum size=0.25cm},
						dashednode/.style = {shape=circle, draw=blue, dashed, inner sep=0pt, minimum size=0.25cm},
						textnode/.style = {shape=circle, draw, inner sep=0pt, minimum size=0.4cm},
						dashedtextnode/.style = {shape=circle, draw=blue, dashed, inner sep=0pt, minimum size=0.4cm},
						box/.style = {rectangle, fill=gray!20, rounded corners, fill opacity=1, inner sep=1pt}]
	            \node[node] (s) at (0, 0) {}; \node[] at ($(s) + (0,0.33)$) {$s$};
	            \node[node] (t) at (0, -2.5) {}; \node[] at ($(t) - (0,0.33)$) {$t$};
	            \node[dashednode] (x11) at (-1, -0.5) {}; \node[] at ($(x11) + (0,0.33)$) {\textcolor{blue}{$x_1$}};
	            \node[dashednode] (x12) at (1, -0.5) {}; \node[] at ($(x12) + (0,0.33)$) {\textcolor{blue}{$x'_1$}};
	            \node[dashednode] (xn1) at (-1, -1.5) {}; \node[] at ($(xn1) + (0,0.33)$) {\textcolor{blue}{$x_{n}$}};
	            \node[dashednode] (xn2) at (1, -1.5) {}; \node[] at ($(xn2) + (0,0.33)$) {\textcolor{blue}{$x'_{n}$}};
	            \node[dashednode] (xnt) at (0, -2) {};
	            \path[->,dashed,blue] (s) edge node[right] {} (x11);
	            \path[->,dashed,blue] (s) edge node[right] {} (x12);
	            \path[<-,dashed,blue] (xnt) edge node[right] {} (xn1);
	            \path[<-,dashed,blue] (xnt) edge node[right] {} (xn2);
	            \path[->] (xnt) edge node[right] {} (t);
	        \end{tikzpicture}
	    }
        \caption{
            Constant Gadget for the reduction.
        }
        \label{fig:dhc:constant_gadget}
    \end{figure}
    \hfill
    \begin{figure}[!ht]
        \centering
		\scalebox{1}{
			\begin{tikzpicture}[scale=1,
						node/.style = {shape=circle, draw, inner sep=0pt, minimum size=0.25cm},
						dashednode/.style = {shape=circle, draw=blue, dashed, inner sep=0pt, minimum size=0.25cm},
						textnode/.style = {shape=circle, draw, inner sep=0pt, minimum size=0.4cm},
						dashedtextnode/.style = {shape=circle, draw=blue, dashed, inner sep=0pt, minimum size=0.4cm},
						box/.style = {rectangle, fill=gray!20, rounded corners, fill opacity=1, inner sep=1pt}]
            \node[node] (xstart) at (0, 0) {}; \node[] at ($(xstart) + (0,0.33)$) {$x_i$};
            \node[node] (x1) at (2, 0) {}; \node[] at ($(x1) + (0,0.33)$) {$x^1_i$};
            \node[node] (x2) at (4, 0) {}; \node[] at ($(x2) + (0,0.33)$) {$x^2_i$};
            \node[node] (xend) at (9, 0) {}; \node[] at ($(xend) + (0,0.33)$) {$x'_i$};
            \node[node] (xc) at (7, 0) {}; \node[] at ($(xc) + (0.22,0.36)$) {$x^{4|C|}_i$};
            \node[] (dots) at (5.5,0) {$\ldots$};
            \node[dashednode] (xsplit) at (4.5, 1) {};
            \node[node] (xjoin) at (4.5,-0.75) {};
            \path[->, out=30, in=150] (xstart) edge node[right] {} (x1);
            \path[<-, out=330, in=210] (xstart) edge node[right] {} (x1);
            \path[->, dashed, draw=blue, out=30, in=150] (x1) edge node[right] {} (x2);
            \path[<-, dashed, draw=blue, out=330, in=210] (x1) edge node[right] {} (x2);
            \path[->, out=30, in=150] (x2) edge node[right] {} (dots);
            \path[<-, out=330, in=210] (x2) edge node[right] {} (dots);
            \path[->, out=30, in=150] (dots) edge node[right] {} (xc);
            \path[<-, out=330, in=210] (dots) edge node[right] {} (xc);
            \path[->, out=30, in=150] (xc) edge node[right] {} (xend);
            \path[<-, out=330, in=210] (xc) edge node[right] {} (xend);
            \path[->, out=180,in=60,looseness=0.3] (xsplit) edge (xstart);
            \path[->, out=0,in=120,looseness=0.3] (xsplit) edge (xend);
            \path[->,out=270,in=150,looseness=0.25] (xstart) edge (xjoin);
            \path[->,out=270,in=30,looseness=0.25] (xend) edge (xjoin);

            \node[dashednode] (x1start) at (0, -1.0) {}; \node[] at ($(x1start) + (0,-0.33)$) {\textcolor{blue}{$x_{i+1}$}};
            \node[dashednode] (x1end) at (9, -1.0) {}; \node[] at ($(x1end) + (0,-0.33)$) {\textcolor{blue}{$x'_{i+1}$}};
	
            \path[->, dashed, draw=blue] (xjoin) edge node[right] {} (x1start);
            \path[->, dashed, draw=blue] (xjoin) edge node[right] {} (x1end);
        \end{tikzpicture}
	}
        \caption{
            Variable Gadget representing literals $\ell, \overline{\ell} \in L$.
        }
        \label{fig:dhc:variable_gadget}
    \end{figure}
    \begin{figure}[!ht]
        \centering
        \begin{subfigure}[b]{0.48\textwidth}
            \centering
            \scalebox{1}{
    			\begin{tikzpicture}[scale=1,
    						node/.style = {shape=circle, draw, inner sep=0pt, minimum size=0.25cm},
    						dashednode/.style = {shape=circle, draw=blue, dashed, inner sep=0pt, minimum size=0.25cm},
    						textnode/.style = {shape=circle, draw, inner sep=0pt, minimum size=0.4cm},
    						dashedtextnode/.style = {shape=circle, draw=blue, dashed, inner sep=0pt, minimum size=0.4cm},
    						box/.style = {rectangle, fill=gray!20, rounded corners, fill opacity=1, inner sep=1pt}]
                    \node[dashednode] (x1) at (-1,-1) {}; \node[] at ($(x1) + (-0,0.33)$) {\textcolor{blue}{$x^{4j-2}_i = x^{4j-1}_i$}};
                \end{tikzpicture}
	       }
            \caption{
                If the literals $\ell$ and $\overline \ell$ are not in the $j$-th clause $c_j \in C$, we merge the vertices $x^{4j-2}_i$ and $x^{4j-1}_i$.
            }
            \label{fig:dhc:clause_gadget}
        \end{subfigure}
        ~
        \begin{subfigure}[b]{0.48\textwidth}
            \centering
            \scalebox{1}{
    			\begin{tikzpicture}[scale=1,
    						node/.style = {shape=circle, draw, inner sep=0pt, minimum size=0.25cm},
    						dashednode/.style = {shape=circle, draw=blue, dashed, inner sep=0pt, minimum size=0.25cm},
    						textnode/.style = {shape=circle, draw, inner sep=0pt, minimum size=0.4cm},
    						dashedtextnode/.style = {shape=circle, draw=blue, dashed, inner sep=0pt, minimum size=0.4cm},
    						box/.style = {rectangle, fill=gray!20, rounded corners, fill opacity=1, inner sep=1pt}]
                    \node[node] (c) at (0, 0) {}; \node[] at ($(c) + (0,0.33)$) {$c_j$};
                    \node[dashednode] (x1) at (-1,-1) {}; \node[] at ($(x1) + (-0.2,0.33)$) {\textcolor{blue}{$x^{4j-2}_i$}};
                    \node[dashednode] (x2) at (1, -1) {}; \node[] at ($(x2) + (0.5,0.33)$) {\textcolor{blue}{$x^{4j-1}_i$}};
                    \path[->] (x1) edge node[right] {} (c);
                    \path[->] (c) edge node[right] {} (x2);
                    \path[<-, out=30, in=150] (x1) edge node[right] {} (x2);
                    \path[<-, out=210, in=330] (x2) edge node[right] {} (x1);
                \end{tikzpicture}
    	    }
            \caption{
                Clause Gadget for $\ell \in c \in C$.
            }
            \label{fig:dhc:literal-clause_gadget}
        \end{subfigure}
        \caption{
                Clause Gadget for $c \in C$.
            }
    \end{figure}

    If we include the path from $x_i$ to $x_i'$, we simulate that the variable is assigned true.
    Vice versa, if the path from $x'_i$ to $x_i$ is included, the variable $x_i$ is simulated to be false.
    Therefore, it is possible to include the vertices of clauses satisfied by the assignment into the Hamiltonian cycle.

	This reduction is only weakly modular because removing a variable $x_i$ results in a disconnected graph.
    For this, we can employ a clause removal gadget, which can be found in \Cref{fig:dhc:clause_removal_gadget}.
    Then for a variable removable gadget of variable $x_i$, we leave the variable gadget in the instance, while introducing the clause removal gadget for all clauses that contain the variable $x_i$.
    A solution in the base scenario is extendable to a full solution because the clause removal gadget only fixes the solution on the literals that are part of the clause.
    \begin{figure}[!ht]
        \centering
        \scalebox{1}{
			\begin{tikzpicture}[scale=1,
						node/.style = {shape=circle, draw, inner sep=0pt, minimum size=0.25cm},
						dashednode/.style = {shape=circle, draw=blue, dashed, inner sep=0pt, minimum size=0.25cm},
						textnode/.style = {shape=circle, draw, inner sep=0pt, minimum size=0.4cm},
						dashedtextnode/.style = {shape=circle, draw=blue, dashed, inner sep=0pt, minimum size=0.4cm},
						box/.style = {rectangle, fill=gray!20, rounded corners, fill opacity=1, inner sep=1pt}]
            \node[node] (c) at (0, 0) {}; \node[] at ($(c) + (0,0.33)$) {$c_j$};
            \node[dashednode] (x1) at (-1,-1) {}; \node[] at ($(x1) + (-0.5,0.33)$) {\textcolor{blue}{$x^{4j-2}_i$}};
            \node[dashednode] (x2) at (1, -1) {}; \node[] at ($(x2) + (0.5,0.33)$) {\textcolor{blue}{$x^{4j-1}_i$}};
            \path[->,out=90,in=165] (x1) edge node[right] {} (c);
            \path[->,out=345,in=120] (c) edge node[right] {} (x2);
            \path[->,out=90,in=15] (x2) edge node[right] {} (c);
            \path[->,out=195,in=60] (c) edge node[right] {} (x1);
            \path[<-, out=30, in=150] (x1) edge node[right] {} (x2);
            \path[<-, out=210, in=330] (x2) edge node[right] {} (x1);
        \end{tikzpicture}
	}
        \caption{
            Clause Removal Gadget for clause $c \in C$.
        }
        \label{fig:dhc:clause_removal_gadget}
    \end{figure}
	
	At last, we describe the solution size function over the arcs.
	Due to weak modularity, we use a reduction $f$ from an instance $(L',C')$ with $L \subseteq L'$ and $C \subseteq C'$ such that
	$$
		size_f(L,C) = 1 + (1+3 \cdot |L'|/2) \cdot |C'| + \sum_{c \in C'} (1+|c|).
	$$
	Overall, we include one arc for the constant gadget.
	Furthermore for each variable, we include $3 |C'|$ arcs.
    For all clauses $c \in C$ of size $|c|$, we need to include $1+|c|$ arcs.
    The size of a clause gadget and clause removal gadget are the same.
    Thus we do not have to change the solution size function in this regard.
\end{proof}

\begin{lemma}
    \dhp{} is strongly modular universe gadget reducible to \dhc{} such that the solution properties hold and a solution size function for this reduction exists.
\end{lemma}
\begin{proof}
	Adding an arc from $t$ to $s$ as constant gadget closes the cycle.
	This reduction is a universe gadget reduction because the combinatorial elements are mapped one-to-one such that the solutions are one-to-one translatable as well.
	This directly proves the pre-image uniqueness, the modularity, and the solution properties.
    The solution size function includes an additional term of one for the arcs $(t,s)$, to be correct.
\end{proof}

\begin{lemma}\label{reduction:dhpuhp}
    \dhp{} is strongly modular universe gadget reducible to \uhp{} such that the solution properties hold and a solution size function for this reduction exists.
\end{lemma}
\begin{proof}
	Karp's redution \cite{DBLP:conf/coco/Karp72} is a universe gadget reduction.
	It triples the vertices and connects the triplets as depicted in \Cref{fig:dhpuhp:vertex_gadget}.
	Furthermore, each arc $(u,v)$ in the graph is mapped to an edge $\{u^{out}, v^{in}\} \in E'$.

    \begin{figure}[!ht]
        \centering
        \scalebox{1}{
			\begin{tikzpicture}[scale=1,
						node/.style = {shape=circle, draw, inner sep=0pt, minimum size=0.25cm},
						dashednode/.style = {shape=circle, draw=blue, dashed, inner sep=0pt, minimum size=0.25cm},
						textnode/.style = {shape=circle, draw, inner sep=0pt, minimum size=0.4cm},
						dashedtextnode/.style = {shape=circle, draw=blue, dashed, inner sep=0pt, minimum size=0.4cm},
						box/.style = {rectangle, fill=gray!20, rounded corners, fill opacity=1, inner sep=1pt}]
                \node[node] (1) at (0, 0) {}; \node[] at ($(1) + (0,0.33)$) {$v^{in}$};
                \node[node] (2) at (1, 0) {}; \node[] at ($(2) + (0,0.33)$) {$v^{m}$};
                \node[node] (3) at (2, 0) {}; \node[] at ($(3) + (0,0.33)$) {$v^{out}$};                   
                \path[-] (1) edge node[right] {} (2);
                \path[-] (2) edge node[right] {} (3);
            \end{tikzpicture}
        }
        \caption{
            The vertex gadget for the reduction from \dhp{} to \uhp{}.
        }
        \label{fig:dhpuhp:vertex_gadget}
    \end{figure}
	
	The pre-image uniqueness and the modularity remain.
    The solutions on arcs $(u,v)$ and edges $\{u^{out}, v^{in}\}$ are one-to-one correspondent, while all edges $\{v^{in},v^m\}$ and $\{v^{m},v^{out}\}$ have to be in the solution.
    Thus, the solution properties still hold, while the solution size function needs to take the two edges from $v^{in}$ over $v^{m}$ to $v^{out}$ into account for every vertex.
	That is, the number of used edges in a solution of the variable is tripled.
	Overall, we get
	$$
		size_f(L,C) = 6 + 9|L'|/2 \cdot |C'| + \sum_{c \in C'} (1+3|c|).
	$$
\end{proof}

\begin{lemma}
    \dhc{} is strongly modular universe gadget reducible to \uhc{} such that the solution properties hold and a solution size function for this reduction exists.
\end{lemma}
\begin{proof}
	This reduction is completely analogous to Karp's reduction from \dhp{} to \uhp{} (\Cref{reduction:dhpuhp}).
\end{proof}

\begin{lemma}
    \uhc{} is strongly modular universe gadget reducible to \tsp{} such that the solution properties hold and a solution size function for this reduction exists.
\end{lemma}
\begin{proof}
	We consider \tsp{} to be defined over an undirected weighted graph.
	This graph does not have to be complete.
	Then, the graph $G = (V, E)$ of the \uhc{} instance can be mapped to a weighted graph $G' = (V',E',w')$,
    where $V = V'$ and $E = E'$. The weights are set to $0$ and the weight threshold is set to $0$.
	
	The reductions yields a one-to-one correspondence between the vertices and edges and thus between the solutions.
	It follows that the solution size function remains the same and the solution properties, the pre-image uniqueness as well as the modularity hold.
\end{proof}

\subsubsection{\textsc{2-Disjoint Path}}
\begin{lemma}
    \sat{} is weakly modular universe gadget reducible to \tddp{} such that the solution properties hold and a solution size function for this reduction exists.
\end{lemma}
\begin{proof}
A modification of the reduction by Fortune, Hopcroft and Wyllie \cite{DBLP:journals/tcs/FortuneHW80} is a universe gadget reduction.
This reduction is much more complex than earlier reductions such that we first explain the construction and then explain how this construction can be divided into variable and clause gadgets.

First of all, the reduction introduces a so-called switch gadget, which is visualized in \Cref{fig:reduction:3sat-directed-two-disjoint-path:switch}.
\begin{figure}
\centering
\scalebox{1}{
			\begin{tikzpicture}[scale=0.5,
						node/.style = {shape=circle, draw, inner sep=0pt, minimum size=0.25cm},
						dashednode/.style = {shape=circle, draw=blue, dashed, inner sep=0pt, minimum size=0.25cm},
						textnode/.style = {shape=circle, draw, inner sep=0pt, minimum size=0.4cm},
						dashedtextnode/.style = {shape=circle, draw=blue, dashed, inner sep=0pt, minimum size=0.4cm},
						box/.style = {rectangle, fill=gray!20, rounded corners, fill opacity=1, inner sep=1pt},
                        arc/.style = {->, draw}
                    ]

\node[dashednode] (c) at (0,0) {};
\node[node] (a) at (0,-6) {};
\draw[arc, draw=blue, dashed] ($(c) + (0,1)$) to node[left] {\textcolor{blue}{$C$}} (c);
\draw[arc] (a) to node[left] {$A$} ($(a) - (0,1)$);

\node[node] (01) at ($(c) + (-1,-1)$) {};
\node[node] (02) at ($(c) + (-1,-2)$) {};
\node[node] (03) at ($(c) + (-1,-3)$) {};
\node[node] (04) at ($(c) + (-1,-4)$) {};
\node[node] (05) at ($(c) + (-1,-5)$) {};
\draw[arc] (c) to (01);
\draw[arc] (01) to (02);
\draw[arc] (02) to (03);
\draw[arc] (03) to (04);
\draw[arc] (04) to (05);
\draw[arc] (05) to (a);

\node[node] (11) at ($(c) + (1,-1)$) {};
\node[node] (12) at ($(c) + (1,-2)$) {};
\node[node] (13) at ($(c) + (1,-3)$) {};
\node[node] (14) at ($(c) + (1,-4)$) {};
\node[node] (15) at ($(c) + (1,-5)$) {};
\draw[arc] (c) to (11);
\draw[arc] (11) to (12);
\draw[arc] (12) to (13);
\draw[arc] (13) to (14);
\draw[arc] (14) to (15);
\draw[arc] (15) to (a);

\node[node] (08) at (-4,-1) {};
\node[node] (09) at (-3,-1) {};
\node[node] (00) at (-2,-1) {};
\draw[arc] ($(08) - (1,0)$) to node[above] {$W$} (08);
\draw[arc] (08) to (09);
\draw[arc] (09) to (00);
\draw[arc] (00) to (01);
\draw[arc, out=135, in=270, looseness=0.5] (15) to (09);

\node[node, blue, dashed] (18) at (4,-1) {};
\node[node] (19) at (3,-1) {};
\node[node] (10) at (2,-1) {};
\draw[arc, draw=blue, dashed] ($(18) + (1,0)$) to node[above] {\textcolor{blue}{$Y$}} (18);
\draw[arc] (18) to (19);
\draw[arc] (19) to (10);
\draw[arc] (10) to (11);
\draw[arc, out=45, in=270, looseness=0.5] (05) to (19);

\node[dashednode] (b) at (2,-6) {};
\draw[arc] (b) to (04);
\draw[arc] (b) to (14);
\draw[arc, draw=blue, dashed] ($(b) - (0,1)$) to node[right] {\textcolor{blue}{$B$}} (b);

\node[node] (d) at (2,0) {};
\draw[arc] (00) to (d);
\draw[arc] (10) to (d);
\draw[arc] (d) to node[right] {$D$} ($(d) + (0,1)$);

\node[node] (011) at (-4,-2) {};
\draw[arc] (02) to (011);
\draw[arc] (011) to node[below] {$X$} ($(011) - (1,0)$);

\node[dashednode] (111) at (4,-2) {};
\draw[arc] (12) to (111);
\draw[arc, blue, dashed] (111) to node[below] {\textcolor{blue}{$Z$}} ($(111) + (1,0)$);

\end{tikzpicture}
}
\caption{The switch gadget. The black solid elements are always part of the corresponding clause gadget. The vertices $Y$ and $Z$ are part of the gadget of the variable that belongs to the clause. The vertices $B$ and $C$ are part of the previous clause gadget if it is the first switch gadget of the clause.}
\label{fig:reduction:3sat-directed-two-disjoint-path:switch}
\end{figure}
The idea of this gadget is that two vertex-disjoint paths entering at vertex $B$ (respectively leaving at vertex $A$) need to leave through vertex $D$ (respectively need to enter at vertex $C$) such that either the path from $X$ to $Z$ or the path from $W$ to $X$ is still usable without violating the vertex-disjoint path constraint.
In order to integrate this gadget into the complete construction, we use the schematic view on the gadget as depicted in \Cref{fig:reduction:3sat-directed-two-disjoint-path:switch:schema}.

\begin{figure}
\centering
\scalebox{1}{
			\begin{tikzpicture}[scale=1,
						node/.style = {shape=circle, draw, inner sep=0pt, minimum size=0.25cm},
						dashednode/.style = {shape=circle, draw=blue, dashed, inner sep=0pt, minimum size=0.25cm},
						textnode/.style = {shape=circle, draw, inner sep=0pt, minimum size=0.4cm},
						dashedtextnode/.style = {shape=circle, draw=blue, dashed, inner sep=0pt, minimum size=0.4cm},
						box/.style = {rectangle, fill=gray!20, rounded corners, fill opacity=1, inner sep=1pt},
                        arc/.style = {->, draw}
                    ]

\node[node] (w) at (0,0) {}; \node[] at ($(w) + (0,0.33)$) {$W$};
\node[node] (x) at (0,-1) {}; \node[] at ($(x) + (0,-0.33)$) {$X$};
\node[node] (y) at (1,0) {}; \node[] at ($(y) + (0,0.33)$) {$Y$};
\node[node] (z) at (1,-1) {}; \node[] at ($(z) + (0,-0.33)$) {$Z$};

\draw[->] (w) to (x);
\draw[->] (y) to (z);
\draw[-] (0,-0.5) to (1,-0.5);

\end{tikzpicture}
}
\caption{The schematic switch gadget.}
\label{fig:reduction:3sat-directed-two-disjoint-path:switch:schema}
\end{figure}

We now start the description of the actual construction.
Let $(L, C)$ be the \textsc{3Sat} instance of literal $L$ and clauses $C$.
On the other hand, let $(V, A, s_1, t_1, s_2, t_2)$ be the \textsc{Directed Two Disjoint Path} instance with vertex set $V$, arc set $A$, and the start and end vertices of the two disjoint paths $s_1, t_1, s_2, t_2$.
First, we introduce the four start and end vertices $s_1, t_1, s_2, t_2$.
Second for every literal $\ell \in L$, we create a path $\ell^1, \ldots, \ell^{4|C|}$.
In the original reduction, four vertices and three arcs are inserted for each clause. We modify the original reduction by adding the left and right arc to the variable gadget first. Then, the middle arc is part of the clause gadget if and only if the corresponding literal is in the clause, otherwise the middle of the two vertices are merged together such that only the left and right arc remain.
Let $\ell_i, \overline \ell_i$ the literals corresponding to variable $x_i$.
Then we connect the paths of literals $\ell_i$ and $\overline \ell_i$ by introducing two vertices $x^s_i$ and $x^t_i$ together with arcs $(x^s_i, \ell^1_i), (x^s_i, \overline \ell^1_i)$ and $(\ell^{4|C|}_i, x^t_i), (\overline \ell^{4|C|}_i, x^t_i)$.
These are part of the variable gadget.
We further connect these gadgets by adding the arcs $(x^t_i, x^s_{i+1})$ for all $i \in \{1, \ldots, |X|\}$.
We call this the lobe for variable $x_i$.
Third for each clause $c_j \in C$, we add two vertices $c^1_j$ and $c^2_j$.
We connect these two vertices by three arcs of the form $(c^1_j, c^2_j)$.
Additionally, we connect these vertex pairs by adding the arc $(c^2_j, c^1_{j+1})$ for each $j \in \{1, \ldots, |C|-1\}$.
These elements are also part of the clause gadget for clause $c_j$.
At last, we connect the variable lobes with the clause path with an arc $(x^t_{|L|/2}, c^1_1)$ and we add the arc $(c^2_{|C|}, t_1)$.
These elements are part of the constant gadget.
This summarizes the overall structure of the reduction.

We are now ready to introduce switch gadgets into the construction.
For each $k$-clause $c \in C$, we add $k$ switch gadgets to the construction.
These $k$ switch gadgets are part to the clause gadget of the corresponding clause $c \in C$.
All of the switch gadgets are now stacked by merging the vertex $C$ of one switch gadget with vertex $A$ of the following switch gadget and by merging vertex $B$ of one switch gadget with vertex $D$ of the following switch gadget.
This leaves the vertices $W,X,Y$ and, $Z$ unconnected.
We connect these to the graph in the following way.
Consider the $j$-th clause $c_j$ that contains the literal $\ell$.
We identify $W$ and $X$ with the existing vertices $\ell^{4j-2}$ and $\ell^{4j-1}$ in the lobes of the variables.
Note that we merge the vertices $\ell^{4j-2}$ and $\ell^{4j-1}$ if $\ell \notin c_j$.
Additionally, the vertices $Y$ and $Z$ are identified with the vertices $c^1_j$ and $c^2_j$ induced by clause $c_j$.
Note that if there are two consecutive switch gadgets belonging to two different clauses $c_j$ and $c_{j+1}$, then we define that vertex $A$ (merged with the following $C$) and $D$ (merged with the following $B$) belong to $c_j$.
To finish the construction, we add arcs $(s_2, C_{last}), (A_1, t_2), (s_1, B_1)$, and $(D_{last}, x^s_1)$, which are part of the constant gadget.
The whole construction can be found in \Cref{fig:reduction:3sat-directed-two-disjoint-path} in which we symbolize the usage of a switch gadget as depicted in \Cref{fig:reduction:3sat-directed-two-disjoint-path:switch:schema}.
The construction as described above is pre-image unique because each vertex and arc is induced by one variable or clause.

\begin{figure}
\centering
\scalebox{1}{
			\begin{tikzpicture}[scale=0.5,
						node/.style = {shape=circle, draw, inner sep=0pt, minimum size=0.25cm},
						dashednode/.style = {shape=circle, draw=blue, dashed, inner sep=0pt, minimum size=0.25cm},
						textnode/.style = {shape=circle, draw, inner sep=0pt, minimum size=0.4cm},
						dashedtextnode/.style = {shape=circle, draw=blue, dashed, inner sep=0pt, minimum size=0.4cm},
						box/.style = {rectangle, fill=gray!20, rounded corners, fill opacity=1, inner sep=1pt},
                        arc/.style = {->, draw}
                    ]

\node[node] (x1s) at (1,0) {};
\node[node] (x101) at ($(x1s) + (1,1)$) {};
\node[node] (x102) at ($(x1s) + (2,1)$) {};
\node[node] (x103) at ($(x1s) + (3,1)$) {};
\node[node] (x104) at ($(x1s) + (4,1)$) {}; \node[above left] at (x101) {$\overline \ell_1$};
\node[node] (x111) at ($(x1s) + (1,-1)$) {};
\node[node] (x112) at ($(x1s) + (2,-1)$) {};
\node[node] (x113) at ($(x1s) + (3,-1)$) {};
\node[node] (x114) at ($(x1s) + (4,-1)$) {}; \node[below left] at (x111) {$\ell_1$};
\node[node] (x1t) at ($(x1s) + (5,0)$) {};
\draw[->] (x1s) to (x101);
\draw[->] (x101) to (x102);
\draw[->] (x102) to (x103);
\draw[->] (x103) to (x104);
\draw[->] (x104) to (x1t);
\draw[->] (x1s) to (x111);
\draw[->] (x111) to (x112);
\draw[->] (x112) to (x113);
\draw[->] (x113) to (x114);
\draw[->] (x114) to (x1t);

\node[node] (x2s) at (7,0) {};
\node[node] (x201) at ($(x2s) + (1,1)$) {};
\node[node] (x202) at ($(x2s) + (2,1)$) {};
\node[node] (x203) at ($(x2s) + (3,1)$) {};
\node[node] (x204) at ($(x2s) + (4,1)$) {}; \node[above left] at (x201) {$\overline \ell_2$};
\node[node] (x211) at ($(x2s) + (1,-1)$) {};
\node[node] (x212) at ($(x2s) + (2,-1)$) {};
\node[node] (x213) at ($(x2s) + (3,-1)$) {};
\node[node] (x214) at ($(x2s) + (4,-1)$) {}; \node[below left] at (x211) {$\ell_2$};
\node[node] (x2t) at ($(x2s) + (5,0)$) {};
\draw[->] (x2s) to (x201);
\draw[->] (x201) to (x202);
\draw[->] (x202) to (x203);
\draw[->] (x203) to (x204);
\draw[->] (x204) to (x2t);
\draw[->] (x2s) to (x211);
\draw[->] (x211) to (x212);
\draw[->] (x212) to (x213);
\draw[->] (x213) to (x214);
\draw[->] (x214) to (x2t);

\node[node] (x3s) at (13,0) {};
\node[node] (x301) at ($(x3s) + (1,1)$) {};
\node[node] (x302) at ($(x3s) + (2,1)$) {};
\node[node] (x303) at ($(x3s) + (3,1)$) {};
\node[node] (x304) at ($(x3s) + (4,1)$) {}; \node[above left] at (x301) {$\overline \ell_3$};
\node[node] (x311) at ($(x3s) + (1,-1)$) {};
\node[node] (x312) at ($(x3s) + (2,-1)$) {};
\node[node] (x313) at ($(x3s) + (3,-1)$) {};
\node[node] (x314) at ($(x3s) + (4,-1)$) {}; \node[below left] at (x311) {$\ell_3$};
\node[node] (x3t) at ($(x3s) + (5,0)$) {};
\draw[->] (x3s) to (x301);
\draw[->] (x301) to (x302);
\draw[->] (x302) to (x303);
\draw[->] (x303) to (x304);
\draw[->] (x304) to (x3t);
\draw[->] (x3s) to (x311);
\draw[->] (x311) to (x312);
\draw[->] (x312) to (x313);
\draw[->] (x313) to (x314);
\draw[->] (x314) to (x3t);

\node[node] (s1) at (-3,0) {}; \node[above left] at (s1) {$s_1$};
\node[node] (b0) at (-2,0) {}; \node[above] at (b0) {$B_1$};
\node[node] (d1) at (0,0) {}; \node[below] at (d1) {$D_{\ell\textit{ast}}$};
\node[node] (t11) at (19,0) {}; \node[above right] at (t11) {};
\node[node] (t1) at (0,-4) {}; \node[above left] at (t1) {$t_1$};
\node[node] (s11) at (19,-4) {}; \node[above right] at (s11) {};
\draw[->] (s1) to (b0);
\draw[->] (d1) to (x1s);
\draw[->] (x1t) to (x2s);
\draw[->] (x2t) to (x3s);
\draw[->] (x3t) to (t11);
\draw[->, out=0, in=0, looseness=1] (t11) to (s11);

\node[node] (s2) at (6.5,4) {}; \node[above] at (s2) {$s_2$};
\node[node] (t2) at (12.5,4) {}; \node[above] at (t2) {$t_2$};
\node[node] (c1) at (8,4) {}; \node[below] at (c1) {$C_{\ell\textit{ast}}$};
\node[node] (a0) at (11,4) {}; \node[below] at (a0) {$A_1$};
\draw[->] (s2) to (c1);
\draw[->] (a0) to (t2);

\node[node] (c11) at (8,-4) {}; \node[below] at (c11) {$c^1_1$};
\node[below] (c1) at (9.5,-5) {$c_1$};
\node[node] (c12) at (11,-4) {}; \node[below] at (c12) {$c^2_1$};
\draw[->] (s11) to (c12);
\draw[->] (c11) to (t1);

\draw[->] (c12) to (c11);
\draw[->, in=45, out=135, looseness=0.7] (c12) to (c11);
\draw[->, in=315, out=225, looseness=0.7] (c12) to (c11);

\draw[-] (9.5,-4) to ($(x112)+(0.4,0)$);
\draw[-] (9.5,-3.5) to ($(x212)+(0.4,0)$);
\draw[-] (9.5,-4.5) to ($(x302)+(0.4,0)$);

\end{tikzpicture}
}
\caption{Classical reduction of \textsc{3Sat} to \textsc{Directed Two Disjoint Path} for $\varphi = (\overline \ell_1 \wedge \overline \ell_2 \wedge \ell_3)$.}
\label{fig:reduction:3sat-directed-two-disjoint-path}
\end{figure}

There is a correspondence between the assignment of variables and the lobes of the variables and thus the variable gadgets:
If on the one hand the path corresponding to literal $\ell_i$ is part of the solution, then $x_i$ is assigned to false; if on the other hand the path corresponding to literal $\overline \ell_i$ is part of the solution, then $x_i$ is assigned to true.
Furthermore observe that if the path of $\ell_i$ is part of the solution, all arcs of $\overline \ell_i$ are still unused.
Accordingly, while traveling through the clause gadget of clause $c_j$, which contains literal $\overline \ell_i$, these arcs can be used.
On the other hand, a clause gadget of clause $c_j$ containing $\ell_i$ cannot make use of the arcs of the path of $\ell_i$.

This reduction is only weakly modular because removing a variable $x_i$ results in a disconnected graph.
However, there is a corresponding removal gadget for clauses mitigating this problem.
Thus the removal gadget of a variable is the variable gadget itself together with the clause removal gadgets.
In order to deactivate a clause, one can add a path from $c^1_j$ to $c^2_j$ of length $5$ such that the number of used arcs remains the same.
If this new path is used the clause is simulated to be satisfied.
A solution in the base scenario is extendable to a full solution because the clause removal gadget only fixes the solution on the literals that are part of the clause.
    
At last, we describe the solution size function over the arcs.
Due to weak modularity, we use a reduction $f$ from an instance $(L',C')$ with $L \subseteq L'$ and $C \subseteq C'$ such that
$$
	size_f(L,C) = 5 + \frac{2|L'|}{2} + 2|C'||L'| + \sum_{c \in C'} (\frac{11|c|}{2} + \frac{5|c|}{2} + 5).
$$
Overall, we include $5$ arcs for the constant gadget, which are the arcs $(s_1, B_1)$, $(s_2, C_{last})$, $(c^1_{|C|}, t_1)$, $(A_1, t_2)$, and $(x^t_{|X|}, c^2_1)$.
For each clause, we need to travel the $C$-$A$ path and the $B$-$D$ path, which are $11$ arcs for each variable in the clause.
Furthermore, we have to travel over the $W$-$X$ path of all switch gadgets for each clause.
Accordingly for each clause $c \in C$, we have to include $5|c|/2$ arcs.
In order to travel from $c_j^1$ to $c_j^2$ for each clause $c_j$, the $Y$-$Z$ path has to be used, which are $5$ arcs.
At last for each variable $x_i$, we have to use one of the arcs $(x_i^s,\ell_i)$, $(x^s_i,\overline \ell_i)$ and one of the arcs $(\ell_i,x^t_i)$, $(\overline \ell_i,x^t_i)$.
Additionally, we need to include two arcs per clause to travel through the lobe.
This completes the description of the solution size function.
Note that the clause removal gadgets were designed such that the number of solution elements remains the same.
\end{proof}

\begin{lemma}
    \tddp{} is strongly modular universe gadget reducible to \kddp{} such that the solution properties hold and a solution size function for this reduction exists.
\end{lemma}
\begin{proof}
	The reduction from \textsc{$k$Disjoint Directed Path} to \textsc{$k+1$Disjoint Directed Path} for $k \geq 2$ works as follows.
	The reduction consists only of a constant gadget, which adds an additional path from $s_{k+1}$ to $t_{k+1}$ over the single arc $(s_{k+1}, t_{k+1})$.
	The solution size functions needs to include this additional arc.
	Because, the rest of the instance remains the same, we have a one-to-one correspondence between all combinatorial elements.
	Thus, modularity and pre-image uniqueness remain.
\end{proof}

\subsubsection{\textsc{3-Dimensional Matching}}
\begin{lemma}
    \sat{} is weakly modular universe gadget reducible to \tdm{} such that the solution properties hold and a solution size function for this reduction exists.
\end{lemma}
\begin{proof}
	A modification of the reduction from \sat{} to \tdm{} by Garey and Johnson \cite{DBLP:books/fm/GareyJ79} is a universe gadget reduction.
	For \sat{}, we use the literals as universe elements and the relations $R_{\ell, \overline{\ell}}$, which relates a literal and its negation, $R_{\ell, c}$, which relates the literals with the clauses, and  $R_{\overline \ell, c}$, which relates the negated literals with the clauses.
	\tdm{} consists of a ground set $U$ including all elements of the triples.
	Additionally, there is a 3-ary relation between the triples $T \subseteq U_1 \times U_2 \times U_3$ with $U_1~\dot\cup~U_2~\dot\cup~U_3 = U$ and $|U_1| = |U_2| = |U_3|$.
	A solution is a perfect matching $M \subseteq T$ of $U$.

	We describe the construction and explain how to divide the elements in variable and clause gadgets.
	In the original reduction the sets $T^t_i$, $T^f_i$, and $G$ of triples were introduced.
	The semantics are to include the set $T^t_i$ in the solution if variable $x_i$ was set to true and $T^f_i$ if the variable $x_i$ is set to false.
	The set $G$ is a garbage collection set that has the task to collect all non-matched elements in the sets $T^t_i$ and $T^f_i$.
	
	For the modified version, we introduce a variable gadget for the literal pair $\ell_i, \overline \ell_i$.
	Such a variable gadget consists of four triples $(\overline \ell_i[0], a_i[0], b_i[0])$ (belonging to the set $T^t_i$), $(\ell_i[0], a_i[1], b_i[0])$ (belonging to the set $T^f_i$), $(\overline \ell_i[0], g_1[i], g_2[i])$, and $(\ell_i[0], g_1[i], g_2[i])$ (both belonging to the garbage collection set $G$).
	For each clause, we introduce the following triples as clause gadget.
	We assume that the clauses $C$ are ordered and let $|c|$ denote the number of literals in clause $c \in C$.
    Further, let $\gamma_i$ be the number of clauses containing $\ell_i$ or $\overline \ell_i$.
	\begin{align*}
		(\overline \ell_i[j], a_i[j], b_i[j]),&  \text{ if $c$ is the $j$-th clause with } \ell_i \text{ or } \overline \ell_i \in c\\
		(\ell_i[j], a_i[j+1~\text{mod}~\gamma_i+1], b_i[j]),& \text{ if $c$ is the $j$-th clause with } \ell_i \text{ or } \overline \ell_i \in c,\\
		(\overline \ell_i[j],g^c_1[k], g^c_2[k]),& \text{ if $c$ is the $j$-th clause with } \ell_i \text{ or } \overline \ell_i \in c,\\ & \qquad\qquad\qquad\qquad\qquad\qquad\quad \ \text{ and } 1 \leq k \leq |c|-1\\
		(\ell_i[j],g^c_1[k], g^c_2[k]),& \text{ if $c$ is the $j$-th clause with } \ell_i \text{ or } \overline \ell_i \in c,\\
        & \qquad\qquad\qquad\qquad\qquad\qquad\quad \ \text{ and } 1 \leq k \leq |c|-1\\
		(\ell_i[j], s^c_1, s^c_2),& \text{ if $c$ is the $j$-th clause with } \ell_i \text{ or } \overline \ell_i \in c,
        \text{ and } \ell_i \in c\\
		(\overline \ell_i[j], s^c_1, s^c_2),& \text{ if $c$ is the $j$-th clause with } \ell_i \text{ or } \overline \ell_i \in c, \text{ and } \overline \ell_i \in c\\
	\end{align*}
	The element $(\overline \ell_i[j], a_i[j], b_i[j])$ is part of the set $T^t_i$, the element $(\ell_i[j], a_i[j+1], b_i[j])$ is part of the set $T^f_i$, and the elements $(\overline \ell_i[j],g^c_1[k], g^c_2[k])$ and $(\ell_i[j],g^c_1[k], g^c_2[k])$ are part of the set $G$.
	In comparison to the original reduction, we leave out all elements $(\overline \ell_i[j], a_i[j], b_i[j])$, and $(\ell_i[j], a_i[j+1], b_i[j])$ in $T^t_i$ and $T^f_i$ as well as the additional garbage collection element from $G$ if the literal is not part of the clause.
	For each clause $c_j \in C$, we have added the triples
	\begin{align*}
		C_j &= \{(\ell_i[j], s^c_1, s^c_2) \mid \ell_i \in c_j\}.
	\end{align*}
	A triple from $C_j$ is taken into the solution if $\ell_i$ is able to satisfy the clause $c_j$.
	This is only possible if $T^t_i$ and $T^f_i$ are chosen correspondingly.
	
	The variable removal gadget is the variable gadget itself together with the clause removal gadgets.
	The removal gadget for clause $c_j \in C$ is
	\begin{align*}
		\overline C_j = \{(\ell_i[j],s^c_{1}, s^c_{2}),(\overline\ell_i[j],s^c_{1}, s^c_{2}) \mid \ell_i \in c_j \}.
	\end{align*}
	If a clause $C_j$ is unsatisfied, then one of $\ell_i[j]$ and $\overline\ell_i[j]$ is not matched for some $\ell_i \in C_j$ because $G$ is only able to match at most $k-1$ many $\ell_i[j]$ for a clause of size $k$, while $2k$ many $\ell_i[j]$ are introduced and $k$ being matched by the elements from $T^t_i$ or $T^f_i$.
	In other words, the set $\overline C_j$ simulates that $C_j$ is satisfied and also includes the element $\ell_i[j]$, which cannot be matched by $G$.
	In conclusion, the solution size stays the same if a clause is removed.
	Thus we get
	$$
		size_f(L, C) = |L'| + \sum_{c \in C'} 2|c|,
	$$
	and the reduction is modular by the given one-to-many correspondence of the literals and clauses on the one side and their gadgets on the other side.
    A solution in the base scenario is extendable to a full solution because the clause removal gadget only fixes the solution on the literals that are part of the clause.
\end{proof}

\begin{lemma}
    \tdm{} is strongly modular universe gadget reducible to \xtc{} such that the solution properties hold and a solution size function for this reduction exists.
\end{lemma}
\begin{proof}
	The reduction from \tdm{} to \xtc{} by Garey and Johnson \cite{DBLP:books/fm/GareyJ79} is a universe gadget reduction.
	Because \xtc{} is a generalization of \tdm{}, the \tdm{} instance is just reinterpreted as an instance of \xtc{}.
	This yields a direct one-to-one correspondence between the 3-tuples and 3-sets.
	Thus, the solution size function remains and all necessary properties still hold.
\end{proof}

\subsubsection{\textsc{Coloring} (Partition Problems)}
\begin{lemma}
    \sat{} is strongly modular universe gadget reducible to \tcol{} such that the solution properties hold and a solution size function for this reduction exists.
\end{lemma}
\begin{proof}
    The reduction from \sat{} to \col{} by Garey et al. \cite{DBLP:journals/tcs/GareyJS76} is a universe gadget reduction.
    \col{} has the vertices of the graph $V$ as universe elements and the edges $E$ as relation over the universe elements.
    We therefore have the mappings $f_{const, V}$, $f_{const, E}$, $f_{L, V}$, $f_{L, E}$, $f_{C, V}$, $f_{C, E}$.

    \begin{figure}[!ht]
        \centering
        \scalebox{1}{
	\begin{tikzpicture}[scale=1,
		node/.style = {shape=circle, draw, inner sep=0pt, minimum size=0.25cm},
		dashednode/.style = {shape=circle, draw=blue, dashed, inner sep=0pt, minimum size=0.25cm},
		textnode/.style = {shape=circle, draw, inner sep=0pt, minimum size=0.4cm},
		dashedtextnode/.style = {shape=circle, draw=blue, dashed, inner sep=0pt, minimum size=0.4cm},
		box/.style = {rectangle, fill=gray!20, rounded corners, fill opacity=1, inner sep=1pt}]

        \node[node, minimum size=0.5cm, shape=star,star points=6] (T) at (-0.75,-0.75) {}; \node at ($(T) + (0,0.5)$) {$T$};
        \node[node, minimum size=0.5cm,,shape=star,star points=6] (F) at (0.75,-0.75) {}; \node at ($(F) + (0,0.5)$) {$F$};
        \node[node, minimum size=0.5cm,,shape=star,star points=6] (B) at (0,0) {}; \node at ($(B) + (0,0.5)$) {$B$};
    
        \path [-, black] (B) edge[bend left=0] node[above] {} (T);
        \path [-, black] (B) edge[bend left=0] node[above] {} (F);
        \path [-, black] (T) edge[bend left=0] node[above] {} (F);
    
    \end{tikzpicture}
}
        \caption{
            Constant Gadget for the reduction.
            The corresponding mappings are $f_{const, V}: \emptyset \mapsto \{B, F, T\}$ and $f_{const, E}: \emptyset \mapsto \{\{B,F\},\{B,T\},\{F,T\}\}$
        }
        \label{fig:Class:3Sat_3Coloring_Constant_Gadget}
    \end{figure}

    \begin{figure}[!ht]
        \centering
        \scalebox{1}{
	\begin{tikzpicture}[scale=1,
		node/.style = {shape=circle, draw, inner sep=0pt, minimum size=0.25cm},
		dashednode/.style = {shape=circle, draw=blue, dashed, inner sep=0pt, minimum size=0.25cm},
		textnode/.style = {shape=circle, draw, inner sep=0pt, minimum size=0.4cm},
		dashedtextnode/.style = {shape=circle, draw=blue, dashed, inner sep=0pt, minimum size=0.4cm},
		box/.style = {rectangle, fill=gray!20, rounded corners, fill opacity=1, inner sep=1pt}]
        \node[dashednode, minimum size=0.5cm, shape=star,star points=6] (B) at (-2,0) {}; \node at ($(B) + (0,0.5)$) {$B$};
        
        \node[node] (VT) at (-0.75,-0.75) {}; \node at ($(VT) + (0.33,0)$) {$\ell$};
        \node[node] (VF) at (-0.75,0.75) {}; \node at ($(VF) + (0.33,0)$) {$\overline{\ell}$};
    
        \path [-, black] (B) edge[bend left=0] node[above] {} (VT);
        \path [-, black] (B) edge[bend left=0] node[above] {} (VF);
        \path [-, black] (VT) edge[bend left=0] node[above] {} (VF);
    \end{tikzpicture}
}
        \caption{
            Variable Gadget representing literals $\ell, \overline{\ell} \in L$.
            The corresponding mappings are $f_{L, V}: (\ell, \overline \ell) \mapsto \{v_\ell, v_{\overline{\ell}}\}$ and $f_{L, E}: (\ell, \overline \ell) \mapsto \{\{v_\ell,v_{\overline{\ell}}\},\{v_\ell,B\},\{v_{\overline{\ell}},B\}\}$
        }
        \label{fig:Class:3Sat_3Coloring_Literal_Gadget}
    \end{figure}

    \begin{figure}[!ht]
        \centering
        \scalebox{1}{
    \begin{tikzpicture}[scale=1,
		node/.style = {shape=circle, draw, inner sep=0pt, minimum size=0.25cm},
		dashednode/.style = {shape=circle, draw=blue, dashed, inner sep=0pt, minimum size=0.25cm},
		textnode/.style = {shape=circle, draw, inner sep=0pt, minimum size=0.4cm},
		dashedtextnode/.style = {shape=circle, draw=blue, dashed, inner sep=0pt, minimum size=0.4cm},
		box/.style = {rectangle, fill=gray!20, rounded corners, fill opacity=1, inner sep=1pt}]

        \node[dashednode, minimum size=0.5cm, shape=star,star points=6] (F) at (5,-1) {}; \node at ($(F) + (0,-0.5)$) {F};
        \node[dashednode, minimum size=0.5cm, shape=star,star points=6] (B) at (5,0.5) {}; \node at ($(B) + (0,0.5)$) {B};
    
        \path [-, blue, dashed] (B) edge[bend left=0] node[above] {} (F);
    
        \node[dashednode] (x_1) at (0,1) {}; \node at ($(x_1) + (0,0.33)$) {$\ell_1$};
        \node[dashednode] (x_2) at (0,0) {}; \node at ($(x_2) + (0,0.33)$) {$\ell_2$};
        \node[dashednode] (x_3) at (0,-1) {}; \node at ($(x_3) + (0,0.33)$) {$\ell_3$};
        
        \node[node] (y_1) at (1,1) {}; \node at ($(y_1) + (0,0.33)$) {$c^1_1$};
        \node[node] (y_2) at (1,0) {}; \node at ($(y_2) + (0,-0.33)$) {$c^1_2$};
        \node[node] (y_4) at (2,0.5) {}; \node at ($(y_4) + (0,0.33)$) {$c^1_3$};
        
        \node[node] (y_3) at (3,-1) {}; \node at ($(y_3) + (0,-0.33)$) {$c^2_1$};
        \node[node] (y_5) at (3,0.5) {}; \node at ($(y_5) + (0,0.33)$) {$c^2_2$};
        \node[node] (y_6) at (4,-0.25) {}; \node at ($(y_6) + (0,0.33)$) {$c^2_3$};
        
        \path [-, black, thick] (x_1) edge[bend left=0] node[above] {} (y_1);
        \path [-, black, thick] (x_2) edge[bend left=0] node[above] {} (y_2);
        \path [-, black, thick] (x_3) edge[bend left=0] node[above] {} (y_3);
        
        \path [-, black, thick] (y_1) edge[bend left=0] node[above] {} (y_2);
        \path [-, black, thick] (y_1) edge[bend left=0] node[above] {} (y_4);
        \path [-, black, thick] (y_2) edge[bend left=0] node[above] {} (y_4);
        
        \path [-, black, thick] (y_4) edge[bend left=0] node[above] {} (y_5);
        
        \path [-, black, thick] (y_3) edge[bend left=0] node[above] {} (y_5);
        \path [-, black, thick] (y_3) edge[bend left=0] node[above] {} (y_6);
        \path [-, black, thick] (y_5) edge[bend left=0] node[above] {} (y_6);
        
        \path [-, black, thick] (y_6) edge[bend left=0] node[above] {} (F);
        \path [-, black, thick] (y_6) edge[bend left=0] node[above] {} (B);
    \end{tikzpicture}
}
        \caption{
            Clause Gadget for $c \in C$.
            The corresponding mappings are $f_{C, V}: c \mapsto \{c^1_1, c^1_2, c^1_3, c^2_1, c^2_2, c^2_3\}$ and $f_{C, E}: c \mapsto \{\{c^1_1, c^1_2\},\{c^1_1, c^1_3\},\{c^1_2, c^1_3\},\{c^2_1, c^2_2\},\{c^2_1, c^2_3\},\{c^2_2, c^2_3\},\\ \{\ell_1,c^1_1\},\{\ell_2,c^1_2\},\{\ell_3,c^2_1\},\{c^1_3,c^2_2\},\{c^2_3,B\},\{c^2_3,F\}\}$
        }
        \label{fig:Class:3Sat_3Coloring_Clause_Gadget}
    \end{figure}

    The constant gadget is a 3-clique, see \Cref{fig:Class:3Sat_3Coloring_Constant_Gadget}.
    W.l.o.g we assume that $T$ is always in the first set, $F$ is always in the second set, and $B$ is always in the third set of the partition.
    For the literals, the mapping $f_{L, V}$ maps a literal $\ell \in L$ to two vertices.
    The mapping $f_{L, E}$, on the other hand, maps a literal $\ell \in L$ to three edges connecting the two vertices $\ell$ and $\overline \ell$ and vertex $B$ of the constant gadget, which is generated by the constant mapping $f_{const}$, visualized in \Cref{fig:Class:3Sat_3Coloring_Literal_Gadget}.
    At last, we have the clause gadget. The mapping $f_{C, V}$ maps the clause to six vertices, which are depicted as circles in \Cref{fig:Class:3Sat_3Coloring_Clause_Gadget}. The mapping $f_{C, E}$ maps the clause to the edges as shown as solid edges in \Cref{fig:Class:3Sat_3Coloring_Clause_Gadget}. The dashed vertices are part of different literal gadgets and the vertices $F$ and $B$ and the three dashed edges are from the constant gadget.

    A vertex $v_{\ell_i}$ is assigned the color $T$ if and only if the variable $x_i$ is assigned true.
    One can verify that if one of the literal vertices $v_{\ell_1}$, $v_{\ell_2}$, and $v_{\ell_3}$ is assigned color $T$, the vertex $c^2_3$ is assigned color $T$.
    Accordingly this clause gadget does not violate the coloring constraint.
    If on the other hand, $v_{\ell_1}$, $v_{\ell_2}$, and $v_{\ell_3}$ are assigned color $F$, then $c^2_3$ has to be assigned $F$ violating the coloring constraint.
    
	Overall, all vertices and edges are either generated by the constant function or are attributable to exactly one literal or one clause of the \sat-instance.
	Furthermore, deleting a variable gadget or a clause gadget results in the correct reduction such that we have strong modularity.
	Thus, the reduction fulfills the universal gadget reduction properties.
    The solution size function includes all vertices in one of the partitions (the colors).
    Thus, $size_f(L,C) = 2|L|+6|C|+3$ because every variable introduces two vertices and every clause introduces 6 vertices.
    The 3 additional vertices result from the constant gadget.
\end{proof}

\begin{lemma}
    \tcol{} is strongly modular universe gadget reducible to \kcol{} such that the solution properties hold and a solution size function for this reduction exists.
\end{lemma}
\begin{proof}
	The graph $G=(V,E)$ for \kcol{} remains, but a vertex $v_{new}$ is added and connected to all existing vertices $V$.
	Thus, $v_{new}$ needs to have a different color than all existing vertices in $V$.
	This is a universe gadget reduction because the vertex $v_{new}$ is a constant gadget and every edge to $v_{new}$ is part of the vertex gadget of $v$ together with $v$ itself.
	The pre-image uniqueness and the modularity results from the one-to-two correspondence of vertex $v$ and the vertex gadget consisting of $v$ and the edge $\{v, v_{new}\}$.
	The solution size function needs to include the additional vertex $v_{new}$, thus $1$ is added.
\end{proof}

\begin{lemma}
    \kcol{} is strongly modular universe gadget reducible to \cc{} such that the solution properties hold and a solution size function for this reduction exists.
\end{lemma}
\begin{proof}
	This reduction is analogous to the reduction from \is{} to \cl{} due to the fact that a coloring of a graph is a partition into independent sets while a clique cover is a partitions into cliques.
\end{proof}
\section{Conclusion}\label{sec:conclusion}
We have defined Hamming distance recoverable robust problems with elemental uncertainty and applied this concept to various well-known problems in \NP.
Further, we have defined universe gadget reductions to build a framework for a large class of Hamming distance recoverable robust problems.
The complexity results are that the Hamming distance recoverable robust versions of \NP-complete problems remain \NP-complete if the scenarios are polynomially computable and that the \NP-complete problems are $\Sigma^p_3$-complete for $xor$-dependency scenarios and $\Gamma$-set scenarios if \sat{} is universe gadget reducible to them and a corresponding solution size function exists.
Furthermore, multi-stage problems with $m$ stages result in $\Sigma^p_{2m+1}$-completeness if the encoding of scenarios are $xor$-dependency scenarios or $\Gamma$-set scenarios.

Remaining interesting questions are whether there is a (light-weight) reduction framework for other adversial problems or robustness concepts, for example for interdiction problems or two-stage adjustable problems, to derive completeness for higher levels in the polynomial hierarchy than \NP.
Furthermore, it is of interest whether this concept is adaptable to problems with cost uncertainty and for other distance measures.
A more special question is, which succinct encodings also result in $\Sigma^p_3$-completeness or if there are succinct encodings which result in the \NP-completeness of the problem.

\newpage

\bibliography{bibliography,bib_recoverable_robust,bib_reductions}

\end{document}